\newcommand\overarc[1]{\ThisStyle{%
  \setbox0=\hbox{$\SavedStyle#1$}%
  \stackon[.5pt]{\SavedStyle#1}{%
  \rotatebox{90}{$\SavedStyle\scaleto{)}{.95\wd0}$}}}}
\definecolor{trueblue}{rgb}{0.0, 0.45, 0.81}
\newcommand{\dx}{\, {\rm d}x}
\newcommand{\e}{\varepsilon}
\newcommand{\eg}{{\it e.g.}, }
\newcommand{\ie}{{\it i.e.}, }
\theoremstyle{plain}
\newtheorem{theorem}{Theorem}[section]
\newtheorem{lemma}[theorem]{Lemma}
\newtheorem{proposition}[theorem]{Proposition}
\newenvironment{step}[1]{\underline{Step #1}.}{}
\theoremstyle{definition}
\theoremstyle{remark}
\newtheorem{remark}[theorem]{Remark}
\renewcommand{\tilde}{\widetilde}
\newcommand{\sm}{\setminus}
\newcommand{\cf}{{\it cf.}}
\DeclareMathOperator{\argmin}{argmin}
\DeclareMathOperator{\sign}{sign}
\renewcommand{\d}{ \mathrm{d}}
\newcommand{\Adm}[2]{\mathrm{Adm}_{#1,#2}}
\DeclareMathOperator{\curl}{curl}
\DeclareMathOperator{\conv}{conv}
\DeclareMathOperator{\dist}{dist}
\DeclareMathOperator{\supp}{supp}
\numberwithin{equation}{section}
\newcommand{\N}{\mathbb{N}}
\newcommand{\Z}{\mathbb{Z}}
\newcommand{\R}{\mathbb{R}}
\renewcommand{\S}{\mathbb{S}}
\renewcommand{\L}{\mathcal{L}}
\renewcommand{\H}{\mathcal{H}}
\newcommand{\K}{\mathcal{K}}
\newcommand{\D}{\mathrm{D}}
\newcommand{\dH}{\, {\rm d}\H^1}
\newcommand{\de}{\partial}
\newcommand{\T}{\mathcal{T}}
\newcommand{\SF}{\mathcal{SF}}
\newcommand{\tb}{\hat{e}}
\newcommand{\x}{{\times}}
\newcommand{\defas}{:=}
\newcommand{\wto}{\rightharpoonup}
\newcommand{\wsto}{\overset{*}{\wto}}
\newcommand{\mres}{\mathbin{\vrule height 1.6ex depth 0pt width 0.13ex\vrule height 0.13ex depth 0pt width 1.3ex}}
\newcommand{\loc}{\mathrm{loc}}
\newcommand{\A}[2]{A_{#1,#2}}
\begin{document}

\title[The antiferromagnetic XY model on the triangular lattice]{The antiferromagnetic XY model on the triangular lattice: topological singularities}

\author{Annika Bach}
\address[Annika Bach]{TU M\"unchen, Germany}
\email{annika.bach@ma.tum.de}

\author{Marco Cicalese}
\address[Marco Cicalese]{TU M\"unchen, Germany}
\email{cicalese@ma.tum.de}

\author{Leonard Kreutz}
\address[Leonard Kreutz]{WWU M\"unster, Germany}
\email{lkreutz@uni-muenster.de}

\author{Gianluca Orlando}
\address[Gianluca Orlando]{TU M\"unchen, Germany}
\email{orlando@ma.tum.de}

\keywords{$\Gamma$-convergence, Frustrated lattice systems, Topological singularities.}

%\date{\today}  

\begin{abstract}
We study the discrete-to-continuum variational limit of the antiferromagnetic XY model on the two-dimensional triangular lattice in the vortex regime. Within this regime, the spin system cannot overcome the energetic barrier of chirality transitions, hence one of the two chirality phases is prevalent. We find the order parameter that describes the vortex structure of the spin field in the majority chirality phase and we compute explicitly the $\Gamma$-limit of the scaled energy, showing that it concentrates on finitely many vortex-like singularities of the spin field.
\end{abstract}

\subjclass[2010]{49J45, 49M25, 82B20, 82D40, 35Q56.} 
%49J45  	Methods involving semicontinuity and convergence; relaxation
%49M25  	Discrete approximations in optimal control
%82B20  	Lattice systems (Ising, dimer, Potts, etc.) and systems on graphs arising in equilibrium statistical mechanics
%82D40  	Statistical mechanical studies of magnetic materials
%35Q56      Ginzburg-Landau equations
\maketitle

\setcounter{tocdepth}{1}
\tableofcontents

\maketitle

\section{Introduction}%\label{section:introduction}

Antiferromagnetic spin systems are magnetic lattice systems in which the exchange interaction between two spins favors anti-alignment. Such systems are said to be geometrically frustrated if, due to the geometry of the lattice, no spin configuration can simultaneously minimize all pairwise interactions. As a consequence of that, ground states of frustrated spin systems may exhibit nontrivial patterns and give rise to unconventional magnetic order, whose understanding has occupied the Statistical Physics and Condensed Matter communities in the last decades~\cite{Die, LJNL, MS}.

In this paper we are interested in the antiferromagnetic XY spin system on the triangular lattice (AFXY), a system that has attracted the attention of a large scientific community because of its relevance in understanding phase transition properties of frustrated spin models as those governing the physics of Josephson junctions, helimagnets and discotic liquid crystals (see for instance \cite{Ka98} and references therein). Our present contribution is undertaken within the framework of ``discrete-to-continuum variational analysis'' by means of $\Gamma$-convergence (\cf~\cite{DM,Bra}). It aims at the first mathematically rigorous derivation of the coarse grained energy of the AFXY system as the lattice spacing vanishes and the energy scaling allows the formation of finitely many spin vortices. This is a further step towards a complete understanding of the AFXY model, whose variational analysis has been initiated in~\cite{BacCicKreOrl} at a different scaling, which leads to interfacial-type energies, as we recall below. It is worth mentioning that interfacial energies often result from different frustration mechanisms in the variational analysis of spin systems, \eg those induced by the competition of ferromagnetic (favoring alignment) and antiferromagnetic interactions~\cite{AliBraCic, CicSol, BraCic, SciVal, CicForOrl, DanRun}.  

\par

The AFXY is a 2-dimensional nearest-neighbors antiferromagnetic planar spin model on the triangular lattice, \cf~\cite[Chapter~1]{Die}. We let $\e > 0$ be a small parameter and we consider the triangular lattice $\L_\e$ with spacing $\e$ (see below for the precise definition). To every spin field $u \colon \L_\e \to \S^1$ we associate the energy 
\begin{equation} \label{intro:AFXY energy}
    \sum_{\substack{\e \sigma, \e \sigma' \in \L_\e \\ |\sigma - \sigma'| = 1}}  u(\e \sigma) \cdot u(\e \sigma')   \, ,
\end{equation}
where $\cdot$ denotes the scalar product. This model is antiferromagnetic since the interaction energy between two neighboring spins is minimized by two opposite vectors. The geometry of the triangular lattice, though, frustrates the system. In fact, already for a single triangular plaquette of the lattice no spin configuration minimizes the energy of all the three interacting pairs, since such a configuration should be made of three pairwise opposite vectors. 
In order to find the ground states of the system, one can rearrange the indices of the sum in~\eqref{intro:AFXY energy} to have 
\begin{equation} \label{intro:recasting the energy}
    \sum_{T} \big( u(\e i) \cdot  u(\e j)  +   u(\e j) \cdot u(\e k)   +    u(\e k) \cdot  u(\e i)   \big) = \frac{1}{2} \sum_{T} \big( |u(\e i) + u(\e j) + u(\e k) |^2 - 3 \big)\,, 
\end{equation}
where the sum is now running over all triangular plaquettes $T$ with vertices $\e i, \e j, \e k \in \L_\e$. 
The formula above shows that in each triangle $T$ the energy is minimized (and is equal to $-\frac{3}{2}$) only when $u(\e i) + u(\e j) + u(\e k) = 0$, \ie when the vectors of a triple $(u(\e i),u(\e j),u(\e k))$ point at the vertices of an equilateral triangle. The set of all the ground states is then obtained from this configuration thanks to the symmetries of the system, namely the $\S^1$ and the $\Z_2$ symmetry. By the $\S^1$-symmetry, every rotation of a minimizing triple is minimizing, too. By the $\Z_2$-symmetry, triples obtained from a minimizing triple $(u(\e i),u(\e j),u(\e k))$ via a permutation of negative sign as $(u(\e i),u(\e k),u(\e j))$ are also minimizing.
 The symmetry analysis above shows the existence of two families of ground states that can be distinguished through the {\em chirality}, a scalar quantity (invariant under rotations) which quantifies the handedness of a certain spin structure. To define the chirality of a spin field $u$ in a triangle $T$, we need to fix an ordering of its vertices $\e i$, $\e j$, $\e k$. We write the triangular lattice $\L$ as $\mathcal{L}:= \{z_1 \tb_1 + z_2 \tb_2 : z_1,z_2 \in \mathbb{Z}\}$ with $\tb_1=(1,0)$, and $\tb_2=\frac{1}{2}(1,\sqrt{3})$. We introduce also $\tb_3\defas\tfrac{1}{2}(-1,\sqrt{3})$ as a further unit vector connecting points of $\L$ and define three pairwise disjoint sublattices of $\L$, denoted by $\L^1$, $\L^2$, and~$\L^3$, by
 \begin{equation*} 
 \L^1\defas\{z_1(\tb_1+\tb_2)+z_2(\tb_2+\tb_3) : z_1,z_2\in\Z\}\, ,\quad\L^2\defas\L^1+\tb_1\, ,\quad\L^3\defas\L^1+\tb_2\, .
 \end{equation*}
 We assume that $\e i \in \e \L^1$, $\e j \in \e \L^2$, $\e k \in \e \L^3$ and we set (see~\eqref{def:chirality} for the precise definition)
\begin{equation*}
    \chi(u,T) = \frac{2}{3\sqrt{3}}\big(u(\e i)\x u(\e j)+u(\e j)\x u(\e k)+u(\e k)\x u(\e i) \big)  \in [-1,1]  \, ,
\end{equation*}
where $\times$ is the cross product. We let $\chi(u) \in L^\infty(\R^2)$ denote the function equal to $\chi(u,T)$ on the interior of each triangular plaquette $T$.
 By \cite[Remark~2.2]{BacCicKreOrl}, the ground states are characterized as those spin configurations $u$ that satisfy either $\chi(u) \equiv 1$ or $\chi(u) \equiv - 1$. In order to describe more precisely our framework, let us fix $\Omega \subset \R^2$ open, bounded, and connected (if not we work on each connected component) and let us consider the energy~\eqref{intro:recasting the energy} restricted to $\Omega$, \ie computed only on those plaquettes of $\L_\e$ contained in $\Omega$. We refer the energy to its minimum by removing the energy of the ground states ($-\frac{3}{2}$ for each plaquette) and then divide by the number of lattice points in $\Omega$, which is of order $1/\e^{2}$, to obtain the energy per particle. Up to a multiplicative constant the latter reads as
\begin{equation*}
    E_\e(u,\Omega) = \sum_{T \subset \Omega} \e^2 |u(\e i) + u(\e j) + u(\e k) |^2.
\end{equation*}
In \cite{BacCicKreOrl} we analyze the energetic regime at which the two families of ground states coexist and the energy of the system concentrates at the interface between the two chiral phases $\{\chi=1\}$ and $\{\chi=-1\}$. This happens assuming that, as $\e \to 0$, sequences of spin fields $u_\e \colon \L_\e \to \S^1$ can deviate from ground states under the constraint $E_\e(u_\e,\Omega) \leq C \e$. In this case, the chiralities $\chi(u_\e)$ converge strongly in $L^1(\Omega)$ to some $\chi \in BV(\Omega;\{-1,1\})$. As a result, the continuum limit of $\frac{1}{\e}E_\e$ is a function of a partition of $\Omega$ into sets of finite perimeter (the phases) where the chirality is either $+1$ or $-1$. More precisely, it $\Gamma$-converges to an anisotropic perimeter of the phase boundary.
At this energy scaling, the asymptotic behavior of the AFXY model shares similarities with systems having finitely many phases, such as Ising systems~\cite{CafDLL05, AliBraCic, BraKre} or Potts systems~\cite{CicOrlRuf-I}. 

\par 

In this paper we are interested in a much lower energetic regime, which does not allow chirality phase transitions. We turn our attention to sequences of spin fields $u_\e$ that satisfy the bound 
\begin{equation} \label{eqintro:bound vortices}
    E_\e(u_\e,\Omega) \leq C \e^{2}|\log\e| \, .
\end{equation}
Since $\e^{2}|\log\e|\ll \e$, within this energy bound the spin system cannot overcome the energetic barrier of the chirality transition (of order $\e$), hence the chiralities $\chi(u_\e)$ converge strongly in $L^1$ to either $\chi\equiv1$ or $\chi\equiv-1$, see Lemma~\ref{lemma:chi converges to 1}. 
However, within a fixed chirality phase, there is enough energy for the spin field to create finitely many vortices whose complex structure is displayed in Figure~\ref{fig:vortex}. Within the framework of ``discrete-to-continuum variational analysis'', the emergence of vortices in spin systems has been first observed in the ferromagnetic XY model~\cite{AliCic}, a system which is driven by an energy with neighboring interactions $-  u(\e \sigma) \cdot u(\e \sigma')$. The latter model has been thoroughly investigated both on the square lattice~\cite{Pon, AliCic, AliCicPon, AliDLGarPon, BraCicSol, CicOrlRuf1, CicOrlRuf2} and on the triangular lattice~\cite{CanSeg,DL}. Independently of the geometry of the lattice, it has been proved that spin fields that deviate from the ground states by an amount of energy of order $\e^{2}|\log\e|$ may form finitely many vortex-like singularities (topological charges as those arising in the Ginzburg-Landau model~\cite{BBH, SanSer-book, San, Jer,JerSon2, AlbBalOrl, SanSer, AliPon}).

In order to describe the vortex structure in the AFXY spin system, we assume that the limit chirality is $\chi\equiv1$. 
 Then, to every spin field $u\colon \L_\e \to \S^1$ we associate the auxiliary field $v\colon \L_\e \to \S^1$ defined by
    \begin{equation} \label{defintro:from u to v}
        v(\e i) := u(\e i) \, , \quad v(\e j) := R[-\tfrac{2\pi}{3}]( u(\e j) ) \, , \quad v(\e k) := R[\tfrac{2\pi}{3}]( u(\e k) ) \, ,
    \end{equation}
    for $\e i \in \L^1_\e$, $\e j \in \L^2_\e$, $\e k \in \L^3_\e$, where $R[\alpha](\cdot)$ denotes the counterclockwise rotation by $\alpha$. The operation above transforms a ground state with chirality $1$ into a set of three parallel vectors, see Figure~\ref{fig:from u to v}. 
    
    \begin{figure}[H]
        \includegraphics{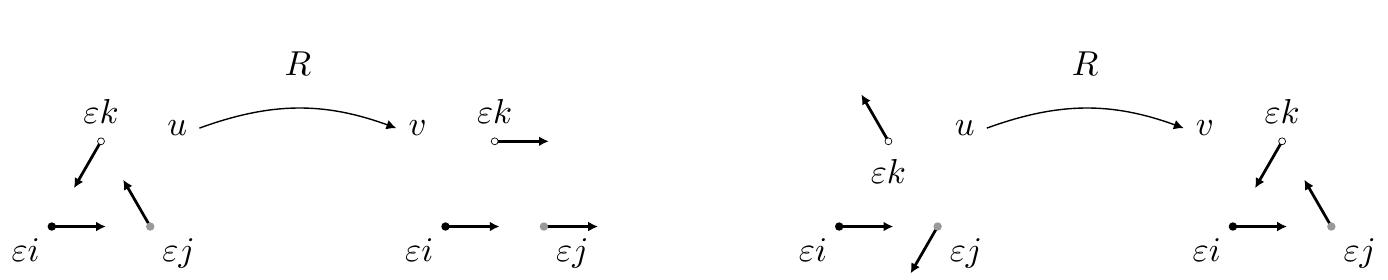}
        \caption{On the left: a ground state $u$ with chirality 1 is transformed into an auxiliary spin field $v$ given by parallel vectors. On the right: a ground state with chirality $-1$ is transformed in an auxiliary spin field $v$ with a nonzero XY-energy.}
        \label{fig:from u to v}
    \end{figure}

    The auxiliary variable $v$ introduced above plays a fundamental role in identifying the vortex structure in the AFXY spin system. In the first instance, this is suggested by the asymptotic behavior of $E_\e$ at the bulk scaling, \ie when assuming that sequences of spin fields $u_\e$ deviate from ground states satisfying the stricter energy constraint $E_\e(u_\e,\Omega) \leq C \e^2$. Under these assumptions we have, {\em a fortiori}, that the chiralities $\chi(u_\e)$ converge strongly in $L^1$ to either $\chi\equiv1$ or $\chi\equiv-1$. We work in the former case and we associate the auxiliary spin field $v_\e$  to every $u_\e$  as in~\eqref{defintro:from u to v}. In Theorem~\ref{thm:bulk} we prove that, under the previous assumptions, the piecewise affine interpolations of~$v_\e$ converge strongly in $L^2$ to a limit map $v \in H^1(\Omega;\S^1)$ and 
    \begin{equation} \label{eqintro:bulk}
        \frac{1}{\e^2} E_\e(u_\e,\Omega) \to \sqrt{3} \int_\Omega \! |\nabla v|^2 \, \d x \, , \quad \frac{1}{\e^2} XY_\e(v_\e,\Omega) \to \sqrt{3} \int_\Omega \! |\nabla v|^2 \, \d x \, ,
    \end{equation}
    in the sense of $\Gamma$-convergence, where $XY_\e(v,T) = \frac{1}{2} \e^2 (|v(\e i) - v(\e j)|^2 + |v(\e j) - v(\e k)|^2 + |v(\e k) - v(\e i)|^2)$ is the XY-energy of~$v$ in a triangle $T$. The proof relies on the relation (\cf~Lemma~\ref{lemma:bounds with XY} for the precise statement)
    \begin{equation*} %\label{eqintro:relation E and XY}
        E_\e(u,T) \sim XY_\e(v,T) \quad \text{if} \quad \chi(u,T) \sim 1 \, ,
    \end{equation*}
    and the fact that, in the bulk scaling, the regions of $\Omega$ where the chirality is far from 1 concentrate around finitely many points, negligible for the limit energy. The asymptotic formula~\eqref{eqintro:bulk} and the known results for the XY-energy (see the discussion above) suggest that the limit of~$\frac{1}{\e^2|\log \e|} E_\e$ might detect a vortex structure in $u$ by means of the discrete vorticity measure $\mu_v$ of the auxiliary variable~$v$ (see~\eqref{def:vorticity} for the precise definition), as in Figure~\ref{fig:vortex}. However, the rigorous proof of the latter statement cannot result from a mere comparison between $E_\e$ and~$XY_\e$ under assumption~\eqref{eqintro:bound vortices}, as the next argument shows.
    
    \begin{figure}[H]
        \includegraphics{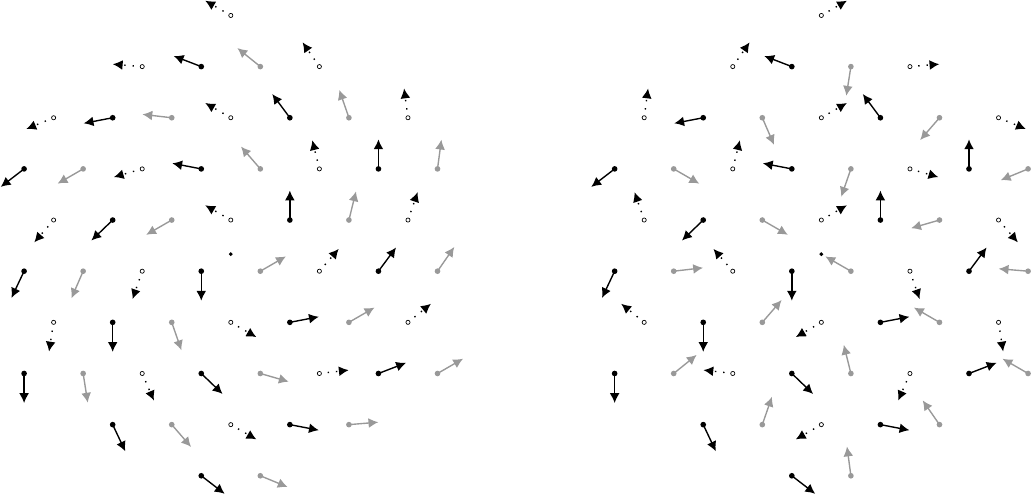}
        \caption{On the left: a vortex for the auxiliary spin field $v$. On the right: the corresponding spin field $u$.}
        
        \label{fig:vortex}
        \end{figure}
        
In the literature concerning the variational analysis of the XY model~\cite{AliCic,AliCicPon,DLGarPon, DL,CanSeg,BadCicDLPon}, the formation of finitely many vortex-like singularities in the limit as~$\e \to 0$ of a sequence of spin fields~$v_\e$ is proven if $XY_\e(v_\e,\Omega) \leq C \e^2 |\log \e|$. However, such a bound does not follow from our working assumption $E_\e(u_\e,\Omega) \leq C \e^2 |\log \e|$ in~\eqref{eqintro:bound vortices} for the corresponding spin field $u_\e$. This is due to the fact that an inequality $XY_\e(v,T) \leq C E_\e(u,T)$ does not hold true. Indeed, if $u$ is a ground state with $\chi(u,T) = -1$, then $E_\e(u,T) = 0$, but $XY_\e(v,T) > 0$, see Figure~\ref{fig:from u to v}. A fine estimate on the measure of the set $\{\chi(u_\e) \sim -1 \}$ allows us in Lemma~\ref{lemma:XY is log2} to obtain the sharp bound
\begin{equation}\label{eq:intro:boundXY}
    XY_\e(v_\e,\Omega) \leq C \e^2 |\log \e|^2.
\end{equation}
This weaker bound for the XY-model is, in general, not sufficient for detecting finitely many vortex-like singularities in the limit as~$\e \to 0$ and usually requires a different type of analysis~\cite{JerSon,SanSer,SanSer-book,AliCicPon,GarLeoPon}, related to the possible diffusion of the scaled measures $\frac{\mu_{v_\e}}{|\log\e|}$.
Nevertheless, due to the special structure of $\mu_{v_\e}$ in our setting, this phenomenon is ruled out and we are still able to prove that in the limit as $\e\to0$ the vorticity measures concentrate on finitely many points (the convergence is made rigorous in the flat topology, see~\eqref{def:flat} for the precise definition). This is contained in the main  theorem of the paper stated below.

\begin{theorem} \label{thm:main}
    Assume that $\Omega$ is an open, bounded, and connected set. The following results hold true:
    \begin{enumerate}[label=\roman*)]
        \item {\em (Compactness)} Let $u_\e\colon\L_\e\to\S^1$ be such that $E_\e(u_\e,\Omega) \leq C \e^2 |\log \e|$.  Then, up to a subsequence, either $\chi(u_\e) \to 1$ or $\chi(u_\e) \to -1$ in $L^1(\Omega)$. Assume that $\chi(u_\e) \to 1$, let $v_\e\colon\L_\e\to\S^1$ be the auxiliary spin field defined as in~\eqref{defintro:from u to v}.  Then there exists $\mu = \sum_{h = 1}^N d_h \delta_{x_h}$ with $d_h \in \Z$ and $x_h \in \Omega$ such that, up to a subsequence, $\|\mu_{v_\e} - \mu\|_{\mathrm{flat},\Omega^\prime} \to 0$ for all $\Omega^\prime\subset\subset \Omega$.
        \item  {\em ($\liminf$ inequality)} Let $u_\e\colon\L_\e\to\S^1$ be such that $\chi(u_\e) \to 1$ in $L^1(\Omega)$, let $v_\e\colon\L_\e\to\S^1$ be the auxiliary spin field   defined as in~\eqref{defintro:from u to v}. Let $\mu = \sum_{h = 1}^N d_h \delta_{x_h}$ with $d_h \in \Z$, $x_h \in \Omega$ and assume that $\|\mu_{v_\e} - \mu\|_{\mathrm{flat},\Omega^\prime} \to 0$ for all $\Omega^\prime\subset\subset \Omega$. Then 
        \begin{equation*}
            2\sqrt{3} \pi |\mu|(\Omega) \leq \liminf_{\e \to 0} \frac{1}{\e^2|\log \e|} E_\e(u_\e, \Omega) \, .
        \end{equation*}
        \item {\em ($\limsup$ inequality)} Let $\mu = \sum_{h = 1}^N d_h \delta_{x_h}$ with $d_h \in \Z$ and $x_h \in \Omega$. Then there exist $u_\e\colon\L_\e\to\S^1$ such that $\|\mu_{v_\e} - \mu\|_{\mathrm{flat},\Omega} \to 0$ and 
        \begin{equation*}
            \limsup_{\e \to 0} \frac{1}{\e^2 |\log \e|} E_\e(u_\e, \Omega)  \leq 2\sqrt{3}\pi |\mu|(\Omega)  \, ,
        \end{equation*}
        where $v_\e\colon\L_\e\to\S^1$ is the auxiliary spin field defined as in~\eqref{defintro:from u to v}.
    \end{enumerate}
\end{theorem}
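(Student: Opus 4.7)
The plan is to reduce all three statements to the corresponding ones for the auxiliary XY-field $v_\e$ by means of the pointwise comparison between $E_\e(u_\e,T)$ and $XY_\e(v_\e,T)$ of Lemma~\ref{lemma:bounds with XY}, paying particular attention to the plaquettes on which $\chi(u_\e,T)$ is close to $-1$: these carry essentially no $E_\e$-energy but may carry both $XY_\e$-energy and spurious vorticity for $\mu_{v_\e}$.

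For part~(i), I would first apply Lemma~\ref{lemma:chi converges to 1} to obtain, up to a subsequence and up to the $\Z_2$-symmetry, that $\chi(u_\e)\to 1$ in $L^1(\Omega)$; Lemma~\ref{lemma:XY is log2} then gives the a priori bound~\eqref{eq:intro:boundXY}. Next I split the plaquettes of $\L_\e$ into $\Omega_\e^+$, where $\chi(u_\e,T)$ is close to $1$, and its complement $\Omega_\e^-$, whose Lebesgue measure tends to zero because $\chi(u_\e)\to 1$ in $L^1$. On $\Omega_\e^+$ Lemma~\ref{lemma:bounds with XY} controls $XY_\e(v_\e,\cdot)$ by a constant multiple of $E_\e(u_\e,\cdot)$, so a localised discrete ball construction there produces a subsequence for which $\mu_{v_\e}\mres\Omega_\e^+$ has uniformly bounded total variation and concentrates on finitely many points. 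On the plaquettes where $\chi(u_\e,T)$ is close to $-1$, the transformation~\eqref{defintro:from u to v} forces the values of $v_\e$ at the three vertices to wrap by $\pm2\pi$, producing atoms of $\mu_{v_\e}$ of opposite sign on adjacent up- and down-pointing triangles; together with $|\Omega_\e^-|\to 0$ this allows one to pair up opposite atoms over arbitrarily small scales and to conclude that $\mu_{v_\e}\mres\Omega_\e^-$ has vanishing flat norm on every $\Omega'\wcont\Omega$. A diagonal argument then yields the limit measure $\mu=\sum_{h=1}^N d_h\delta_{x_h}$. This is the main technical obstacle, since the a priori bound~\eqref{eq:intro:boundXY} is strictly weaker than the $C\e^2|\log\e|$ usually required in the Jerrard--Sandier--Serfaty compactness theory, and in general allows a diffuse limit of $\mu_{v_\e}/|\log\e|$.

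For part~(ii), fix $\rho>0$ so small that the balls $B_\rho(x_h)$ are pairwise disjoint and compactly contained in $\Omega$. Since $|\Omega_\e^-|\to 0$, for $\e$ small one may work on annuli $B_\rho(x_h)\setminus B_{r_\e}(x_h)$, with $\e\ll r_\e\to 0$, lying almost entirely in $\Omega_\e^+$; Lemma~\ref{lemma:bounds with XY} then gives $E_\e(u_\e,\cdot)\geq (1-o(1))XY_\e(v_\e,\cdot)$ on them. A discrete Jerrard-type ball construction for $v_\e$ provides $XY_\e(v_\e,B_\rho(x_h))\geq 2\sqrt{3}\pi|d_h|(1-o(1))\e^2|\log\e|$, where the factor $\sqrt{3}$ comes from the triangular-lattice quadrature constant already appearing in the bulk limit~\eqref{eqintro:bulk}. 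Summing over $h$ gives the claimed lower bound.

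For part~(iii), I would choose a smooth map $v\in H^1_{\loc}(\Omega\setminus\{x_1,\dots,x_N\};\S^1)$ with the canonical harmonic vortex profile of degree $d_h$ near each $x_h$, sample it on $\L_\e$ to obtain $v_\e$, and then define $u_\e$ by inverting~\eqref{defintro:from u to v}, namely $u_\e=v_\e$ on $\e\L^1$, $u_\e=R[\tfrac{2\pi}{3}](v_\e)$ on $\e\L^2$, and $u_\e=R[-\tfrac{2\pi}{3}](v_\e)$ on $\e\L^3$. By construction $\chi(u_\e,T)\to 1$ on every plaquette, so Lemma~\ref{lemma:bounds with XY} yields $E_\e(u_\e,\cdot)\leq (1+o(1))XY_\e(v_\e,\cdot)$; the standard recovery construction for the XY model on the triangular lattice then gives $\frac{1}{\e^2|\log\e|}XY_\e(v_\e,\Omega)\to 2\sqrt{3}\pi|\mu|(\Omega)$, while flat convergence $\mu_{v_\e}\to\mu$ is built into the ansatz.
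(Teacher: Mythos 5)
Your part (iii) outline is essentially correct and matches the paper's approach: sample the canonical vortex profile, invert~\eqref{defintro:from u to v}, and use Lemma~\ref{lemma:bounds with XY} to pass from $XY_\e(v_\e,\cdot)$ to $E_\e(u_\e,\cdot)$. Parts (i) and (ii), however, contain a genuine gap that is in fact the central difficulty of the whole paper.

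In part (i) you propose a localized ball construction on $\Omega_\e^+$, and for $\Omega_\e^-$ you assert that the dipoles ``can be paired up over arbitrarily small scales'' so that $\mu_{v_\e}\mres\Omega_\e^-$ has vanishing flat norm. As stated this is a heuristic, not a proof: one cannot in general split the lattice into two regions and run two independent ball constructions, because a vortex and an anti-vortex on the interface between $\Omega_\e^+$ and $\Omega_\e^-$ are not cleanly separable, and the pairing argument does not quantify the error in the flat norm. The paper instead performs a \emph{single} ball construction started from all $|\log\e|^2$ atoms of $\mu_{v_\e}$, and the decisive step is the merging-time estimate~\eqref{ineq:Tmergbound}: since each ball that participates in a merge carries energy at least $c\e^2$ (proved via a chain of triangles connecting a plaquette with $\chi\sim -1$ to one with $\chi\sim 1$, see~\eqref{ineq:energyB}), and since $E_\e(u_\e,\Omega)\leq C\e^2|\log\e|$, there can be only $O(|\log\e|)$ merging times. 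Combined with the precisely tuned expansion rates $t_{\e,p}^k$ of~\eqref{def:time}, this gives the equiboundedness of $|\mu_{\e,p}|(\Omega')$ in~\eqref{ineq:Muepspmassbound}. This quantitative counting of mergings is the mechanism that compensates for the weak a priori bound~\eqref{eq:intro:boundXY}, and it is absent from your proposal.

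In part (ii) the gap is more explicit: you claim that ``a discrete Jerrard-type ball construction for $v_\e$ provides $XY_\e(v_\e,B_\rho(x_h))\geq 2\sqrt3\pi|d_h|(1-o(1))\e^2|\log\e|$''. This is precisely the step that does \emph{not} hold in general, because the hypothesis of that construction---$XY_\e(v_\e,\Omega)\leq C\e^2|\log\e|$---fails here: Lemma~\ref{lemma:XY is log2} only gives $XY_\e(v_\e,\Omega')\leq C\e^2|\log\e|^2$, and your own remark in part (i) acknowledges that this weaker bound permits diffuse limits. The paper circumvents this by first modifying the spin field inside the zero-degree balls via the extension Lemma~\ref{lemma:extension} (Lemma~\ref{lemma:replace}), so as to annihilate the spurious short dipoles without changing the $\liminf$ of $\frac{1}{\e^2|\log\e|}E_\e$, and then by a scale-separation argument (Lemma~\ref{lem:liminf with exponents} together with Proposition~\ref{prop:convergence of minima}) that replaces the missing Jerrard lower bound with a sum of bulk-scale estimates on dyadic annuli where the degree is already fixed. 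Neither of these two ingredients appears in your sketch, and without them the lower bound cannot be closed.
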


We now illustrate the main ideas of the proof. To obtain {\em i)}, from~\eqref{eq:intro:boundXY} we first deduce that $|\mu_{v_\e}|(\Omega')\leq C|\log\e|^2$. However, we observe that only $|\log\e|$ many vortices can occur in the region where $\chi(u_\e)\sim 1$, which is consistent with the concentration of the energy on finitely many points. Instead, $|\log\e|^2$ many vortices only appear as $\e$-close dipoles in the region where $\chi(u_\e)\sim -1$ (see Figure~\ref{fig:dipoles}). Those can be shown to be asymptotically irrelevant using a variant of the ball construction~\cite{Jer, San}. 

\begin{figure}[H]
    \includegraphics{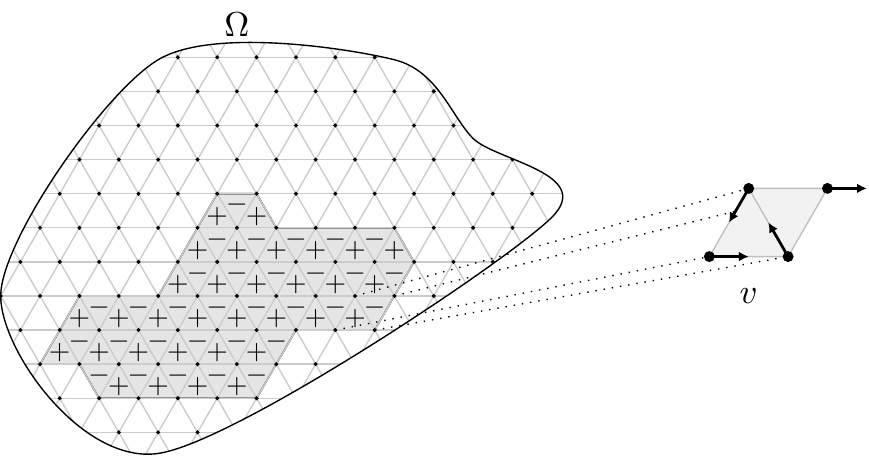}
        \caption{In the grey region the spin field $u_\e$ has chirality $-1$. There, the auxiliary variable $v_\e$ given by~\eqref{defintro:from u to v} may have $|\log \e|^2$ short dipoles, \eg as those depicted on the right. }
        \label{fig:dipoles}
\end{figure}

The ball construction is also the main tool for proving the asymptotic lower bound {\em ii)}, which is the most demanding part of the proof. Here, the choice of a precise expansion rate in the ball construction (see also~\cite{DLGarPon}) allows us to identify finitely many annuli, in which the energy concentrates. The radii of these annuli converge to zero at a much slower scale than $\e$, thus making possible to estimate the energy by exploiting the bulk scaling limit~\eqref{eqintro:bulk} (see also~\cite{AliBraCicDLPia} for a similar argument in the context of homogenization). A crucial step in the proof consists in the modification of the spin field in a diverging number of balls, where short dipoles annihilate. It is worth pointing out that in the discrete setting this is an additional source of difficulties, which is solved by proving the extension Lemma~\ref{lemma:extension}.

\vspace{.5cm}

\section{Notation and preliminary results}%\label{sec:NPR}

\subsection*{Basic notation}

We let $\d_{\S^1}(a,b)$ denote the geodesic distance on~$\S^1$ between $a, b \in \S^1$ given by $ \d_{\S^1}(a,b) := 2 \arcsin(\frac{1}{2} |a -b|)$. Note that 
\begin{equation*} %\label{eq:geo}
    |a - b| \leq \d_{\S^1}(a,b) \leq \frac{\pi}{2} |a-b| \, .
\end{equation*}

For every $0<r<R$ and $x_0 \in \mathbb{R}^2$, we define the annulus $\A{r}{R}(x_0):= B_R(x_0) \setminus \overline B_r(x_0)$. In the case $x_0=0$ we write $\A{r}{R}$. 

We let $\mathcal{A}(\mathbb{R}^2)$ denote the collection of open sets of $\R^2$. The Lebesgue measure of a measurable set $A$ will be denoted by $|A|$, while $\H^1$ stands for the 1-dimensional Hausdorff measure.

\subsection*{Triangular lattice} 
  
Here we set the notation for the triangular lattice $\L$. It is given~by
\begin{align*}
\mathcal{L}:= \{z_1 \tb_1 + z_2 \tb_2 : z_1,z_2 \in \mathbb{Z}\}\, ,
\end{align*}
with $\tb_1=(1,0)$, and $\tb_2=\frac{1}{2}(1,\sqrt{3})$. For later use, we find it convenient here to introduce $\tb_3\defas\tfrac{1}{2}(-1,\sqrt{3})$ and to define three pairwise disjoint sublattices of $\L$, denoted by $\L^1$, $\L^2$, and $\L^3$, by
\begin{equation*}  %\label{def:sublattices}
\L^1\defas\{z_1(\tb_1+\tb_2)+z_2(\tb_2+\tb_3) : z_1,z_2\in\Z\}\, ,\quad\L^2\defas\L^1+\tb_1\, ,\quad\L^3\defas\L^1+\tb_2\, .
\end{equation*}
Eventually, we define the family of triangles subordinated to the lattice $\L$ by setting
\begin{equation*}
\T(\R^2)\defas\big\{T=\conv\{i,j,k\} : i,j,k\in\L,\ |i-j|=|j-k|=|k-i|=1\big\}\, ,
\end{equation*}
where $\conv\{i,j,k\}$ denotes the closed convex hull of $i,j,k$. 

For $\e>0$, we consider rescaled versions of $\L$ and $\T(\R^2)$  given by $\L_\e\defas\e\L$ and $\T_\e(\R^2)\defas\e\T(\R^2)$. With this notation every $T\in\T_\e(\R^2)$ has vertices $\e i,\e j,\e k\in\L_\e$. The same notation applies to the sublattices, namely $\L_\e^\alpha\defas\e\L^\alpha$ for $\alpha\in\{1,2,3\}$. Given a set $A\subset\R^2$ we let $\T_\e(A)\defas\{T\in\T_\e(\R^2)\colon T\subset A\}$ denote the subfamily of triangles contained in~$A$. Eventually, we introduce the set of admissible configurations as the set of all \textit{spin fields}
\begin{equation*} 
\SF_\e\defas\{u:\L_\e\to\S^1\}\, .
\end{equation*}
In the case $\e=1$ we set $\SF\defas \SF_1$.

\subsection*{The antiferromagnetic \texorpdfstring{$XY$}{} model}
For every $u \in \SF_\e$ and $T = \conv\{\e i, \e j, \e k\} \in\T_\e(\R^2)$ we set
\begin{equation*}
E_\e(u,T)\defas\e^2|u(\e i)+u(\e j)+u(\e k)|^2,
\end{equation*}
and we extend the energy to any set $A\subset\R^2$ by setting
\begin{equation*} 
E_\e(u,A)\defas\sum_{T\in\T_\e(A)}E_\e(u,T)\,.
\end{equation*}

\subsection*{Chirality} Given $u \in \mathcal{SF}_\e$  and $T=\mathrm{conv}\{\e i,\e j,\e k\} \in \mathcal{T}_\e(\R^2)$ with $\e i \in \mathcal{L}_\e^1, \e j \in \mathcal{L}_\e^2, \e k \in \mathcal{L}_\e^3$ we set 
\begin{equation}\label{def:chirality}
\chi(u,T) := \frac{2}{3\sqrt{3}}(u(\e i) \times u(\e j) + u(\e j)\times u(\e k) + u(\e k) \times u(\e i))\,.
\end{equation} 
Moreover, we define $\chi(u) \colon \Omega \to \R$ almost everywhere by setting $\chi(u)(x) := \chi(u,T)$ if $x \in \mathrm{int}\, T$. Given $u \in \mathcal{SF}_\e$ it is convenient to rewrite $\chi(u,T)$ in terms of the angular lift of $u$. More precisely, let $\theta(\e i),\theta(\e j),\theta(\e k) $ be such that $u(x) = \exp(\iota \theta(x))$, $x \in \{\e i,\e j,\e k\}$. Then
    \begin{equation*} %\label{eq:chiangular}
        \chi(u, T) =\frac{2}{3\sqrt{3} } \big( \sin(\theta(\e j)-\theta(\e i)) + \sin(\theta(\e k)-\theta(\e j)) + \sin(\theta(\e i)-\theta(\e k)) \big) \, .
    \end{equation*}

   \begin{remark}\label{rmk:chiralityenergy} Let $T=\conv\{\e i,\e j,\e k\} \in \mathcal{T}_\e(\R^2)$ with $\e i \in \mathcal{L}_\e^1, \e j \in \mathcal{L}_\e^2, \e k \in \mathcal{L}_\e^3$. Given $u \in \mathcal{SF}_\e$ one can show that if $E_\e(u,T)= 0$, then $\chi(u,T)\in \{-1,1\}$. Therefore, a continuity argument shows that for every $\eta \in (0,1)$ there exists $C_\eta>0$ such that for every $u \in \mathcal{SF}_\e$  the following implication holds:
\begin{equation*}
\chi(u,T) \in (-1+\eta,1-\eta) \implies E_\e(u,T) \geq \e^2C_\eta\,.
\end{equation*}   
   \end{remark}

Given a triangle $T \in \T_\e(\R^2)$ we introduce the class $\mathcal{N}_\e(T)$ of its neighboring triangles, namely those triangles in $\T_\e(\R^2)$ that share a side with $T$. More precisely, we define
\begin{equation*}
    \mathcal{N}_\e(T):=\{T' \in \T_\e(\R^2) : \H^1(T \cap T')= \e \} \,.
\end{equation*}

\begin{lemma} \label{lemma:energy of two triangles} Let $u \in \mathcal{SF}_\e$ and let $\eta \in (0,1]$. Let $T \in \mathcal{T}_\e(\mathbb{R}^2)$ and $T' \in \mathcal{N}_\e(T)$ and assume that  $\chi(u,T) \leq 1-\eta$ and $\chi(u,T')\geq 1-\eta$. Then there exists a constant $C_\eta >0$ such that  $E_\e(u,T\cup T') \geq \e^2 C_\eta$.
\end{lemma}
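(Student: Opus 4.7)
The plan is a compactness argument on $(\S^1)^4$. Up to relabeling, write $T = \conv\{\e i, \e j, \e k\}$ and $T' = \conv\{\e i', \e j, \e k\}$, where $\e j, \e k$ are the endpoints of the shared edge and $\e i, \e i'$ are the opposite vertices. A preliminary observation is that $\e i$ and $\e i'$ lie in the same sublattice: since $\L^1,\L^2,\L^3$ partition $\L$ and each sublattice has minimal distance $\sqrt{3}>1$, the three vertices of every triangle in $\T(\R^2)$ occupy three distinct sublattices, one each; hence if the shared edge has endpoints in $\L^\alpha$ and $\L^\beta$, the two opposite vertices both lie in the remaining sublattice $\L^\gamma$. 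Consequently, the chirality formulas for $T$ and $T'$ involve the two shared spin values $u(\e j), u(\e k)$ in identical positions and differ only through the opposite-vertex spin. In particular, whenever $u(\e i) = u(\e i')$ one has $\chi(u,T) = \chi(u,T')$.

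I would then rewrite
\begin{equation*}
    E_\e(u,T \cup T')/\e^2 = |u(\e i) + u(\e j) + u(\e k)|^2 + |u(\e i') + u(\e j) + u(\e k)|^2
\end{equation*}
and view this, together with the two chiralities, as continuous functions of the quadruple $(u(\e i), u(\e i'), u(\e j), u(\e k)) \in (\S^1)^4$. Setting $f(x,x',y,z) := |x+y+z|^2 + |x'+y+z|^2$, consider the compact set
\begin{equation*}
    K_\eta := \bigl\{(x,x',y,z) \in (\S^1)^4 : \chi(u,T) \leq 1-\eta \text{ and } \chi(u,T') \geq 1-\eta\bigr\} \, ,
\end{equation*}
where the chiralities are interpreted as functions of the quadruple. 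By continuity and compactness, $f$ attains a minimum $m_\eta \geq 0$ on $K_\eta$, and the lemma will follow on setting $C_\eta := m_\eta$ provided $m_\eta > 0$.

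To exclude $m_\eta = 0$, argue by contradiction: at a putative minimizer $(x^*,x'^*,y^*,z^*)$ with $f = 0$ one has $x^* + y^* + z^* = 0 = x'^* + y^* + z^*$, whence $x^* = x'^*$. Combined with the sublattice observation this forces $\chi(u,T) = \chi(u,T')$ at the minimizer. Moreover, as already noted in Remark~\ref{rmk:chiralityenergy}, the ground-state identity $x + y + z = 0$ entails $\chi \in \{-1,+1\}$. The constraints defining $K_\eta$ then give either $\chi = 1 \leq 1-\eta$ or $\chi = -1 \geq 1-\eta$, both impossible since $\eta \in (0,1]$. Hence $m_\eta > 0$. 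I do not anticipate any real obstacle; the only point requiring care is the sublattice combinatorics that guarantees the chirality formulas for $T$ and $T'$ are computed via compatible orderings, so that equality of the opposite-vertex spins transfers to equality of chiralities.
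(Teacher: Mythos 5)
Your proposal is correct and follows essentially the same strategy as the paper: a compactness argument over spin configurations, combined with the ground-state characterization that $E_\e(u,T)=0$ forces $\chi(u,T)\in\{-1,1\}$ (Remark~\ref{rmk:chiralityenergy}), and the key observation that the opposite vertices $\e i,\e i'$ lie in the same sublattice, so the chirality formulas for $T$ and $T'$ are computed with identical index orderings and therefore $u(\e i)=u(\e i')$ implies $\chi(u,T)=\chi(u,T')$. The only cosmetic difference is that the paper first uses the convexity estimate $E_\e(u,T\cup T')\geq\tfrac{\e^2}{2}|u(\e i)-u(\e i')|^2$ and then splits into the cases $u(\e i)\neq u(\e i')$ and $u(\e i)=u(\e i')$ before invoking Remark~\ref{rmk:chiralityenergy}, whereas you reach the contradiction directly at a putative zero of the minimum; both organizations are sound, and your explicit account of the sublattice combinatorics is a point the paper treats implicitly.
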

\begin{proof} Without loss of generality let $T= \mathrm{conv}\{\e i, \e j, \e k\}$ and $T' =  \mathrm{conv}\{\e i', \e j, \e k\}$ with $\e i, \e i'\in \mathcal{L}_\e^1$, $\e j \in \mathcal{L}_\e^2$, and $\e k \in \mathcal{L}_\e^3$. Let $u^\eta \in \mathcal{SF}_\e$ be such that 
\begin{equation*}
E_\e(u^\eta,T\cup T') = \min \big\{ E_\e(u,T\cup T') \colon u \in \mathcal{SF}_\e \, , \ \chi(u,T) \leq 1-\eta \, , \ \chi(u,T') \geq 1-\eta \big\}\,.
\end{equation*}
By a scaling argument $u^\eta$ is independent of $\e$. Moreover, since 
\begin{equation*}
\begin{split}
& E_\e(u^\eta,T\cup T') = \e^2\big(|u^\eta(\e i) + u^\eta(\e j) + u^\eta(\e k)|^2+ |u^\eta(\e i') + u^\eta(\e j) + u^\eta(\e k)|^2 \big) \\& \quad  \geq \frac{\e^2}{2} |u^\eta(\e i) + u^\eta(\e j) + u^\eta(\e k)-(u^\eta(\e i') + u^\eta(\e j) + u^\eta(\e k))|^2 = \frac{\e^2}{2} |u^\eta(\e i)-u^\eta(\e i')|^2,
\end{split}
\end{equation*}
we have that
\begin{equation*}
E_\e(u^\eta,T\cup T') \geq \frac{\e^2}{2} \max \Big\{|u^\eta(\e i)-u^\eta(\e i')|^2, |u^\eta(\e i) + u^\eta(\e j) + u^\eta(\e k)|^2\Big\}=:\e^2 C_\eta\,.
\end{equation*}
Now either $u^\eta(\e i) \neq u^\eta(\e i')$ in which case it is clear that $C_\eta >0$. On the other hand, if $u^\eta(\e i) = u^\eta(\e i')$, then $\chi(u^\eta,T)= \chi(u^\eta,T') =1-\eta$. Then, due to Remark~\ref{rmk:chiralityenergy}, there exists $C'_\eta >0$ such that $\e^2C_\eta\geq E_\e(u^\eta,T) \geq \e^2C'_\eta$. Thus, $C_\eta>0$, which concludes the proof. 
\end{proof}

\begin{remark} \label{rmk:chirality angles}
    Let us consider the function
    \begin{equation*}
        \chi(\theta_1, \theta_2) :=\frac{2}{3\sqrt{3} } \big( \sin(\theta_1) + \sin(\theta_2 - \theta_1) - \sin(\theta_2) \big) \, .
    \end{equation*}
    For every $\eta'>0$ there exists $\eta \in (0,1)$ such that, if $\theta_1, \theta_2 \in [-\pi,\pi]$ satisfy $\chi(\theta_1,\theta_2) > 1-\eta$, then $|\theta_1 - \frac{2\pi}{3}| < \eta'$ and $|\theta_2 + \frac{2\pi}{3}| < \eta'$. This follows from a continuity argument since the global maximum~1 is achieved only at $(\theta_1, \theta_2) = (\tfrac{2\pi}{3}, - \tfrac{2\pi}{3})$ in the square $[-\pi,\pi]^2$ (\cf~\cite[Lemma 2.1]{BacCicKreOrl}). Analogously, if $\theta_1, \theta_2 \in [-\pi,\pi]$ satisfy $\chi(\theta_1,\theta_2) < -1+\eta$, then $|\theta_1 + \frac{2\pi}{3}| < \eta'$ and $|\theta_2 - \frac{2\pi}{3}| < \eta'$. 
\end{remark}

In the next lemma we count the number of triangles where the chirality $\chi(u_\e)$ is far from 1, assuming that $\chi(u_\e) \to 1$. 

\begin{lemma} \label{lemma:counting} Let $\Omega \subset \R^2$ be open, bounded, and connected and let and let $U \subset \subset \Omega$  with Lipschitz boundary. Let $u_\e \in \SF_\e$ be such that $\chi(u_\e) \to 1$ in $L^1(\Omega)$. Given $\eta \in (0,1)$ there exists $C_\eta>0$, depending on $U$,  such that for $\e $ small enough
    \begin{equation*}
       \# \{ T \in \T_\e(U) : \chi(u_\e, T) \leq 1 - \eta\}  \leq   C_\eta\Big(\frac{1}{\e^2}E_\e(u_\e,\Omega)\Big)^2 .
     \end{equation*}
\end{lemma}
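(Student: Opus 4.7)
My strategy is to apply the relative isoperimetric inequality in $U$ to the set where the chirality is far from $1$, after bounding its perimeter in terms of the energy via Lemma \ref{lemma:energy of two triangles}.

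First I would define the counting set $B_\e := \{T \in \T_\e(U) : \chi(u_\e, T) \leq 1 - \eta\}$ together with the enlarged bad region in $\Omega$,
\begin{equation*}
    \omega_\e := \bigcup \{ T \in \T_\e(\Omega) : \chi(u_\e, T) \leq 1 - \eta \}.
\end{equation*}
Since every triangle in $B_\e$ is a bad triangle contained in $U$, we have $\#B_\e \leq \tfrac{4}{\sqrt{3}}\, \e^{-2} |\omega_\e \cap U|$. The assumption $\chi(u_\e) \to 1$ in $L^1(\Omega)$ forces $|\omega_\e| \to 0$, so in particular $|\omega_\e \cap U| < |U|/2$ for all $\e$ sufficiently small.

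The main estimate is a perimeter bound. Every edge of $\partial \omega_\e$ interior to $\Omega$ is the common side of a pair of adjacent triangles $T, T' \in \T_\e(\Omega)$ with $\chi(u_\e, T) \leq 1 - \eta < \chi(u_\e, T')$, and by Lemma \ref{lemma:energy of two triangles} every such pair contributes at least $\e^2 C_\eta$ to $E_\e(u_\e, \Omega)$. Since each triangle has at most three neighbors, it appears in at most three such boundary pairs, so summing the lemma over them overcounts the total energy by at most a factor three, and therefore
\begin{equation*}
    \mathcal{H}^1(\partial^* \omega_\e \cap \Omega) = \e \cdot \#\{\text{boundary pairs in } \Omega\} \leq \frac{3}{C_\eta \e}\, E_\e(u_\e, \Omega).
\end{equation*}
Combining this with the relative isoperimetric inequality in the bounded Lipschitz set $U$, whose applicability rests precisely on $|\omega_\e \cap U| \leq |U|/2$, I obtain
\begin{equation*}
    |\omega_\e \cap U|^{1/2} \leq C(U)\, \mathcal{H}^1(\partial^* \omega_\e \cap U) \leq C(U)\, \mathcal{H}^1(\partial^* \omega_\e \cap \Omega) \leq \frac{3\, C(U)}{C_\eta \e}\, E_\e(u_\e, \Omega).
\end{equation*}
Squaring and dividing by $\tfrac{\sqrt{3}}{4}\e^2$ then yields the desired bound $\#B_\e \leq C_\eta (E_\e(u_\e, \Omega)/\e^2)^2$.

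The main obstacle is conceptual rather than computational: no rate is given in the hypothesis $\chi(u_\e) \to 1$ in $L^1$, so it cannot be used quantitatively, and the only way to trade the area of the bad set against its perimeter is via the isoperimetric inequality — this is what forces the quadratic dependence on $E_\e(u_\e, \Omega)/\e^2$. The $L^1$-convergence of $\chi(u_\e)$ serves only to guarantee that $|\omega_\e \cap U|$ is the smaller of the two pieces in the relative isoperimetric inequality; without it the bad region could fill $U$ even when $E_\e(u_\e, \Omega)$ vanishes (think of a ground state with $\chi \equiv -1$ on $U$), and the bound would be vacuous.
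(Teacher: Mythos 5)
Your argument is correct and is essentially the paper's proof: define the bad region, use the $L^1$-convergence of the chirality only to ensure it occupies less than half of $U$, invoke the relative isoperimetric inequality, and bound the perimeter of the bad region by energy via Lemma~\ref{lemma:energy of two triangles} applied to bad/good neighboring pairs (with a fixed overcounting constant). The only minor imprecision is that for edges of $\partial\omega_\e$ close to $\partial\Omega$ the outer triangle $T'$ may not lie in $\T_\e(\Omega)$, so Lemma~\ref{lemma:energy of two triangles} is not directly applicable; you should bound $\H^1(\partial^*\omega_\e\cap U)$ directly (which is all the isoperimetric inequality needs), where $U\subset\subset\Omega$ guarantees, for $\e$ small, that every relevant neighboring pair lies in $\T_\e(\Omega)$.
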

\begin{proof} Without restriction we assume that also $U$ is connected.
Let us consider the set 
\begin{equation*}
   N_\e^\eta := \bigcup \{ T \in \T_\e(\Omega) : \chi(u_\e, T) \leq 1 - \eta \} \, .
\end{equation*}
Note that, since $\chi(u_\e) \to 1$ in $L^1(\Omega)$,
\begin{equation*}
   |N_\e^\eta \cap U| \leq |\{ |1-\chi(u_\e)| \geq \eta \} \cap U| \to 0 \, .
\end{equation*}
Thus, by the relative isoperimetric inequality, there exists $C>0$ depending on $U$ such that 
\begin{equation} \label{eq:isoperimetric}
   |N_\e^\eta \cap U|  = \min\{|U \sm N_\e^\eta|, |N_\e^\eta \cap U|\} \leq C  ( \H^1( \de N_\e^\eta \cap U) )^2
\end{equation}
for $\e$ small enough. We define the collection of triangles 
\begin{equation*}
   \T_\e^{1-\eta} := \{ T \in \T_\e(\Omega) : \chi(u_\e, T) \leq 1 - \eta \text{ and  }  \chi(u_\e, T') > 1 -\eta \text{ for some } T' \in \mathcal{N}_\e(T) \cap \T_\e(\Omega)\} 
\end{equation*}
and we remark that $\de N_\e^\eta  \cap U \subset \de \big( \bigcup_{T \in \T_\e^{1-\eta}} T \big)$. From the previous inclusion it follows that 
\begin{equation} \label{eq:29101950}
    \H^1( \de N_\e^\eta \cap U ) \leq 3 \e \# \T_\e^{1-\eta} .
\end{equation}
By Lemma~\ref{lemma:energy of two triangles}, we obtain that 
\begin{equation*}
   C_\eta \e^2 \# \T_\e^{1-\eta}  \leq \sum_{T \in \T_\e^{1-\eta}}  \sum_{T' \in \mathcal{N}_\e(T) \cap \T_\e(\Omega)} E_\e(u_\e, T \cup T') \leq 3 E_\e(u_\e, \Omega)\, .
\end{equation*}
Then~\eqref{eq:isoperimetric} and~\eqref{eq:29101950} imply that
\begin{equation*}
    \frac{\sqrt{3}}{4}\e^2 \# \{ T \in  \T_\e(U) : \chi(u_\e, T) \leq 1 - \eta \} \leq |N_\e^\eta \cap U| \leq  C ( \H^1( \de N_\e^\eta  \cap U) )^2  \leq  \frac{C_\eta}{\e^2}  E_\e(u_\e, \Omega)^2. 
\end{equation*}
This concludes the proof.
\end{proof}

\subsection*{The ferromagnetic \texorpdfstring{$XY$}{} model}

In this subsection we fix the notation for the ferromagnetic XY model and we recall some properties that relate it to the Ginzburg-Landau functional.   For every $v \in \SF_\e$ and $T = \conv\{\e i, \e j, \e k\} \in\T_\e(\R^2)$ we set 
\begin{equation*}
    XY_\e(v,T) \defas \frac{1}{2}\e^2 \big( |v(\e i) - v(\e j)|^2 + |v(\e j) - v(\e k)|^2 + |v(\e k) - v(\e i)|^2\big)
\end{equation*}
and for any set $A\subset\R^2$
\begin{equation*}
    XY_\e(v,A)\defas\sum_{T\in\T_\e(A)}XY_\e(v,T) \,.
\end{equation*}

\begin{remark} \label{rmk:XY is integral}
    Given $v \in \SF_\e$, we let $\hat v \colon \R^2 \to \R^2$ denote its piecewise affine interpolation determined by the following conditions: for every $T = \conv\{\e i, \e j, \e k\} \in \T_\e(\R^2)$ the map $\hat v$ is affine in $T$ and $\hat v(\e i) = v(\e i)$, $\hat v(\e j) = v(\e j)$, $\hat v(\e k) = v(\e k)$. Then 
    \begin{equation*}
        \begin{split}
            XY_\e(v,T) & = \frac{1}{2}\e^4 \big( |\nabla \hat v \,  \tb_1|^2 + |\nabla \hat v  \, \tb_2|^2 + |\nabla \hat v \, \tb_3|^2\big) \\
            & = \frac{1}{2}\e^4 \big( |\de_1 \hat v|^2 + |\tfrac{1}{2}\de_1 \hat v + \tfrac{\sqrt{3}}{2}\de_2 \hat v|^2 + |-\tfrac{1}{2}\de_1 \hat v + \tfrac{\sqrt{3}}{2}\de_2 \hat v|^2\big) \\
            & = \frac{3}{4}\e^4 |\nabla \hat v|^2 =  \sqrt{3} \e^2 \int_T |\nabla \hat v|^2 \, \d x \, .    
        \end{split}
    \end{equation*}
\end{remark}
We recall the following key lemma proven in~\cite[Lemma~2]{AliCic} in the case of the XY-energy on the square lattice. The same proof can be repeated for the XY-energy on the triangular lattice. 
\begin{lemma} \label{lemma:potential part}
    Let $v \in \SF_\e$ and let $\hat v$ be its piecewise affine interpolation. Let $\Omega' \subset \subset \Omega$. Then 
    \begin{equation*}
         \int_{\Omega'} (1-|\hat v|^2)^2 \, \d x \leq C XY_\e(v,\Omega) \, .
    \end{equation*} 
\end{lemma}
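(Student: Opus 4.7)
My plan is to do the estimate triangle by triangle, exploiting the fact that $\hat{v}$ is a convex combination of unit vectors on each $T \in \T_\e(\R^2)$, and then sum.

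Fix $T = \conv\{\e i, \e j, \e k\}$ with values $a := v(\e i)$, $b := v(\e j)$, $c := v(\e k) \in \S^1$. For $x \in T$ write $\hat v(x) = \lambda_1 a + \lambda_2 b + \lambda_3 c$ in terms of the barycentric coordinates $\lambda_1, \lambda_2, \lambda_3 \geq 0$ with $\lambda_1 + \lambda_2 + \lambda_3 = 1$. Using $(\lambda_1+\lambda_2+\lambda_3)^2 = 1$ to replace $\lambda_1^2 + \lambda_2^2 + \lambda_3^2$ and the identity $2(1 - a\cdot b) = |a-b|^2$, a direct expansion gives
\begin{equation*}
    1 - |\hat v(x)|^2 = \lambda_1 \lambda_2 \, |a - b|^2 + \lambda_2 \lambda_3 \, |b - c|^2 + \lambda_3 \lambda_1 \, |c - a|^2 \, .
\end{equation*}
In particular $|\hat v(x)| \leq 1$ on $T$ (which also follows from the triangle inequality), so $(1 - |\hat v|^2)^2 \leq 1 - |\hat v|^2$ pointwise.

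Next I integrate. The standard barycentric identity $\int_T \lambda_p \lambda_q \, \d x = |T|/12$ for $p \neq q$, together with $|T| = \frac{\sqrt{3}}{4}\e^2$, yields
\begin{equation*}
    \int_T (1 - |\hat v|^2) \, \d x = \frac{|T|}{12}\big(|a-b|^2 + |b-c|^2 + |c-a|^2\big) = \frac{\sqrt{3}}{24} \, XY_\e(v,T) \, .
\end{equation*}
Combining the two displays gives $\int_T (1 - |\hat v|^2)^2 \, \d x \leq C \, XY_\e(v,T)$ for some absolute constant $C$.

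Finally, since $\Omega' \wcont \Omega$, for $\e$ small enough every point of $\Omega'$ lies in a triangle $T \in \T_\e(\Omega)$ (as $\diam T = \e < \dist(\Omega', \de \Omega)$). Summing the triangle-wise estimate,
\begin{equation*}
    \int_{\Omega'} (1 - |\hat v|^2)^2 \, \d x \leq \sum_{T \in \T_\e(\Omega)} \int_T (1 - |\hat v|^2)^2 \, \d x \leq C \sum_{T \in \T_\e(\Omega)} XY_\e(v,T) = C \, XY_\e(v, \Omega) \, ,
\end{equation*}
which is the claim. There is no real obstacle here; the only point to be careful about is that the pointwise bound $|\hat v| \leq 1$ (which allows replacing the quartic potential by the quadratic one $1 - |\hat v|^2$) is special to convex interpolation of $\S^1$-valued data and is what makes the triangular-lattice argument run in exact parallel with the square-lattice one of~\cite[Lemma~2]{AliCic}.
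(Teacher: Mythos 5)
Your proof is correct, and it is precisely the triangular-lattice version of the argument in [AliCic, Lemma~2] that the paper's remark points to; the algebraic identity $1-|\hat v|^2 = \sum_{p<q}\lambda_p\lambda_q |v_p-v_q|^2$ on each triangle, the observation $|\hat v|\leq 1$ (so the quartic potential is dominated by $1-|\hat v|^2$), and the barycentric integration formula $\int_T\lambda_p\lambda_q\,\d x = |T|/12$ are exactly the ingredients one would use to "repeat the same proof" on the triangular lattice. The only minor caveat is that your estimate holds once $\e<\dist(\Omega',\de\Omega)$ so that every triangle meeting $\Omega'$ lies in $\Omega$, which is the intended regime.
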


\subsection*{Auxiliary spin field}
We introduce here an auxiliary variable suited to describe the vortex structure in the AFXY model. 
Given a vector $u = \exp(\iota \phi )\in \S^1$ with $\phi \in \R$ and an angle~$\theta \in \R$, we set 
\begin{equation} \label{def:rotation}
    R[\theta](u) := \exp\big(\iota (\phi + \theta)\big) \, .
\end{equation}

    Let $u \in \SF_\e$ and let $v \in \SF_\e$ be defined by 
    \begin{equation} \label{def:from u to v}
        v(\e i) := u(\e i) \, , \quad v(\e j) := R[-\tfrac{2\pi}{3}]( u(\e j) ) \, , \quad v(\e k) := R[\tfrac{2\pi}{3}]( u(\e k) ) \, ,
    \end{equation}
    for $\e i \in \L^1_\e$, $\e j \in \L^2_\e$, $\e k \in \L^3_\e$. Note that the operation above transforms a ground state with chirality 1 into a set of three parallel vectors. 
    
    We can relate the $E_\e$ energy of $u$ to the $XY_\e$ energy of $v$ in a triangle $T = \conv\{\e i, \e j, \e k\}$. Letting $\theta(\e j) \in \R$ be an angle between $u(\e i)$ and $u(\e j)$ and letting $\theta(\e k) \in \R$ be an angle between~$u(\e i)$ and~$u(\e k)$  we get that 
    \begin{equation}  \label{eq:E in terms of v}
        \begin{split}
            E_\e(u,T) & = \e^2|u(\e i)+u(\e j)+u(\e k)|^2 \\
            & = \e^2 \big( 3 + 2 \cos\big(\theta(\e j)\big) + 2 \cos\big(\theta(\e k)-\theta(\e j)\big) + 2 \cos\big(\theta(\e k)\big) \big) \\
            %& = \e^2 \big( 3 + 2 \cos\big(\theta(\e j) - \tfrac{2 \pi}{3} + \tfrac{2 \pi}{3} \big) + 2 \cos\big(\theta(\e k) + \tfrac{2 \pi}{3} -\theta(\e j) + \tfrac{2 \pi}{3} - \tfrac{4 \pi}{3}\big) + 2 \cos\big(\theta(\e k) + \tfrac{2 \pi}{3} - \tfrac{2 \pi}{3}\big) \big) \\
            %& =  \e^2 \big( 3 + 2 \cos\big(\theta(\e j) - \tfrac{2 \pi}{3} \big) \cos(\tfrac{2 \pi}{3}) - 2 \sin(\theta(\e j) - \tfrac{2 \pi}{3} ) \sin(\tfrac{2 \pi}{3}) \\
            %& \quad + 2 \cos\big(\theta(\e k) + \tfrac{2 \pi}{3} -\theta(\e j) + \tfrac{2 \pi}{3}\big) \cos(\tfrac{4 \pi}{3}) + 2 \sin(\theta(\e k) + \tfrac{2 \pi}{3} -\theta(\e j) + \tfrac{2 \pi}{3}) \sin(\tfrac{4 \pi}{3}) \\
            %& \quad + 2 \cos\big(\theta(\e k) + \tfrac{2 \pi}{3}\big) \cos(\tfrac{2 \pi}{3}) + 2 \sin(\theta(\e k) + \tfrac{2 \pi}{3}) \sin(\tfrac{2 \pi}{3}) \big) \\
            & = \e^2 \big( 3 - \cos\big(\theta(\e j) - \tfrac{2 \pi}{3} \big) - \cos\big(\theta(\e k) + \tfrac{2 \pi}{3} -\theta(\e j)  + \tfrac{2 \pi}{3}\big)  - \cos\big(\theta(\e k) + \tfrac{2 \pi}{3}\big) \big) \\
            & \quad  - \sqrt{3} \e^2   \big(  \sin(\theta(\e j) - \tfrac{2 \pi}{3} )  +  \sin(\theta(\e k) + \tfrac{2 \pi}{3} -\theta(\e j) + \tfrac{2 \pi}{3}) -  \sin(\theta(\e k) + \tfrac{2 \pi}{3})   \big) \\
            & = \e^2 \big( 3 - v(\e i) \cdot v(\e j) - v(\e j) \cdot v(\e k)  - v(\e k)\cdot v(\e i) \big) \\
            & \quad  + \frac{3}{4 }  \e^2\big( 2 \cos(\theta(\e j)) + 2 \cos(\theta(\e k)  -\theta(\e j)) +  2 \cos(\theta(\e k)) \big) \\
            &\quad  + \frac{\sqrt{3}}{2 }   \e^2   \big(  \sin(\theta(\e j) )    +    \sin(\theta(\e k)  -\theta(\e j))   + \sin(-\theta(\e k) )   \big)   \\
            & = XY_\e(v,T) + \frac{3}{4 } E_\e(u,T) - \frac{9}{4 } \e^2 (1 -  \chi(u,T))  \, .
        \end{split} 
    \end{equation}
    The previous inequality yields
    \begin{equation} \label{eq:E in terms of v final}
        E_\e(u,T) = 4 XY_\e(v,T) - 9 \e^2 (1 - \chi(u,T))
    \end{equation}
    and, in particular,  
    \begin{equation} \label{eq:E less than XY}
        E_\e(u,T) \leq 4 XY_\e(v,T) \, .
    \end{equation}
	\begin{remark}[Lower bound]
    In general, a lower bound $E_\e(u,T) \geq c XY_\e(v,T)$ does not hold true. For instance, if $u$ is a ground state with negative chirality in $T$, then $E_\e(u,T) = 0$ but $XY_\e(v,T) > 0$. Nonetheless, we show now that this kind of lower bound holds true if $u$ has chirality close to $1$ in $T$. 
	\end{remark}
    
     \begin{lemma} \label{lemma:bounds with XY}
        Let $u \in \SF_\e$ and let $v \in \SF_\e$ be the auxiliary spin field defined by~\eqref{def:from u to v}. Then for every $\lambda \in (0,1)$ there exists $\eta \in (0,1)$  such that   
        \begin{equation} \label{eq:bounds with XY}
             (1-\lambda)  XY_\e(v,T) \leq E_\e(u,T)  \leq (1+\lambda) XY_\e(v,T)
        \end{equation}
       for every $T = \conv\{ \e i, \e j, \e k\} \in \T_\e(\R^2)$, $\e i \in \L^1_\e, \e j \in \L^2_\e, \e k \in \L^3_\e$, such that either $\d_{\S^1}(v(\e i), v(\e j)) < \eta $ and $\d_{\S^1}(v(\e i), v(\e k)) < \eta$ or $\chi(u,T) > 1 - \eta$. 
    \end{lemma}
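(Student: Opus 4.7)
The plan is to use the exact identity~\eqref{eq:E in terms of v final}, $E_\e(u,T)=4\,XY_\e(v,T)-9\e^{2}(1-\chi(u,T))$, and to show that under either hypothesis one has $9\e^{2}(1-\chi(u,T))=(3+O(\eta))\,XY_\e(v,T)$; substituting this into the identity then yields $E_\e(u,T)=(1+O(\eta))\,XY_\e(v,T)$, and choosing $\eta$ small enough gives~\eqref{eq:bounds with XY}.

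The first step is to reduce both hypotheses to a single ``$v$-close'' condition. In the first case there is nothing to do. If instead $\chi(u,T)>1-\eta$, I pick angular lifts $\theta_1\defas\theta(\e j)-\theta(\e i)$, $\theta_2\defas\theta(\e k)-\theta(\e i)$ in $(-\pi,\pi]$, so that $\chi(u,T)$ equals the function of Remark~\ref{rmk:chirality angles} evaluated at $(\theta_1,\theta_2)$. That remark provides, for every $\eta'>0$, an $\eta$ so small that $|\theta_1-\tfrac{2\pi}{3}|<\eta'$ and $|\theta_2+\tfrac{2\pi}{3}|<\eta'$; by~\eqref{def:from u to v}, the angular lifts of $v(\e j)$ and $v(\e k)$ relative to $v(\e i)$ are precisely $\theta_1-\tfrac{2\pi}{3}$ and $\theta_2+\tfrac{2\pi}{3}$, so this yields $\d_{\S^1}(v(\e i),v(\e j)),\,\d_{\S^1}(v(\e i),v(\e k))<\eta'$. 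Hence it suffices to prove~\eqref{eq:bounds with XY} under the first hypothesis, with $\eta$ as small as desired.

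Under that hypothesis, I pick angular lifts $\alpha_1,\alpha_2,\alpha_3$ of $v(\e i),v(\e j),v(\e k)$ with $|\alpha_2-\alpha_1|,|\alpha_3-\alpha_1|<\eta$, and set $\beta_1\defas\alpha_2-\alpha_1$, $\beta_2\defas\alpha_3-\alpha_2$, $\beta_3\defas\alpha_1-\alpha_3$, so that $\beta_1+\beta_2+\beta_3=0$ exactly (not just modulo $2\pi$) and $|\beta_i|\leq 2\eta$. Reading $\chi(u,T)$ through~\eqref{def:from u to v} and the identity $\sin(\beta+\tfrac{2\pi}{3})=-\tfrac{1}{2}\sin\beta+\tfrac{\sqrt{3}}{2}\cos\beta$, a direct Taylor expansion around $\beta=0$ gives
\[
1-\chi(u,T)=\tfrac{1}{6}\sum_{i=1}^{3}\beta_i^{2}+r_1(\beta)\, ,\qquad XY_\e(v,T)=\e^{2}\sum_{i=1}^{3}(1-\cos\beta_i)=\tfrac{\e^{2}}{2}\sum_{i=1}^{3}\beta_i^{2}+\e^{2}\,r_2(\beta)\, ,
\]
where the cancellation of the a priori linear contribution in $\sum_i\sin\beta_i$ is ensured precisely by $\sum_i\beta_i=0$, so that $|r_1(\beta)|\leq C\eta\sum_i\beta_i^{2}$ and $|r_2(\beta)|\leq C\eta^{2}\sum_i\beta_i^{2}$.

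Combining the two expansions gives $|9\e^{2}(1-\chi(u,T))-3\,XY_\e(v,T)|\leq C'\eta\,XY_\e(v,T)$, which, inserted into~\eqref{eq:E in terms of v final}, yields $|E_\e(u,T)-XY_\e(v,T)|\leq C'\eta\,XY_\e(v,T)$; choosing $\eta$ such that $C'\eta\leq\lambda$ proves~\eqref{eq:bounds with XY}. The degenerate case $\sum_i\beta_i^{2}=0$, \ie $v(\e i)=v(\e j)=v(\e k)$, forces $u$ to be a ground state with positive chirality on $T$ and both sides of~\eqref{eq:bounds with XY} vanish. The main technical subtlety is the reduction of the $\chi$-smallness hypothesis to the $v$-closeness hypothesis via Remark~\ref{rmk:chirality angles}; once this is done, the rest is a careful bookkeeping of Taylor expansions exploiting the constraint $\sum_i\beta_i=0$, which kills the otherwise dominant linear term.
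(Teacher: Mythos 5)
Your proof is correct and follows essentially the same route as the paper: both start from the identity~\eqref{eq:E in terms of v final}, reduce the chirality hypothesis to the $v$-closeness hypothesis via Remark~\ref{rmk:chirality angles}, and then Taylor-expand $1-\chi(u,T)$ and $\e^{-2}XY_\e(v,T)$ around the ground state to show they share the same leading quadratic form, so that $9\e^2(1-\chi(u,T))=(3+O(\eta))\,XY_\e(v,T)$. The only cosmetic difference is the choice of coordinates — you use the three constrained edge-differences $\beta_i$ with $\sum_i\beta_i=0$, which diagonalizes the quadratic form and makes the cancellation of the would-be linear term explicit, whereas the paper works in the two unconstrained angles $\theta_1-\tfrac{2\pi}{3}$, $\theta_2+\tfrac{2\pi}{3}$ and records the positive-definite Hessian $A$ directly.
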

    \begin{proof}
     Let us fix $\lambda \in (0,1)$ and let $\delta \in (0,\frac{1}{7})$ be such that 
    \begin{equation} \label{eq:lambda}
        1-\lambda < \frac{1-7\delta}{1-\delta} \quad \text{and} \quad \frac{1+7\delta}{1+\delta} < 1 + \lambda \, .
    \end{equation}
     On the one hand, we consider the function 
    \begin{equation*}
        \chi(\theta_1, \theta_2) :=\frac{2}{3\sqrt{3} } \big( \sin(\theta_1) + \sin(\theta_2 - \theta_1) - \sin(\theta_2) \big)
    \end{equation*}
    so that, adopting the notation from the computations in~\eqref{eq:E in terms of v}, $\chi(u,T) =  \chi(\theta(\e j), \theta(\e k))$. By Taylor expanding $\chi$ around the point $(\frac{2\pi}{3}, -\frac{2\pi}{3})$ we obtain that
    \begin{equation*} 
        1 - \chi(\theta_1, \theta_2) = 
         \frac{1}{3}A \begin{pmatrix} \theta_1 - \frac{2\pi}{3}  \\ \theta_2 + \frac{2\pi}{3}  \end{pmatrix}  \cdot \begin{pmatrix} \theta_1 - \frac{2\pi}{3}  \\ \theta_2 + \frac{2\pi}{3}  \end{pmatrix} + \rho_1(\theta_1, \theta_2) \, , \quad A:= \begin{pmatrix}  2 & -1  \\  -1 & 2 \end{pmatrix} = 3 \D^2 \chi\big(\tfrac{2\pi}{3}, -\tfrac{2\pi}{3} \big) \, ,
    \end{equation*}
    where $|\rho_1(\theta_1, \theta_2)| \leq C  \big( |\theta_1 - \tfrac{2\pi}{3}|^3 +  |\theta_2 + \tfrac{2\pi}{3}|^3 \big)$, the constant $C$ depending only on $\|\D^3 \chi \|_{L^\infty}$. There exists $\delta_1>0$ (depending only on $\delta$ and $\|\D^3 \chi \|_{L^\infty}$) such that, if $|\theta_1 - \tfrac{2\pi}{3}| < \delta_1$ and $|\theta_2 + \tfrac{2\pi}{3}| < \delta_1$, then
    \begin{equation*}
        \begin{split}
               |\rho_1(\theta_1,\theta_2)| & \leq C \big( |\theta_1 - \tfrac{2\pi}{3}|^3 +  |\theta_2 + \tfrac{2\pi}{3}|^3 \big)  \leq C \delta_1 \big( |\theta_1 - \tfrac{2\pi}{3}|^2 +  |\theta_2 + \tfrac{2\pi}{3}|^2 \big)\\
             &  \leq  \frac{\delta}{3} \big( |\theta_1 - \tfrac{2\pi}{3}|^2 +  |\theta_2 + \tfrac{2\pi}{3}|^2 \big)   \leq \frac{\delta }{3} A \begin{pmatrix} \theta_1 - \frac{2\pi}{3}  \\ \theta_2 + \frac{2\pi}{3}  \end{pmatrix}  \cdot \begin{pmatrix} \theta_1 - \frac{2\pi}{3}  \\ \theta_2 + \frac{2\pi}{3}  \end{pmatrix} \, . 
        \end{split}
    \end{equation*}
    This implies that 
    \begin{equation} \label{eq:Taylor chi}
        \frac{1-\delta}{3}  A \begin{pmatrix} \theta_1 - \frac{2\pi}{3}  \\ \theta_2 + \frac{2\pi}{3}  \end{pmatrix}  \cdot \begin{pmatrix} \theta_1 - \frac{2\pi}{3}  \\ \theta_2 + \frac{2\pi}{3}  \end{pmatrix} \leq 1-\chi(\theta_1,\theta_2)  \leq \frac{1+\delta}{3}  A \begin{pmatrix} \theta_1 - \frac{2\pi}{3}  \\ \theta_2 + \frac{2\pi}{3}  \end{pmatrix}  \cdot \begin{pmatrix} \theta_1 - \frac{2\pi}{3}  \\ \theta_2 + \frac{2\pi}{3}  \end{pmatrix}  \, .
    \end{equation} 
    On the other hand, we consider the function 
    \begin{equation*}
       f(\theta_1, \theta_2) :=  3 - \cos(\theta_1 - \tfrac{2\pi}{3}) - \cos(\theta_2 - \theta_1 + \tfrac{4\pi}{3}) - \cos(\theta_2 + \tfrac{2\pi}{3})  \, ,
    \end{equation*}
    so that $XY_\e(v,T) = \e^2 f(\theta(\e j), \theta(\e k))$. By Taylor expanding $f$ around $(\frac{2\pi}{3}, -\frac{2\pi}{3})$ we obtain that
    \begin{equation*} 
        f(\theta_1, \theta_2) = A \begin{pmatrix} \theta_1 - \frac{2\pi}{3}  \\ \theta_2 + \frac{2\pi}{3}  \end{pmatrix}  \cdot \begin{pmatrix} \theta_1 - \frac{2\pi}{3}  \\ \theta_2 + \frac{2\pi}{3}  \end{pmatrix} + \rho_2(\theta_1, \theta_2 ) \, ,
    \end{equation*}
    where $|\rho_2(\theta_1, \theta_2)| \leq C \big( |\theta_1 - \tfrac{2\pi}{3}|^3 +  |\theta_2 + \tfrac{2\pi}{3}|^3 \big)$, the constant $C$ depending only on $ \|\D^3 f \|_{L^\infty}$. There exists $\delta_2>0$ (depending only on $\delta$ and $\|\D^3 f \|_{L^\infty}$) such that, if $|\theta_1 - \tfrac{2\pi}{3}| < \delta_2$ and $|\theta_2 + \tfrac{2\pi}{3}| < \delta_2$, then
    \begin{equation*}
        \begin{split}
            |\rho_2(\theta_1,\theta_2)|  & \leq C \big( |\theta_1 - \tfrac{2\pi}{3}|^3 +  |\theta_2 + \tfrac{2\pi}{3}|^3 \big)  \leq C \delta_2 \big( |\theta_1 - \tfrac{2\pi}{3}|^2 +  |\theta_2 + \tfrac{2\pi}{3}|^2 \big)\\
             &  \leq  \delta \big( |\theta_1 - \tfrac{2\pi}{3}|^2 +  |\theta_2 + \tfrac{2\pi}{3}|^2 \big)   \leq \delta A \begin{pmatrix} \theta_1 - \frac{2\pi}{3}  \\ \theta_2 + \frac{2\pi}{3}  \end{pmatrix}  \cdot \begin{pmatrix} \theta_1 - \frac{2\pi}{3}  \\ \theta_2 + \frac{2\pi}{3}  \end{pmatrix} \, . 
        \end{split}
    \end{equation*}
    This implies that 
    \begin{equation} \label{eq:Taylor f}
        (1-\delta) A \begin{pmatrix} \theta_1 - \frac{2\pi}{3}  \\ \theta_2 + \frac{2\pi}{3}  \end{pmatrix}  \cdot \begin{pmatrix} \theta_1 - \frac{2\pi}{3}  \\ \theta_2 + \frac{2\pi}{3}  \end{pmatrix}  \leq f(\theta_1, \theta_2)  \leq (1+\delta) A \begin{pmatrix} \theta_1 - \frac{2\pi}{3}  \\ \theta_2 + \frac{2\pi}{3}  \end{pmatrix}  \cdot \begin{pmatrix} \theta_1 - \frac{2\pi}{3}  \\ \theta_2 + \frac{2\pi}{3}  \end{pmatrix} \, .
    \end{equation} 
    Setting $\eta' := \min\{\delta_1, \delta_2\}$, by~\eqref{eq:Taylor chi} and~\eqref{eq:Taylor f} we conclude that 
    \begin{equation} \label{eq:angles close to ground state}
        3\frac{1-\delta}{1+\delta} f(\theta_1, \theta_2) \leq 9( 1-\chi(\theta_1,\theta_2) )  \leq 3\frac{1+\delta}{1-\delta} f(\theta_1, \theta_2) \,,  \quad \text{for } |\theta_1 - \tfrac{2\pi}{3}| < \eta', \ |\theta_2 + \tfrac{2\pi}{3}| < \eta' \, . 
    \end{equation}
    This and~\eqref{eq:E in terms of v final} imply that~\eqref{eq:bounds with XY} holds true if $\d_{\S^1}(v(\e i), v(\e j)) < \eta'$ and $\d_{\S^1}(v(\e i), v(\e k)) < \eta'$. 
    
    By Remark~\ref{rmk:chirality angles}, there exists $\eta'' \in (0,1)$(depending on $\eta'$) such that, if $\theta_1,\theta_2 \in [-\pi,\pi]$ satisfy $\chi(\theta_1,\theta_2) > 1 -  \eta'' $, then $|\theta_1- \tfrac{2\pi}{3}| < \eta', \ |\theta_2 + \tfrac{2\pi}{3}| < \eta'$.  By~\eqref{eq:angles close to ground state} we conclude that 
    \begin{equation*}
        3\frac{1-\delta}{1+\delta} f(\theta_1, \theta_2) \leq  9( 1-\chi(\theta_1,\theta_2) )  \leq 3\frac{1+\delta}{1-\delta} f(\theta_1, \theta_2) \,,  \quad  \text{for } \chi(\theta_1, \theta_2) > 1 - \eta'' \, . 
    \end{equation*}
    This, together with~\eqref{eq:E in terms of v final} and~\eqref{eq:lambda}, implies that~\eqref{eq:bounds with XY} holds true if  $\chi(u,T) > 1 - \eta''$. Setting  $\eta := \min\{\eta',\eta''\}$  concludes the proof.
\end{proof}

\section{Topological singularities}

In this section we recall the definition of discrete vorticity and its relation with the Jacobian of maps in the continuum. In particular, in Remark~\ref{rmk:arcwise interpolation} we introduce an interpolation of discrete spin fields that makes this relation clear. 

\subsection*{Discrete vorticity} Let $v \in \SF_\e$ and let $T \in \T_\e(\R^2)$. We define the {\em discrete vorticity} $\mu_v(T)$ of~$v$ in~$T$ as follows. Let $\varphi \colon \L_\e \to \R$ be any function such that $v(x) = \exp(\iota \varphi(x))$. We define the projection on~$2 \pi \Z$ by
\begin{equation*}
    P_{2\pi \Z}(t) := \argmin \{ |t - s| : s \in 2\pi \Z \} \, ,
\end{equation*} 
choosing the $\argmin$ with minimal modulus when it is not unique. We consider the angle between the vectors $v(x)$ and $v(x')$ given by
\begin{equation} \label{def:de} 
    \d^e \varphi(x,x') := \varphi(x') - \varphi(x) - P_{2\pi \Z}\big( \varphi(x') - \varphi(x) \big) \, .
\end{equation}
We remark that $\d^e \varphi$ does not depend on the choice of $\varphi$. Moreover, $\d^e \varphi(x,x') = - \d^e \varphi(x',x)$.  Let now $(x_1,x_2,x_3)$ be the vertices of $T$ in counterclockwise order. Then we set  
\begin{equation} \label{def:vorticity}
    \mu_v(T) := \frac{1}{2 \pi} \big( \d^e \varphi(x_1,x_2) + \d^e \varphi(x_2,x_3) + \d^e \varphi(x_3,x_1)\big) \, .
\end{equation}
Since $| \d^e \varphi(x,x') | \leq \pi$, we immediately deduce that $\mu_v(T) \in \{-1,0,1\}$. 
Finally, we define the measure  
\begin{equation*}
    \mu_v := \sum_{T \in \T_\e(\R^2)} \mu_v(T) \delta_{b(T)} \, ,
\end{equation*}
where $b(T) \in \R^2$ denotes the barycenter of the triangle $T$.

\begin{remark} \label{rmk:XY controls mass}
    There exists a constant $C > 0$ such that for every $v \in \SF_\e$ and $\Omega' \subset \subset \Omega$ with $\mathrm{dist}(\Omega^\prime, \partial \Omega) > \e$
    \begin{equation*}
        |\mu_{v}|(\Omega') \leq \frac{C}{\e^2}  XY_\e(v,\Omega) \, .
    \end{equation*}
    Indeed, we start by observing that, by the definition of $\mu_v$,
    \begin{equation*}
        |\mu_{v}|(\Omega') \leq \sum_{T \in \T_\e(\Omega)} |\mu_v(T)| \, .
    \end{equation*}
    Thanks to the previous inequality, it is enough to prove that there exists a universal constant $C > 0$ such that $|\mu_v|(T) \leq \frac{C}{\e^2} XY_\e(v,T) $ for every $T \in \T_\e(\Omega)$. Given $T = \conv\{\e i, \e j, \e k\} \in \T_\e(\Omega)$, two cases are possible: either $\mu_v(T) = 0$ or $\mu_v(T) \neq 0$. In the former case, we trivially have $|\mu_v|(T) \leq \frac{1}{\e^2} XY_\e(v,T)$. In the latter case, let $\varphi(x) \in \R$ be such that $v(x) = \exp(\iota \varphi(x))$ for $x \in \{\e i, \e j, \e k\}$. Then one value between $|\d^e \varphi(\e i, \e j)|$, $|\d^e \varphi(\e j, \e k)|$, or $|\d^e \varphi(\e k, \e i)|$ is greater than or equal to $\frac{2\pi}{3}$. Since $|v(x) - v(x')|^2 \geq \frac{4}{\pi^2} |\d^e \varphi(x, x')|^2$, we conclude that $\frac{1}{\e^2} XY_\e(v,T) \geq  \frac{8}{9}  |\mu_v(T)|$. 
\end{remark}

\subsection*{Jacobians and degree}
We recall here some definitions and basic results concerning topological singularities. Let $U \subset \R^2$ be an open set and let $v = (v_1,v_2) \in W^{1,1}(U;\R^2) \cap L^\infty(U;\R^2)$. We define the {\em pre-Jacobian} (also known as {\em current}) of $v$ by 
\begin{equation*}
    j(v) := \frac{1}{2}( v_1 \nabla v_2 - v_2 \nabla v_1 ) \, .
\end{equation*}
The {\em distributional Jacobian} of $v$ is defined by 
\begin{equation*}
    J(v) := \curl (j(v)) \, ,
\end{equation*}
in the sense of distributions, \ie 
\begin{equation*}
    \langle J(v), \psi \rangle = - \int_U j(v) \cdot \nabla^\perp \psi \, \d x \quad \text{for every } \psi \in C^\infty_c(U) \, ,
\end{equation*}
where $\nabla^\perp = (-\de_2, \de_1)$. Note that $J(v)$ is also well-defined when $v \in H^1(U;\R^2)$, and, in that case, it coincides with the $L^1$ function $\det \nabla v$.
 
Given $v =(v_1,v_2) \in H^{\frac{1}{2}}(\de B_\rho(x_0);\S^1)$, its {\em degree} is defined by 
\begin{equation} \label{def:degree}
    \deg(v,\de B_\rho(x_0)) := \frac{1}{2 \pi} \big( \langle \nabla_{\de B_\rho(x_0)} v_2 , v_1 \rangle_{H^{-\frac{1}{2}},H^\frac{1}{2}} - \langle \nabla_{\de B_\rho(x_0)} v_1 , v_2 \rangle_{H^{-\frac{1}{2}},H^\frac{1}{2}} \big) \, ,
\end{equation}
where $\langle \cdot  , \cdot  \rangle_{H^{-\frac{1}{2}},H^\frac{1}{2}}$ denotes the duality between $H^{-\frac{1}{2}}(\de B_\rho(x_0);\S^1)$ and $H^{\frac{1}{2}}(\de B_\rho(x_0);\S^1)$ and we let  $\nabla_{\de B_\rho(x_0)}$ denote the derivative on $\de B_\rho(x_0)$ with respect to the unit speed parametrization of $\de B_\rho(x_0)$. Note that, by definition, the map $v \in H^{\frac{1}{2}}(\de B_\rho(x_0);\S^1) \mapsto \deg(v,\de B_\rho(x_0))$ is continuous. We remark that
\begin{equation*}
    \deg(v,\de B_\rho(x_0)) = \frac{1}{2\pi}\int_{\de B_\rho(x_0)}   v_1 \nabla_{\de B_\rho(x_0)} v_2  - v_2 \nabla_{\de B_\rho(x_0)} v_1 \, \d \H^1 \quad \text{if } v \in H^1(\de B_\rho(x_0);\S^1)
\end{equation*} 
(and thus $v$ is continuous) and this notion coincides with the classical notion of degree.\footnote{One can see this by noticing that
\begin{equation*}
    \frac{1}{2\pi}\int_{\de B_\rho}   v_1 \nabla_{\de B_\rho} v_2  - v_2 \nabla_{\de B_\rho} v_1 \, \d \H^1 = \fint_{\de B_\rho}   v^* \omega_{\de B_\rho}  = \deg(v,\de B_\rho) \fint_{\de B_\rho} \omega_{\de B_\rho}  = \deg(v,\de B_\rho) \, ,
\end{equation*} 
where $v^* \omega_{\de B_\rho}$ is the pull-back through $v$ of the volume form $\omega_{\de B_\rho}$ on $\de B_\rho$ and the second equality is due to the topological definition of degree. 
}
Also when $v \in H^{\frac{1}{2}}(\de B_\rho(x_0);\S^1)$ is discontinuous, the degree defined in~\eqref{def:degree} inherits from the continuous setting some characterizing properties. In particular, a result due to L.\ Boutet de Monvel \& O.\ Gabber~\cite[Theorem~A.3]{BdMBGeoPur} ensures that\footnote{In~\cite[Theorem~A.3]{BdMBGeoPur} the degree formula is written in an alternative form, equivalent to~\eqref{def:degree}, by interpreting~$v$ as a complex-valued function.} $\deg(v,\de B_\rho(x_0)) \in \Z$.

A further fundamental property of the degree is the following. Let $v \in H^1(\A{r}{R}(x_0);\S^1)$. By the trace theory, $v|_{\de B_{\rho}(x_0)} \in H^{\frac{1}{2}}(\de B_{\rho}(x_0);\S^1)$ for every $\rho \in [r, R]$. Then 
\begin{equation} \label{eq:degree independent}
    \deg(v,\de  B_{\rho}(x_0)) = \deg(v,\de  B_{\rho'}(x_0)) \quad \text{for every } \rho, \rho' \in [r,R] \, .
\end{equation}
This follows from the fact that $\deg(v,\de  B_{\rho}(x_0)) \in \Z$, by the continuity of the degree with respect to the $H^\frac{1}{2}$ norm, and by the continuity of the map
 \begin{equation*}  
     \rho \in [r,R] \mapsto v( x_0 + \rho \, \cdot \, )|_{\de B_{1}} \in H^{\frac{1}{2}}(\de B_{1};\S^1) \, , 
 \end{equation*}
 which is a consequence of the trace theory for Sobolev functions.

We conclude this summary about the degree by recalling the following property. Let $v \in H^1(\A{r}{R}(x_0);\S^1)$. By the theory of slicing of Sobolev functions (\cf~\cite[Proposition~3.105]{AmbFusPal} with a change of coordinates), for a.e.\ $\rho \in (r,R)$ the restriction $v|_{\de B_\rho(x_0)}$ belongs to $H^1(\de B_\rho(x_0); \S^1)$ and $\nabla_{\de B_\rho(x_0)}(v|_{\de B_\rho(x_0)})(y) = \nabla v(y) \tau_{\de B_\rho(x_0)}(y)$ for $\H^1$-a.e.\ $y \in \de B_\rho(x_0)$, where $\tau_{\de B_\rho(x_0)}(y)$ is the unit tangent vector to $\de B_\rho(x_0)$ at $y$. Therefore
 \begin{equation} \label{eq:degree for H1}
     \deg(v, \de B_\rho(x_0)) = \frac{1}{\pi} \int_{\de B_\rho(x_0)} j(v)|_{\de B_\rho(x_0)} \cdot \tau_{\de B_\rho(x_0)} \, \d \H^1  \quad \text{for a.e.\ } \rho \in (r,R) \, ,
 \end{equation}
 which relates the degree to the pre-jacobian and, by Stokes' Theorem, to the distributional Jacobian. 

 \subsection*{An interpolation of discrete spin fields} In the following remark we relate the discrete vorticity of a spin field with the Jacobian of a suitable interpolation. 

    \begin{remark}[$\S^{1}$-interpolation] \label{rmk:arcwise interpolation}
        To every $v \in \SF_\e$ we associate a map $\overarc v \colon \R^2 \to \S^1$ with the following properties:
        \begin{enumerate}
            \item  $\overarc v = v$ on the lattice $\L_\e$;
            \item  $\overarc v \in W^{1,\infty}_\loc(\R^2 \sm \supp \mu_v;\S^1)$; 
            \item $J (\overarc v) = \pi \mu_v$;
            \item $\e^2 \int_T |\nabla \overarc v|^2 \, \d x \leq  \frac{\pi^2}{4\sqrt{3}}  XY_\e(v,T)$ for every $T$ with $\mu_v(T) = 0$. 
        \end{enumerate}
           We define $\overarc v$ in every $T =\conv\{\e \ell_1, \e \ell_2 ,\e \ell_3 \} \in \T_\e(\R^2)$  with $(\e \ell_1,\e \ell_2, \e \ell_3)$ ordered counterclockwise by distinguishing two cases: $\mu_v(T)=0$ or $\mu_v(T) \in \{-1,1\}$. For every vertex $x \in \{\e \ell_1, \e \ell_2, \e \ell_3 \}$ let $\varphi(x)\in \R$ be such that $v(x) = \exp(\iota \varphi(x))$. We set
           \begin{equation*}        
               \phi(\e \ell_1)   := \varphi(\e \ell_1) \,, \quad   \ \phi(\e \ell_2)  := \phi(\e \ell_1) + \d^e \varphi(\e \ell_1, \e \ell_2)  \, , \quad   \ \phi(\e \ell_3)    :=  \phi(\e \ell_2) + \d^e \varphi(\e \ell_2, \e \ell_3)   \, ,
        \end{equation*}
         where $\d^e \varphi$ is defined in~\eqref{def:de}.  In this way,
        \begin{equation}\label{eq:relation phi k phi i}
            \phi(\e \ell_3) + \d^e \varphi(\e \ell_3, \e \ell_1)  = \phi(\e \ell_1) + 2 \pi \mu_v(T)  \, .
        \end{equation}
         If $\mu_v(T)=0$, we let $\hat \phi$ be the function that is affine in $T$ and satisfies $\hat \phi(x) =  \phi(x)$ for every vertex $x \in \{\e \ell_1,\e \ell_2,\e \ell_3\}$. We set $\overarc v(x) := \exp(\iota \hat \phi(x))$ for every $x \in T$. Since $\mu_v(T)=0$,  \eqref{eq:relation phi k phi i} implies that $\phi(\e \ell_1) = \phi(\e \ell_3) + \d^e \varphi(\e \ell_3 , \e \ell_1)$.  Note that
        \begin{equation} \label{eq:arcwise controlled with XY}
            \begin{split}
                \sqrt{3}\int_T |\nabla \overarc v|^2 \, \d x & =  \sqrt{3} \int_T |\nabla \hat \phi|^2 \, \d x = \frac{1}{2} \e^2 \big( |\nabla \hat \phi \cdot \tb_1|^2 + |\nabla \hat \phi \cdot \tb_2|^2 + |\nabla \hat \phi \cdot \tb_3|^2 \big) \\
                & = \frac{1}{2}   \big( |\phi(\e \ell_2) - \phi(\e \ell_1)|^2 + |\phi(\e \ell_3) - \phi(\e \ell_2)|^2 + |\phi(\e \ell_1) - \phi(\e \ell_3)|^2 \big) \\
                & = \frac{1}{2}   \big( |\d^e \varphi(\e \ell_1, \e \ell_2)|^2 + |\d^e \varphi(\e \ell_2, \e \ell_3)|^2 + |\d^e \varphi(\e \ell_3, \e \ell_1)|^2 \big) \\
                & = \frac{1}{2}   \big( \d_{\S^1}(v(\e \ell_1), v(\e \ell_2))^2 + \d_{\S^1}(v(\e \ell_2), v(\e \ell_3))^2  +\d_{\S^1}(v(\e \ell_3), v(\e \ell_1))^2  \big) \\
                & \leq   \frac{\pi^2}{8} \big( |v(\e \ell_2) - v(\e \ell_1)|^2 + |v(\e \ell_3) - v(\e \ell_2)|^2 + |v(\e \ell_1) - v(\e \ell_3)|^2 \big) \\
                &  =   \frac{\pi^2}{4\e^2} XY_\e(v,T) \, .
            \end{split}
        \end{equation}
        We remark that $J( \overarc v) \mres T = 0$ (using the area formula and noticing that the image of $T$ through the smooth map $\overarc v$ is $\S^1$). 
        
        If $\mu_v(T) = z$ with $z \in \{-1,1\}$ we define $\overarc v$ in a different way.        
        Namely, on $\de T$ we define the function $\mathring \phi$ by 
        \begin{equation*}
            \mathring \phi(x) := \begin{cases}
               \phi(\e \ell_1) + s \d^e \varphi(\e \ell_1, \e \ell_2) \, , & \text{for } x = \e \ell_1 + s (\e \ell_2 - \e \ell_1) \, , \ s \in [0,1)  \, , \\
               \phi(\e \ell_2) + s \d^e \varphi(\e \ell_2, \e \ell_3) \, , & \text{for } x = \e \ell_2 + s (\e \ell_3 - \e \ell_2) \, ,\  s \in [0,1)  \, , \\ 
               \phi(\e \ell_3) + s \d^e \varphi(\e \ell_3, \e \ell_1) \, , & \text{for } x = \e \ell_3 + s (\e \ell_1 - \e \ell_3) \, , \ s \in [0,1)  \, .
            \end{cases}
        \end{equation*}
        Let $b(T) \in T$ be the barycenter of $T$. We extend $\mathring \phi$ to $T \sm \{b(T)\}$ making it 0-homogeneous with respect to $b(T)$. Notice that $\mathring \phi$ is continuous outside the segment $[\e \ell_1, b(T)]$, where in view of~\eqref{eq:relation phi k phi i} it has a jump of~$\phi(\e \ell_3) + \d^e \varphi(\e \ell_3, \e \ell_1) - \phi(\e \ell_1) = 2 \pi z$. We define $\overarc v(x) := \exp(\iota \mathring \phi(x))$ for every $x \in T \sm \{b(T)\}$, observing that $\overarc v \in W^{1,1}(T;\S^1)$ and  that $\overarc v \in W^{1,\infty}_{\mathrm{loc}}(T \sm \{b(T)\};\S^1)$. Then the Jacobian of $\overarc v$ is defined in the sense of distributions.  In fact, $J (\overarc v)$ is a measure and $J (\overarc v) \mres T = \pi \mu_v \mres T$.\footnote{The proof of this fact is standard: one can consider for every $\rho \in (0,1)$ the scaled triangle $T^\rho := \rho(T - b(T)) + b(T)$ and  define the ``conical'' approximation  $\overarc v^\rho \colon T \to \R^2$ given by
        \begin{equation*}
            \overarc v^\rho(x) := \begin{cases}
                \overarc v(x) \, , & \text{ if } x \in T \sm T^\rho , \\
                \big(1  - \frac{\dist(x, \de T^\rho)}{\dist(b(T), \de T^\rho)} \big)\overarc v(x) \, , & \text{ if } x \in  T^\rho . \\
            \end{cases}
        \end{equation*}
        On the one hand, $j(\overarc v^\rho) \to j(\overarc v)$ in $L^1(T;\R^2)$ and thus $J(\overarc v^\rho) \wto J(\overarc v)$ in the sense of distributions. On the other hand, $J(\overarc v^\rho) = 0$ in $T \sm T^\rho$ and $\|J(\overarc v^\rho)\|_{L^1(T)} = \|J(\overarc v^\rho)\|_{L^1(T^\rho)} \leq C$, which implies that~$J(\overarc v^\rho)$ converges weakly* to a multiple of the Dirac delta at $b(T)$. Moreover $J(\overarc v^\rho)(T) \to J(\overarc v)(T)$ and
        \begin{equation*}
            \begin{split}
              \int_{T} J(\overarc v^\rho) \, \d x & = \int_{T^\rho} J(\overarc v^\rho) \, \d x  = \int_{T} J(\overarc v^1) \, \d x = \int_{T} \curl(j(\overarc v^1)) \, \d x =  \int_{\de T} j(\overarc v^1) \cdot \tau \, \d \H^1 = \frac{1}{2}\int_{\de T} \nabla \mathring \phi \cdot \tau \, \d \H^1 \\
            &   = \frac{1}{2} \Big( \int_{[\e \ell_1, \e \ell_2]} \frac{\d^e \varphi(\e \ell_1, \e \ell_2) }{\e} \, \d \H^1 +  \int_{[\e \ell_2, \e \ell_3]} \frac{\d^e \varphi(\e \ell_2, \e \ell_3) }{\e} \, \d \H^1 +  \int_{[\e \ell_3, \e \ell_1]} \frac{\d^e \varphi(\e \ell_3, \e \ell_1) }{\e} \, \d \H^1 \Big)   = \pi z \, .
            \end{split}
        \end{equation*}
        } 
    
        The map $\overarc v$ is well-defined. Indeed, it satisfies in any case the following property: if $\varphi \colon \L_\e \to \R$ is such that $v(x) = \exp(\iota \varphi(x))$ for $x \in \L_\e$, then $\overarc v(\e \ell_1 + s (\e \ell_2 - \e \ell_1)) = \exp(\iota \varphi(\e \ell_1)) \exp(\iota s \d^e \varphi(\e \ell_1, \e \ell_2))$ for every $\e \ell_1, \e \ell_2 \in \L_\e$ with $|i-j| = 1$. The curve $s \in [0,1] \mapsto \exp(\iota \varphi(\e \ell_1)) \exp(\iota s \d^e \varphi(\e \ell_1, \e \ell_2))$ parametrizes a geodesic arc in $\S^1$ that connects~$v(\e \ell_1)$ to~$v(\e \ell_2)$.  
    \end{remark}

\subsection*{Flat convergence} We recall here the notion of convergence relevant for the discrete vorticity of spin fields and for Jacobian of maps.  Given a distribution $T \in \mathcal{D}'(U)$, we define its {\em flat norm}\footnote{The name comes from the theory of currents. Interpreting $T \in \mathcal{D}'(U)$ as a 0-current, its flat norm is given by $\mathbb{F}(T) := \inf \{ \mathbb{M}(R) + \mathbb{M}(\de S) : R + \de S = T\}$, where $\mathbb{M}(\cdot)$ denotes the mass. Then, it holds true that $\mathbb{F}(\cdot) = \|\cdot\|_{\mathrm{flat},U}$ (see \cite[4.1.12]{Federer}).} by   
\begin{equation} \label{def:flat}
    \| T \|_{\mathrm{flat},U} := \sup \{ \langle T, \psi \rangle : \psi \in C^\infty_c(U) \, , \ \|\psi\|_{L^\infty(U)} \leq 1 \, , \ \|\nabla \psi\|_{L^\infty(U)}  \leq 1 \} 
\end{equation}   
If $ \| T \|_{\mathrm{flat},U} < \infty $, then the duality $\langle T, \psi \rangle$ can be extended to Lipschitz functions with compact support $\psi \in C^{0,1}_c(U)$. If $T_n$ is a sequence of distributions  such that $\| T_n \|_{\mathrm{flat},U} \to 0$, then $\langle T_n, \psi \rangle \to 0$ for every $\psi \in C^{0,1}_c(U)$.

\subsection*{Lifting of discrete spin fields}  In this subsection we discuss the conditions sufficient to define the lifting of a discrete spin field. 

\begin{lemma} \label{lemma:lifting in annulus}
    Let $x_0 \in \R^2$,  let $v \in \SF_\e$, and let $\overarc v$ be defined as in Remark~\ref{rmk:arcwise interpolation}. Let $0 < r < R$ and assume that $\mu_v(B_{r}(x_0)) = 0$ and $|\mu_{v}|(T)= 0$ for every $T \in \T_\e(\R^2)$ such that $T \cap \A{r}{R}(x_0) \neq 0$.  Then there exists $\phi \in W^{1,\infty}(\A{r}{R}(x_0))$ such that $\overarc v(x) = \exp(\iota \phi(x))$ for all $x \in \A{r}{R}(x_0)$ and 
    \begin{equation} \label{eq:energy of lifting and of v}
         |\nabla \phi(x)|  =  |\nabla \overarc v(x)| \quad \text{ for a.e.\ } x \in \A{r}{R}(x_0) \,. 
    \end{equation}
\end{lemma}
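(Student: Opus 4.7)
The hypotheses ensure that no triangle intersecting $\A{r}{R}(x_0)$ carries vorticity, so by Remark~\ref{rmk:arcwise interpolation}(2) the map $\overarc v$ is Lipschitz on a neighborhood of $\overline{\A{r}{R}(x_0)}$ with values in $\S^1$. The only obstruction to writing $\overarc v = \exp(\iota \phi)$ globally on this non--simply connected domain is the degree around the inner hole. The plan is to show that this degree vanishes, invoke the standard lifting theorem for continuous $\S^1$-valued maps on an annulus, and then promote continuity of $\phi$ to $W^{1,\infty}$-regularity.

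\textbf{Vanishing of the degree.} Pick any $\rho \in (r,R)$. Since $J(\overarc v) = \pi \mu_v$ on $\R^2$ by Remark~\ref{rmk:arcwise interpolation}(3) and $\overarc v$ is smooth near $\de B_\rho(x_0)$, testing the distributional identity against a radial Lipschitz cutoff approximating $\mathbf{1}_{B_\rho(x_0)}$ and comparing the resulting boundary integral with~\eqref{eq:degree for H1} gives
\begin{equation*}
\deg(\overarc v, \de B_\rho(x_0)) \;=\; \frac{1}{\pi} J(\overarc v)(B_\rho(x_0)) \;=\; \mu_v(B_\rho(x_0)) \, .
\end{equation*}
Now decompose $\mu_v(B_\rho(x_0)) = \mu_v(B_r(x_0)) + \mu_v(B_\rho(x_0) \setminus \overline{B_r(x_0)})$. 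The first term is zero by assumption. In the second term, every triangle whose barycenter lies in $B_\rho(x_0) \setminus \overline{B_r(x_0)}$ necessarily meets $\A{r}{R}(x_0)$, hence carries zero vorticity by the second hypothesis. Thus the total degree is zero, and by the homotopy invariance~\eqref{eq:degree independent} this conclusion is independent of $\rho$.

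\textbf{Global lift and gradient identity.} Since $\pi_1(\A{r}{R}(x_0)) \cong \Z$ is generated by $\de B_\rho(x_0)$ and the degree around this generator vanishes, standard lifting theory produces a continuous $\phi \colon \A{r}{R}(x_0) \to \R$ with $\overarc v = \exp(\iota \phi)$ pointwise. Using local branches of $\arg$ on neighborhoods on which $\overarc v$ stays within a half-circle of $\S^1$, the Lipschitz regularity of $\overarc v$ transfers to $\phi$, yielding $\phi \in W^{1,\infty}(\A{r}{R}(x_0))$. Differentiating the pointwise identity gives $\nabla \overarc v = \iota \, \overarc v \, \nabla \phi$ almost everywhere, and since $|\overarc v| = 1$ this immediately yields~\eqref{eq:energy of lifting and of v}. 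The main nontrivial step is the degree computation: it is precisely the cancellation of the vortices inside $B_r(x_0)$ encoded by the first hypothesis, combined with the absence of vortices in the annulus itself, that makes the global lift possible.
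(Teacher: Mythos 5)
Your proof is correct, and it takes a genuinely different (and somewhat slicker) route than the paper's. The paper proceeds constructively: it covers the annulus by two simply connected sectors $S_{r,R}^{0,\frac{3\pi}{2}}$ and $S_{r,R}^{\pi,\frac{5\pi}{2}}$, lifts $\overarc v$ separately on each (which is automatic by simple connectedness), and then shows that the two liftings agree up to a \emph{single} integer multiple of $2\pi$ on both overlap components. The latter compatibility is checked by choosing a closed path $\gamma$ along lattice edges encircling $B_r(x_0)$ and applying Stokes' theorem to get $2\pi(z_2-z_1)=2\int_U J(\overarc v)=2\pi\mu_v(U)=0$, where the hypothesis $\mu_v(B_r(x_0))=0$ enters. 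You instead compute $\deg(\overarc v,\de B_\rho(x_0))$ directly from the identity $J(\overarc v)=\pi\mu_v$ of Remark~\ref{rmk:arcwise interpolation} and the hypotheses, then invoke the standard lifting theorem for continuous $\S^1$-valued maps on an annulus of degree zero. Both arguments ultimately verify the same topological obstruction vanishes, but yours packages the verification more abstractly. Two small remarks on your write-up: (a) the decomposition $\mu_v(B_\rho)=\mu_v(B_r)+\mu_v(B_\rho\sm\overline{B_r})$ omits $\mu_v(\de B_r)$; this term also vanishes by the same reasoning (any triangle with barycenter on $\de B_r$ meets $\A{r}{R}(x_0)$ for $\e<R-r$), but it is worth saying so; (b) when passing from the Jacobian measure to the degree you rely implicitly on $J(\overarc v)(\de B_\rho)=0$, which holds because $\overarc v$ is Lipschitz near $\de B_\rho$ — again correct, but deserves a word.
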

\begin{proof}
    We assume, without loss of generality, that $x_0 = 0$ (the arguments in the proof will never use the fact that $0 \in \L_\e$).  From the fact that $|\mu_{v}|(T)= 0$ for every $T \in \T_\e(\R^2)$ such that $T \cap \A{r}{R}  \neq 0$ and by the definition of $\overarc v$ in Remark~\ref{rmk:arcwise interpolation}, we have that $\overarc v$ is a continuous function in $\A{r}{R}$. We write the annulus $\A{r}{R}$ as the union of the two simply connected sets $S_{r,R}^{0,\frac{3\pi}{2}}$ and $S_{r,R}^{\pi,\frac{5\pi}{2}}$ (see Figure~\ref{fig:lifting}), where
    \begin{equation*}
        S_{r,R}^{\theta_1,\theta_2}  := \{ (\rho \cos \theta, \rho \sin \theta) : \rho \in (r,R) \, , \ \theta \in (\theta_1,\theta_2) \} \,.
    \end{equation*}
    By the simple connectedness of $S_{r,R}^{0,\frac{3\pi}{2}}$ and $S_{r,R}^{\pi,\frac{5\pi}{2}}$, there exist two continuous functions $\phi \colon S_{r,R}^{0,\frac{3\pi}{2}}  \to \R$ and $\phi' \colon S_{r,R}^{\pi,\frac{5\pi}{2}} \to \R$ such that $\overarc v(x) = \exp(\iota \phi(x))$ for $x \in S_{r,R}^{0,\frac{3\pi}{2}}$ and $\overarc v(x) = \exp(\iota \phi'(x))$ for $x \in S_{r,R}^{\pi,\frac{5\pi}{2}}$. We shall prove that $\phi$ and $\phi'$ coincide (up to translating $\phi'$ of an integer multiple of $2\pi$), so that a unique lifting is defined in the annulus~$\A{r}{R}$.

    \begin{figure}[H]
        \includegraphics{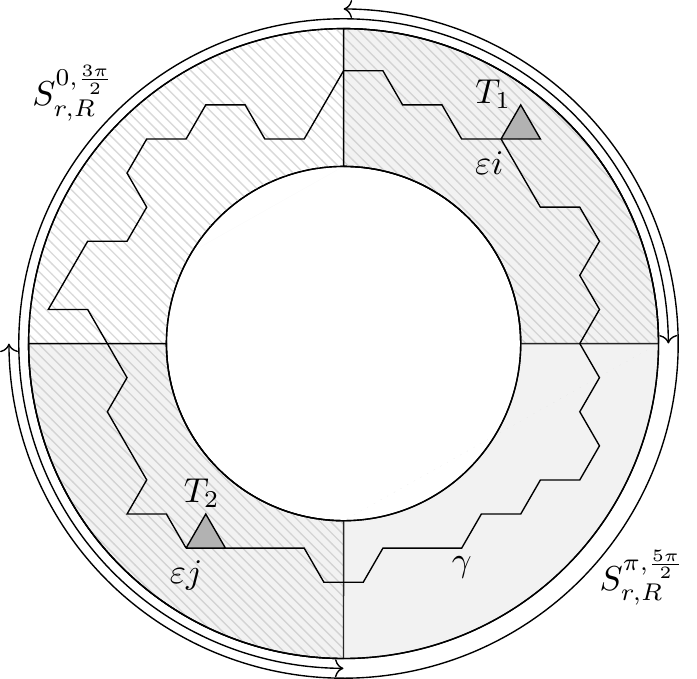}
            \caption{Example of a path $\gamma$ used in the proof.}
            \label{fig:lifting}
    \end{figure}

    We observe that, since $\overarc v \in W^{1,\infty}(\A{r}{R};\S^1)$, also $\phi \in W^{1,\infty}(S_{r,R}^{0,\frac{3\pi}{2}})$ and $\phi' \in W^{1,\infty}(S_{r,R}^{\pi,\frac{5\pi}{2}})$. Moreover,  by the chain rule,
    \begin{equation*} %\label{eq:chain rule for lifting}
        2 j(\overarc v)(x) = \nabla \phi(x) \quad \text{for } x \in S_{r,R}^{0,\frac{3\pi}{2}} \, , \quad 2 j(\overarc v)(x) = \nabla \phi'(x) \quad \text{for } x \in S_{r,R}^{\pi,\frac{5\pi}{2}} \, .
    \end{equation*}
    
    By the uniqueness of the lifting up to integer multiples of $2 \pi$, there exist $z_1 \in \Z$ and $z_2 \in \Z$ such that 
    \begin{equation*}
        \phi(x)  = \phi'(x) + 2 \pi z_1 \, , \text{ for every } x \in S_{r,R}^{0,\frac{\pi}{2}}  \, , \quad 
        \phi(x)  = \phi'(x) + 2 \pi z_2 \, , \text{ for every } x \in S_{r,R}^{\pi,\frac{3\pi}{2}} \, .
    \end{equation*}
    Let us prove that $z_1 = z_2$ by exploiting the assumption $\mu_v(B_{r}) = 0$. Let $T_1 \in \T_\e(S_{r,R}^{0,\frac{\pi}{2}})$ with a vertex $\e i$ and let $T_2 \in \T_\e(S_{r,R}^{\pi,\frac{3\pi}{2}})$ with a vertex $\e j$. Let $\gamma \colon [0,1] \to \A{r}{R}$ be a path such that $\supp(\gamma)$ is the union of edges in the triangular lattice, specifically, $\supp(\gamma) = \bigcup_{h=1}^{M} [\e \ell_{h-1}, \e \ell_{h}]$ with $|\ell_{h} - \ell_{h-1}| = 1$. Assume that $\e \ell_0 = \e \ell_{M} = \e i$, $\e \ell_N = \e j$ with $N < M$, $\bigcup_{h=1}^{N} [\e \ell_{h-1}, \e \ell_{h}] \subset S_{r,R}^{0,\frac{3\pi}{2}}$, and $\bigcup_{h=N+1}^{M} [\e \ell_{h-1}, \e \ell_{h}] \subset  S_{r,R}^{\pi,\frac{5\pi}{2}}$. Moreover, assume that~$\gamma$ is the oriented boundary of an open set $U \subset B_{R}$ with $0 \in U$ (See Figure~\ref{fig:lifting}). Then, by Stokes' Theorem, by the definition of Jacobian, and by Remark~\ref{rmk:arcwise interpolation} we infer that
    \begin{equation*}
        \begin{split}
        2 \pi (z_2 - z_1) & = \phi'(\e i) - \phi(\e i) + 2 \pi z_2 = \phi'(\e i) - \phi'(\e j) + \phi(\e j) - \phi(\e i)\\
         & = \sum_{h=1}^N \phi'(\e \ell_h) - \phi'(\e \ell_{h-1}) + \sum_{h=N+1}^{M} \phi(\e \ell_h) - \phi(\e \ell_{h-1}) \\
         & = \sum_{h=1}^N \int_{[\e \ell_{h-1}, \e \ell_{h}]} \nabla \phi'(x) \cdot \tau(x) \, \d \H^1(x) + \sum_{h=N+1}^{M}\int_{[\e \ell_{h-1}, \e \ell_{h}]} \nabla \phi(x) \cdot \tau(x) \, \d \H^1(x) \\
         & = 2 \sum_{h=1}^{M} \int_{[\e \ell_{h-1}, \e \ell_{h}]} j(\overarc v)(x) \cdot \tau(x) \, \d \H^1(x) = 2 \int_\gamma j(\overarc v) \cdot \tau  \, \d \H^1 = 2 \int_U J(\overarc v)\, \d x = 2 \pi \mu(U) \, .
        \end{split} 
    \end{equation*}
    Since $\mu(U) = 0$, we conclude that $z_1 = z_2$. Therefore, we can extend $\phi$ to the whole annulus $\A{r}{R}$ by setting $\phi(x) := \phi'(x) + 2\pi z_1$ for every $x \in S_{r,R}^{\pi,\frac{5\pi}{2}}$. It satisfies~\eqref{eq:energy of lifting and of v} by the chain rule.
\end{proof}

\begin{remark} \label{rmk:properties of lifting}
    We point out some properties of the lifting $\phi$ of $\overarc v$ provided by Lemma~\ref{lemma:lifting in annulus}. 
    
    The first property is the following:  
    \begin{equation} \label{eq:geodesic distance is angle}
        \d_{\S^1}(v(\e i), v(\e j)) = |\phi(\e i) - \phi(\e j)| \, ,   \quad \text{for every } \e i,\e j \in \L_\e \cap T \text{ with } T \in \T_\e(\A{r}{R}(x_0)) \, . 
    \end{equation}
    Indeed, on the one hand, we have that $\overarc v(\e i + s (\e j- \e i)) = \exp(\iota \phi(\e i + s (\e j- \e i)))$  for $s \in [0,1]$. On the other hand, since $\mu_v(T) = 0$, by Remark~\ref{rmk:arcwise interpolation}, the map $\overarc v$ is given by $\overarc v(\e i + s (\e j- \e i)) = \exp(\iota \phi(\e i)) \exp(\iota s \d^e\phi(\e i, \e j))$ for $s \in [0,1]$. Hence there exists $z \in \Z$ such that
    \begin{equation*}
        \phi(\e i + s (\e j- \e i)) = \phi(\e i) + s \, \d^e \phi(\e i, \e j) + 2 \pi z \, , \quad \text{for every } s \in [0,1] \, .
    \end{equation*}
    Evaluating the previous formula at $s=0$, we infer that $z=0$; evaluating it at $s=1$, we obtain that $\phi(\e j) - \phi(\e i) = \d^e \phi(\e i, \e j) \in [-\pi,\pi]$. In particular, $\d_{\S^1}(v(\e i), v(\e j)) = |\phi(\e j) - \phi(\e i)|$, which concludes the proof of~\eqref{eq:geodesic distance is angle}.
    
    The second property is the following: let $T = \conv \{\e i, \e j, \e k\} \in \T_\e(\A{r}{R}(x_0))$. Then $\phi|_T$ is an affine function. Indeed, the previous property yields that $\phi(\e j) = \phi(\e i) + \d^e \phi(\e i, \e j)$, $\phi(\e k) = \phi(\e j) + \d^e \phi(\e j, \e k)$. Since $\mu_v(T) = 0$, by the definition of $\overarc v$ in Remark~\ref{rmk:arcwise interpolation} we have that $\overarc v(x) = \exp(\hat \phi(x))$ for $x \in T$, where $\hat \phi$ is the affine function in $T$ such that $\hat \phi(x) = \phi(x)$ for every vertex $x \in \{\e i,\e j,\e k\}$. Then $\hat \phi(x) = \phi(x)$ for every $x \in T$. In particular, from~\eqref{eq:geodesic distance is angle} and~\eqref{eq:energy of lifting and of v}, we deduce that
    \begin{equation}\label{est:XY-Gradient}
    \frac{1}{\e^2}XY_\e(v,T)\leq\sqrt{3}\int_{T}|\nabla\phi|^2\dx=\sqrt{3}\int_{T}|\nabla\overarc{v}|^2\dx\leq\frac{\pi^2}{4\e^2}XY_\e(v,T)\,,
    \end{equation}
    where the last inequality follows from Remark~\ref{rmk:arcwise interpolation}.
\end{remark}

\subsection*{Extension of discrete spin fields} We prove now an extension lemma. It is the discrete version of a standard result in the continuum, which states the following: if $v \in H^1(\A{r}{R}(x_0);\S^1)$ satisfies $\deg(v,\de B_\rho(x_0)) = 0$, then it can be extended to a $\overline v \in H^1(B_R(x_0;\S^1))$ such that $\int_{B_R(x_0)} |\nabla \overline v|^2 \d x \leq C \int_{\A{r}{R}(x_0)} |\nabla v|^2 \d x$. In the proof we exploit the interpolation introduced in Remark~\ref{rmk:arcwise interpolation}. 

\begin{lemma}\label{lemma:extension}
There exists a universal constant $C_0>0$ such that the following holds true. Let $\e>0$, $x_0\in\R^2$, and $R>r>\e$, let $C_1>1$ and $v_\e \in \SF_\e$ with  $XY_\e(v_\e,\A{r}{R}(x_0)) \leq C_1 \e^2$, $\mu_{v_\e}(B_{r}(x_0)) = 0$, and $|\mu_{v_\e}|(T)= 0$ for every $T \in \T_\e(\R^2)$ such that $T \cap \A{r}{R}(x_0) \neq 0$. Then there exists  $\overline v_\e \in \SF_\e$  such that for $\e<\frac{R-r}{C_0C_1} \big(\frac{2\pi}{3}\big)^2$:
    \begin{itemize}
        \item  $\overline v_\e = v_\e$ on $\L_\e \cap(\R^2 \sm \overline B_{\frac{r+R}{2}}(x_0))$;  
        \item $|\mu_{\overline v_\e}|( B_{R}(x_0)) = 0$;
        \item $XY_\e(\overline v_\e, B_{R}(x_0)) \leq C(r,R) XY_\e(v_\e, \A{r}{R}(x_0))$, where $C(r,R) = C_0 \frac{R}{R-r}$.  
    \end{itemize}
\end{lemma}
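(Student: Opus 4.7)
The plan is to lift $v_\e$ on the annulus, extend the lifting harmonically to all of $B_R(x_0)$, and exponentiate back to define $\overline v_\e$. First, I would apply Lemma~\ref{lemma:lifting in annulus} to produce $\phi \in W^{1,\infty}(\A{r}{R}(x_0))$ with $\overarc{v_\e} = \exp(\iota\phi)$ and $|\nabla \phi| = |\nabla \overarc{v_\e}|$ a.e., and combine \eqref{est:XY-Gradient} with the hypothesis $XY_\e(v_\e,\A{r}{R}(x_0)) \leq C_1\e^2$ to get $\int_{\A{r}{R}(x_0)} |\nabla \phi|^2 \,\d x \leq \frac{\pi^2 C_1}{4}$. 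A Fubini/mean-value argument in the radial variable on $(r,\tfrac{r+R}{2})$ then provides a radius $\rho \in (r,\tfrac{r+R}{2})$ with
\begin{equation*}
\int_{\partial B_\rho(x_0)} |\nabla_\tau \phi|^2 \,\d \H^1 \leq \frac{2}{R-r}\int_{\A{r}{R}(x_0)} |\nabla \phi|^2 \,\d x\,.
\end{equation*}

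Next, I would define $\tilde \phi \colon B_R(x_0) \to \R$ by $\tilde \phi := \phi$ on $\A{\rho}{R}(x_0)$ and by the harmonic extension of $\phi|_{\partial B_\rho(x_0)}$ inside $B_\rho(x_0)$. The classical Fourier-series estimate for harmonic extensions on the disk gives $\int_{B_\rho(x_0)} |\nabla \tilde \phi|^2 \,\d x \leq \rho \int_{\partial B_\rho(x_0)} |\nabla_\tau \phi|^2 \,\d \H^1$, and combining with the previous step yields
\begin{equation*}
\int_{B_R(x_0)} |\nabla \tilde \phi|^2 \,\d x \leq \Big(1+\frac{2R}{R-r}\Big)\int_{\A{r}{R}(x_0)} |\nabla \phi|^2 \,\d x\leq \frac{3R}{R-r}\cdot\frac{\pi^2 C_1}{4}\,.
\end{equation*}
I would then set $\overline v_\e(\e i) := \exp(\iota \tilde \phi(\e i))$ for $\e i \in \L_\e \cap B_R(x_0)$ and $\overline v_\e := v_\e$ elsewhere. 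The first bullet follows from $\tilde \phi = \phi$ on $\A{\rho}{R}(x_0) \supset \A{(r+R)/2}{R}(x_0)$ together with $\exp(\iota \phi) = \overarc{v_\e} = v_\e$ on the lattice.

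The third bullet (energy bound) follows from the pointwise Lipschitz estimate $|\exp(\iota a) - \exp(\iota b)| \leq |a-b|$, which gives $|\nabla \widehat{\overline v_\e}|^2 \leq |\nabla \widehat{\tilde\phi}|^2$ on each triangle for the piecewise affine interpolations on $\L_\e$; then Remark~\ref{rmk:XY is integral} combined with the $H^1$-stability of nodal interpolation on the regular triangular mesh gives
\begin{equation*}
XY_\e(\overline v_\e, B_R(x_0)) = \sqrt 3 \,\e^2 \int_{B_R(x_0)} |\nabla \widehat{\overline v_\e}|^2 \,\d x \leq C\e^2 \int_{B_R(x_0)} |\nabla \tilde \phi|^2 \,\d x \leq \frac{C_0 R}{R-r}\,XY_\e(v_\e, \A{r}{R}(x_0))\,.
\end{equation*}

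The main obstacle is the second bullet, the vortex-freeness $|\mu_{\overline v_\e}|(B_R(x_0)) = 0$. For triangles entirely contained in $\A{\rho}{R}(x_0)$ it is inherited from $v_\e$ by hypothesis, since $\overline v_\e = v_\e$ there. For triangles intersecting $B_\rho(x_0)$, since $\tilde\phi$ is single-valued and continuous on $B_R(x_0)$, using $\tilde\phi$ as a lift makes the vorticity on each such triangle vanish provided $|\tilde \phi(\e i) - \tilde \phi(\e j)| < \pi$ on every lattice edge. To establish this edge-wise strict bound I would combine (a) a one-dimensional Sobolev embedding on the circle $\partial B_\rho(x_0)$ to control $\|\phi - \bar \phi\|_{L^\infty(\partial B_\rho(x_0))}$ by a quantity of order $\sqrt{R C_1/(R-r)}$, (b) the maximum principle for the harmonic extension (together with interior gradient estimates for harmonic functions) to propagate the $L^\infty$ control to $B_\rho(x_0)$, and (c) the smallness condition $\e < \frac{R-r}{C_0 C_1}(\tfrac{2\pi}{3})^2$ to convert these continuum bounds into a strict per-edge inequality below $\pi$. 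This is the delicate quantitative step, where the precise interplay of $\e$, $R-r$ and the energy budget $C_1$ enters and where the numerical factor $(\tfrac{2\pi}{3})^2$ appears.
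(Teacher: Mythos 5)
Your overall skeleton -- lift on the annulus via Lemma~\ref{lemma:lifting in annulus}, choose a good radius $\rho\in(r,\tfrac{r+R}{2})$, extend the lift to the disk, exponentiate and sample on $\L_\e$, then verify the three bullets -- matches the paper. But two of your key steps, as stated, have genuine gaps, and a third departs from the paper in a way that creates trouble.

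\emph{Gap 1: ``$H^1$-stability of nodal interpolation'' is not a usable black box here.} Nodal interpolation is \emph{not} $H^1$-stable for generic $H^1$ functions in 2D (point values are not even defined). You would need to invoke the higher regularity $H^{3/2}$ of the harmonic extension and a Bramble--Hilbert-type estimate, and then argue separately that the scale factors make the interpolation error term lower order; you never do this, and on straddling triangles (where $\tilde\phi$ is only $H^1$ across $\partial B_\rho$ because the normal derivative jumps) the argument breaks down entirely. The paper instead replaces pointwise sampling by an \emph{averaged} sampling: Lemma~\ref{lemma:sampling} chooses a shift $\overline x_\e$ for which the discrete $XY$ energy of the sampled field is controlled by the continuum Dirichlet energy. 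This is precisely the device that makes the argument rigorous for functions with only $H^1$ control.

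\emph{Gap 2: the vortex-free step (a)--(c) does not work as proposed.} The maximum principle gives $\|\tilde\phi - \bar\phi\|_{L^\infty(B_\rho)}\le\|\phi-\bar\phi\|_{L^\infty(\partial B_\rho)}$, a quantity of order $\sqrt{RC_1/(R-r)}$ which does not go to zero with $\e$; interior gradient estimates for harmonic functions degenerate like $\mathrm{dist}(\cdot,\partial B_\rho)^{-1}$ and hence cannot control the gradient on the lattice-size shell around $\partial B_\rho$, which is exactly where the per-edge increments must be checked. What is actually needed is a $\tfrac12$-Hölder bound on $\tilde\phi$ \emph{up to the boundary} with constant of order $\sqrt{C_1/(R-r)}$; for the harmonic extension this is a boundary Schauder estimate, a much stronger tool than what you invoke, while for the paper's $1$-homogeneous extension it is read off from the explicit formula together with the Hölder bound on the trace (this is where the paper's Cases 1--3 and the numerical factor $(\tfrac{2\pi}{3})^2$ enter).

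\emph{Gap 3: a single mean-value radius is not enough.} Your $\rho$ controls only $\int_{\partial B_\rho}|\nabla_\tau\phi|^2$, but the per-edge increments near $\partial B_\rho$ involve also radial differences of $\phi$ over distance $\e$. The paper first selects a good annulus $A^{k_\e}$ of width $\gtrsim\e$ where the \emph{full} Dirichlet energy is $O(\e^3/(R-r))\cdot XY_\e$, from which it deduces that all neighboring lattice values of $\phi$ inside $A^{k_\e}$ differ by $O(\sqrt{C_1\e/(R-r)})$, and only then picks $\rho_\e$ inside that annulus. Without this intermediate step your Case-1-type estimate (control of $\phi$ on a lattice neighborhood of $\partial B_\rho$) is missing.

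In short, your harmonic-extension-plus-nodal-interpolation route is a sensible alternative in spirit, but to make it work you would have to (i) prove that the nodal interpolant is energy-stable using the $H^{3/2}$ regularity of the extension and a careful treatment of the boundary layer (the paper instead uses the sampling Lemma~\ref{lemma:sampling}), and (ii) replace the $L^\infty$-plus-interior-estimates argument by a genuine boundary Hölder estimate, preceded by a selection of a good annulus rather than a good circle. The paper's $1$-homogeneous extension with explicit radial and tangential components makes both of these checks elementary, which is why it is the preferred construction.
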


\begin{remark}\label{rmk:extension}  If there exists $\beta>1$ such that $R=\beta r$, then the extension constant $C(r,R)$  given in Lemma~\ref{lemma:extension} is independent of $r$. Indeed, $C(r,R) = C_0 \frac{\beta}{\beta-1} =: C(\beta)$.  Moreover, the extension $\overline v_\e$ satisfies the properties in the statement for $\frac{\e}{r} <\frac{\beta-1}{C_0C_1} \big(\frac{2\pi}{3}\big)^2$.  
    
    The lemma is stated for $\e$ fixed, thus the result can be applied also when $r = r_\e$ and $R = R_\e$.
    
    The particular geometry of the triangular lattice does not play a major role in the proof: it is crucial that \ref{est:XY-Gradient} holds true. For instance an analogous result holds true in the case of the square lattice. 
\end{remark}

\begin{proof}[Proof of Lemma \ref{lemma:extension}] 
    We assume, without loss of generality, that $x_0 = 0$  (the arguments in the proof will never use the fact that $0 \in \L_\e$) and $2\e \leq \frac{R-r}{36}$  (Note that this is {\em a fortiori} satisfied if~$C_0$ is chosen sufficiently large in $\e <\frac{R-r}{C_0C_1} \big(\frac{2\pi}{3}\big)^2 < \frac{R-r}{C_0} \big(\frac{2\pi}{3}\big)^2$).  Let  $\overarc v_\e \in W^{1,\infty}(\A{r}{R};\S^1)$  be defined as in Remark~\ref{rmk:arcwise interpolation}. By Lemma~\ref{lemma:lifting in annulus} there exists $\phi_\e \in W^{1,\infty}(\A{r}{R})$ such that $\overarc v_\e(x) = \exp(\iota \phi_\e(x))$ for all $x \in \A{r}{R}$ and~\eqref{eq:energy of lifting and of v} holds true. To define $\overline v_\e$, we start by extending $\phi_\e$ from the annulus $\A{r}{R}$ to a function $\phi_\e'$ on the ball~$B_{R}$ via a 1-homogeneous extension that starts from a layer of triangles suitably chosen inside~$\A{r}{R}$. More precisely, we fix  $r < r' < R' < \frac{r+R}{2} <  R$ such that $R'-r' \geq \frac{R-r}{4}$ and we subdivide $\A{r'}{R'}$ into the union of annuli 
    \begin{equation*}
        \A{r'}{R'} = \bigcup_{k=1}^{K_\e} A^k \, , \quad A^k := \A{r_{k-1}}{r_k} \, , \quad r_k := r'+k\frac{R'-r'}{K_\e} \, ,\quad  K_\e := \lfloor \tfrac{R'-r'}{9\e} \rfloor  \, .
    \end{equation*} 

    Note that $K_\e \geq \frac{R-r}{36 \e}-1\geq\frac{1}{2}\frac{R-r}{36\e} \geq 1$ and the width of each annulus $A^k$ is $\frac{R'-r'}{K_\e} \geq 9 \e$. For every $\e$ we find~$k_\e \in \{1,\dots, K_\e\}$ such that 
    \begin{equation}  \label{eq:averaging on annulus}
        \begin{split}
            C_1 & \geq \frac{1}{\e^2} XY_\e(v_\e,\A{r}{R})  \geq  C  \int_{\A{r'}{R'}} |\nabla \overarc v_\e|^2 \, \d x  =  C \sum_{k=1}^{K_\e} \int_{A^k} |\nabla \overarc v_\e|^2 \, \d x \\
            & \geq  C K_\e \int_{A^{k_\e}} |\nabla \overarc v_\e|^2 \, \d x  \geq C\frac{R-r}{\e^3} XY_\e(v_\e, A^{k_\e}) \, ,
        \end{split}
    \end{equation}
     where the second and the last inequality follow from~\eqref{est:XY-Gradient}. 
    Recalling the first property in Remark~\ref{rmk:properties of lifting}, an immediate consequence of the previous inequality is that
    \begin{equation}  \label{eq:close angles}
        |\phi_\e(\e i) - \phi_\e(\e j)|^2  \leq C |v_\e(\e i) - v_\e(\e j)|^2 \leq \frac{C}{\e^2}  XY_\e(v_\e, A^{k_\e}) \leq
          \frac{C}{(R-r)\e}XY_\e(v_\e,\A{r}{R})\leq  C\frac{C_1\e}{R-r}  
    \end{equation}
    for every $\e i, \e j \in A^{k_\e}$ with $|i-j|=1$.

Using that $\phi_\e \in W^{1,\infty}(\A{r}{R})$,\footnote{ Notice that $\phi_\e |_{\de B_\rho} \in W^{1,\infty}(\de B_\rho)$ for every $\rho \in (r',R')$: the function~$\phi_\e$ is piecewise affine by Remark~\ref{rmk:arcwise interpolation} and thus it is $C^1$ outside a finite union of segments, which intersect $\partial B_\rho$ only in a finite number of points (depending on~$\e$).  \label{foot:actually every radius}}   we find  $\rho_\e \in (r_{k_\e-1} + 4\e , r_{k_\e} -4\e )$  such that the restriction $y \in \de B_{\rho_\e} \mapsto \phi_\e |_{\de B_{\rho_\e}}(y)$ belongs to  $W^{1,\infty}(\de B_{\rho_\e})$,  $\nabla_{\de B_{\rho_\e}}(\phi_\e |_{\de B_{\rho_\e}})(y) = \nabla \phi_\e(y) \cdot \tau_{\de B_{\rho_\e}}(y)$ for $\H^1$-a.e.\ $y \in \de B_{\rho_\e}$, and
    \begin{equation*}
        \begin{split}
            \int_{A^{k_\e}} |\nabla \overarc v_\e|^2 \, \d x & = \int_{A^{k_\e}} |\nabla \phi_\e|^2 \, \d x \geq \int_{r_{k_\e-1} + 4\e }^{r_{k_\e} - 4 \e} \int_{\de B_{\rho}} |\nabla \phi_\e(y)\cdot \tau_{\de B_{\rho}}(y)|^2 \, \d \H^1(y) \, \d \rho \\
            & \geq  \Big( \frac{R'-r'}{K_\e} - 8\e \Big) \int_{\de B_{\rho_\e}} |\nabla \phi_\e(y)\cdot \tau_{\de B_{\rho_\e}}(y)|^2 \, \d \H^1(y)  \\
            & \geq  \e  \int_{\de B_{\rho_\e}} \big|\nabla_{\de B_{\rho_\e}}(\phi_\e |_{\de B_{\rho_\e}})(y)\big|^2 \, \d \H^1(y)   \, .
        \end{split}
    \end{equation*}
    % \footnote{The set of  $\rho \in (r_{k_\e-1} + 4\e , r_{k_\e} - 4\e)$  such that $\frac{R'-r'}{K_\e} \int_{\de B_{\rho}} |\nabla \phi_\e\cdot \tau_{\de B_\rho}|^2 \, \d \H^1 \leq \int_{r_{k_\e-1}}^{r_{k_\e}} \int_{\de B_{\rho}} |\nabla \phi_\e\cdot \tau_{\de B_\rho}|^2 \, \d \H^1 \, \d \rho$ has positive measure.}
    In particular, by the previous formula and by~\eqref{eq:averaging on annulus}, we infer that
    \begin{equation} \label{eq:bound on tangential derivative}
        \int_{\de B_{\rho_\e}} \big|\nabla_{\de B_{\rho_\e}}(\phi_\e |_{\de B_{\rho_\e}})\big|^2 \, \d \H^1  \leq   \frac{1}{\e^3 K_\e} XY_\e(v_\e,\A{r}{R})  \leq   \frac{C}{(R-r)\e^2} XY_\e(v_\e,\A{r}{R})\leq C\frac{C_1}{R-r}\,.
    \end{equation} 
    Setting $a_\e := \fint_{\de B_{\rho_\e}} \phi_\e \, \d \H^1$, by Poincare's Inequality on $\de B_{\rho_\e}$ we have that 
    \begin{equation} \label{eq:poincare on circle}
        \int_{\de B_{\rho_\e}} \big|\phi_\e|_{\de B_{\rho_\e}} - a_\e \big|^2 \, \d \H^1 \leq C \rho_\e^2 \int_{\de B_{\rho_\e}} \big|\nabla_{\de B_{\rho_\e}}(\phi_\e |_{\de B_{\rho_\e}}) \big|^2 \, \d \H^1 ,
    \end{equation}
    for a scale-independent constant $C$. By \eqref{eq:bound on tangential derivative} and the Sobolev Embedding Theorem in one dimension, we have that $\phi_\e$ is $\frac{1}{2}$-H\"older continuous and 
    \begin{equation} \label{eq:holder continuity}
        \begin{split}
        \|\phi_\e-a_\e\|_{L^\infty(\de B_{\rho_\e})}^2 & \leq  \frac{C\rho_\e}{(R-r)\e^2} XY_\e(v_\e,\A{r}{R})  \leq C\frac{C_1}{R-r} \rho_\e \,, \\
         \sup_{\substack{x,y \in \de B_{\rho_\e} \\ x \neq y}} \frac{|\phi_\e(x) - \phi_\e(y)|^2}{|x-y|} & \leq  \frac{C}{(R-r)\e^2} XY_\e(v_\e,\A{r}{R})  \leq C\frac{C_1}{R-r} \, . 
        \end{split}
    \end{equation}
    The function $\phi_\e|_{\de B_{\rho_\e}}$ is also Lipschitz continuous, but its Lipschitz constant might depend on $\e$. We define the auxiliary function $\phi_\e' \in W^{1,\infty}(B_R)$ via the 1-homogeneous extension
    \begin{equation*}
        \phi_\e'(x)  := \begin{cases}
            \phi_\e(x) \, , & \text{if } x \in \A{\rho_\e}{R} \, , \\
            a_\e + \frac{|x|}{\rho_\e} \big( \phi_\e\big( \frac{x}{|x|} \rho_\e \big)  - a_\e \big) \, , & \text{if } x \in \overline B_{\rho_\e} \, ,
        \end{cases} \quad  
        v_\e'(x)  := \exp( \iota \phi_\e'(x) ) \quad \text{for } x \in B_{R} \, .
    \end{equation*} 
    Note that $v_\e' = \overarc v_\e$ in $\A{\rho_\e}{R}$. To define the spin field $\overline v_\e$, we suitably sample $v_\e'$. Applying Lemma~\ref{lemma:sampling}  below,  we find $\overline x_\e \in \R^2$  with $|\overline x_\e| \leq \e$ such that  
    \begin{equation} \label{eq:good sampling}
        \frac{1}{\e^2} XY_\e(v_\e'(\, \cdot \, + \overline x_\e), B_{\rho_\e}) \leq C \int_{B_{R'}} |\nabla v_\e'|^2 \d x 
    \end{equation}
    and for $\e i \in \L_\e \cap B_R$ we set 
    \begin{equation*}
        \overline \phi_\e(\e i)  := \begin{cases}
             \phi_\e'(\e i)   \, , & \text{if } \e i \in \A{\rho_\e}{R} \, , \\
           \phi_\e'(\e i + \overline x_\e) \, , & \text{if } \e i \in \overline B_{\rho_\e} \, ,
        \end{cases} \quad  
        \overline v_\e(\e i)  := \exp( \iota \overline \phi_\e( \e i) ) \quad \text{for } \e i \in B_{R} \, .
    \end{equation*}
     We extend $\overline v_\e$ outside $B_R$ by setting $\overline v_\e(\e i) := v_\e(\e i)$ for $\e i \in \R^2 \sm B_R$. By construction we have that $\overline v_\e =  v_\e$ on $\L_\e \cap (\R^2 \sm \overline B_{\rho_\e})$  and thus on $\L_\e \cap (\R^2 \sm \overline B_{\frac{r+R}{2}})$.  
    
 Let us prove that $|\mu_{\overline v_\e}|( B_{R}) = 0$.  Let $T = \conv \{\e i, \e j, \e k\}$ be such that $T \cap B_{R} \neq \emptyset$. If $T \subset \R^2 \sm \overline B_{\rho_\e}$, there is nothing to prove, as $\overline v_\e = v_\e$ on $\L_\e \cap T$ and $|\mu_{v_\e}|(T)= 0$ (since $T \cap \A{r}{R} \neq \emptyset$). Let us thus assume that $T \cap \overline B_{\rho_\e} \neq \emptyset$ and let us prove that $|\overline \phi_\e(\e i) - \overline \phi_\e(\e j)|, |\overline \phi_\e(\e j) - \overline \phi_\e(\e k)|, |\overline \phi_\e(\e k) - \overline \phi_\e(\e i)| < \frac{2\pi}{3}$, which implies $|\mu_{\overline v_\e}|(T) = 0$. We only prove it for $|\overline \phi_\e(\e i) - \overline \phi_\e(\e j)|$, the other inequalities being analogous. We start by observing that    
\begin{equation} \label{eq:3 phi}
|\overline{\phi}_\e(\e i)- \overline{\phi}_\e(\e j)|^2 \leq 3|\phi_\e'(\e i+\overline{x}_\e)- \phi_\e'(\e i)|^2 + 3|\phi_\e'(\e j)- \phi_\e'(\e i)|^2 + 3|\phi_\e'(\e j+\overline{x}_\e)- \phi_\e'(\e j)|^2
\end{equation}
and $\e i,\e i + \overline{x}_\e, \e j, \e j + \overline{x}_\e \in \overline B_{\rho_\e +2\e} $.
 Therefore, to conclude it is enough to prove that for all $x,y \in \overline{B}_{\rho_\e +2\e}$ such that $|x-y| \leq \e$  we have
    \begin{equation}\label{ineq:phiprimehoelder}
  |\phi_\e'(x)-\phi_\e'(y)|^2 \leq \frac{C}{(R-r)\e}XY_\e(v_\e,\A{r}{R})\leq C\frac{C_1\e}{R-r}
    \end{equation}
    from which it follows that
    \begin{equation}\label{ineq:philess2pi3}
    |\overline{\phi}_\e(\e i)- \overline{\phi}_\e(\e j)|^2 \leq  9 C  \frac{C_1\e}{R-r} < \Big(\frac{2\pi}{3}\Big)^{\! 2}
    \end{equation}
    for $\e < \big(\frac{2\pi}{3}\big)^2 \frac{R-r}{C_0C_1}$ and $C_0 > 9C$. To prove \eqref{ineq:phiprimehoelder} we distinguish three cases.
    
        {\em Case 1}: $x,y \in \overline{B}_{\rho_\e+2\e}\setminus B_{\rho_\e}$. Since $B_{\rho_\e+4\e}\setminus \overline{B}_{\rho_\e-4\e} \subset A^{k_\e}$, we find  $T',T''  \in \mathcal{T}_\e(A^{k_\e})$ such that $x \in T'$, $y \in T''$,  and $T' \cap T'' \neq \emptyset$. Let $z \in T'\cap T''$. Since $\phi_\e|_{T'}$ and $\phi_\e|_{T''}$ are affine and using~\eqref{eq:close angles} we obtain that
        \begin{equation}\label{eq:07101415}
        | \phi_\e'(x) -  \phi_\e'(y)|^2 =  | \phi_\e(x) -  \phi_\e(y)|^2 \leq 2| \phi_\e(x) -  \phi_\e(z)|^2 + 2| \phi_\e(z) -  \phi_\e(y)|^2  \leq   \frac{C}{(R-r)\e}XY_\e(v_\e,\A{r}{R}) \,.
        \end{equation}
        
        {\em Case 2}: $x,y \in \overline{B}_{\rho_\e}$.  By~\eqref{eq:holder continuity} we get that 
   \begin{equation*} %\label{eq:07101416}
        \begin{split}
            & | \phi_\e'(x) -  \phi_\e'(y)|^2 =  \big|\tfrac{|x|}{\rho_\e} \big(\phi_\e\big( \tfrac{x}{|x|} \rho_\e \big)-a_\e \big) - \tfrac{|y|}{\rho_\e} \big(\phi_\e\big( \tfrac{y}{|y|} \rho_\e \big)-a_\e\big)\big|^2\\
            & \quad \leq 2 \big| \tfrac{|x|-|y|}{\rho_\e} \big|^2 \|\phi_\e-a_\e\|_{L^\infty(\de B_{\rho_\e})}^2 + 2 \tfrac{|x|^2}{\rho_\e^2} \big| \phi_\e\big( \tfrac{x}{|x|} \rho_\e \big) - \phi_\e\big( \tfrac{y}{|y|} \rho_\e \big) \big|^2 \\
            & \quad \leq \Big(\frac{\e^2}{\rho_\e} +\e\Big) \frac{C}{(R-r)\e^2}XY_\e(v_\e,\A{r}{R})  \leq   \frac{C}{(R-r)\e}XY_\e(v_\e,\A{r}{R})   \,.
        \end{split}
    \end{equation*}
        
          {\em Case 3}: $x \in  \overline{B}_{\rho_\e}$, $y \in \A{\rho_\e}{R}$. We find $z \in \partial B_{\rho_\e}$ such that $|x-z|+|y-z| = |x-y|$. Using Case~1 for $x,z$ and Case~2 for $z,y$ we obtain
          \begin{equation}\label{eq:07101417}
          |\phi_\e'(x) -\phi_\e'(y)|^2 \leq 2|\phi_\e'(x) -\phi_\e'(z)|^2 + 2|\phi_\e'(z) -\phi_\e'(y)|^2 \leq \frac{C}{(R-r)\e}XY_\e(v_\e,\A{r}{R})   \,.
          \end{equation}

    This concludes the proof of the fact that $|\mu_{\overline v_\e}|(T) = 0$. For the next estimates it is worth to mention that, as a byproduct of~\eqref{eq:3 phi} and \eqref{eq:07101415}--\eqref{eq:07101417}, we also obtain that
    \begin{equation}  \label{eq:T at boundary}
        \frac{1}{\e^2} XY_\e(\overline v_\e, T) \leq \frac{C}{(R-r)\e} XY_{\e}(v_\e,\A{r}{R}) \quad \text{for every } T \in \T_\e(B_R) \text{ such that } T \cap \de B_{\rho_\e} \neq \emptyset \, .
    \end{equation}  

    It remains to prove that $XY_\e(\overline v_\e, B_{R}) \leq C(r,R) XY_\e(v_\e, \A{r}{R})$.  First of all, we observe that~\eqref{eq:good sampling} and the definition of $v_\e'$ imply
    \begin{equation*}
        \frac{1}{\e^2}XY_\e(\overline v_\e, B_{\rho_\e}) \leq C \int_{B_{\rho_\e}} |\nabla v_\e'|^2 \d x + C \int_{\A{\rho_\e}{R'}} |\nabla \overarc v_\e|^2 \d x  \leq C \int_{B_{\rho_\e}} |\nabla v_\e'|^2 \d x +   \frac{C}{\e^2}XY_\e(v_\e, \A{r}{R})  \, .
    \end{equation*}
    Let us estimate $\int_{B_{\rho_\e}} |\nabla v_\e'|^2 \d x $. 
    Using that $\nabla \phi_\e'(x) = \tfrac{1}{\rho_\e}\tfrac{x}{|x|} \big( \phi_\e\big( \tfrac{x}{|x|} \rho_\e \big)  - a_\e \big) + \nabla \phi_\e\big( \tfrac{x}{|x|} \rho_\e \big)  \tfrac{\, x^\perp}{|x|}  \otimes \tfrac{\, x^\perp}{|x|}$, by Fubini's Theorem, by~\eqref{eq:poincare on circle}, and by~\eqref{eq:bound on tangential derivative}, we obtain that
    \begin{equation*}
        \begin{split}
         \int_{B_{\rho_\e}} |\nabla v_\e'|^2 \, \d x  & = \int_{B_{\rho_\e}} |\nabla \phi_\e'|^2 \, \d x  = \int_{B_{\rho_\e}} \big|\nabla \phi_\e'(x) \cdot \tfrac{x}{|x|}\big|^2 + \big|\nabla \phi_\e'(x)  \cdot \tfrac{\, x^\perp}{|x|}\big|^2\, \d x  \\
            &   = \int_{B_{\rho_\e}} \tfrac{1}{\rho_\e^2} \big|  \phi_\e\big( \tfrac{x}{|x|} \rho_\e \big)  - a_\e  \big|^2 + \big|\nabla \phi_\e\big( \tfrac{x}{|x|} \rho_\e \big)  \cdot \tfrac{\, x^\perp}{|x|}\big|^2\, \d x  \\
            %& = \int_0^{\rho_\e}\int_{0}^{2\pi } \Big( \tfrac{1}{\rho_\e^2} \big|  \phi_\e\big(  \rho_\e \cos \theta, \rho_\e \sin \theta \big)  - a_\e  \big|^2 + \big|\nabla \phi_\e\big( \rho_\e \cos \theta, \rho_\e \sin \theta \big)  \cdot  (- \sin \theta, \cos \theta ) \big|^2 \Big) \rho \, \d \theta \, \d \rho \\
            %& = \int_0^{\rho_\e}\int_{0}^{2\pi \rho_\e} \Big( \tfrac{1}{\rho_\e^3} \big|  \phi_\e\big(  \rho_\e \cos \tfrac{t}{\rho_\e}, \rho_\e \sin  \tfrac{t}{\rho_\e} \big)  - a_\e  \big|^2 + \tfrac{1}{\rho_\e} \big|\nabla \phi_\e\big( \rho_\e \cos  \tfrac{t}{\rho_\e}, \rho_\e \sin  \tfrac{t}{\rho_\e} \big)  \cdot  (- \sin \tfrac{t}{\rho_\e}, \cos  \tfrac{t}{\rho_\e}) \big|^2 \Big) \rho \, \d t \, \d \rho \\
            &   = \int_0^{\rho_\e}\int_{\de B_{\rho_\e}} \Big( \tfrac{1}{\rho_\e^2} \big|  \phi_\e|_{\de B_{\rho_\e}}(y)  - a_\e  \big|^2 +  \big|\nabla_{\de B_{\rho_\e}} (\phi_\e|_{\de B_{\rho_\e}})(y) \big|^2 \Big)\tfrac{\rho}{\rho_\e} \, \d \H^1(y) \, \d \rho  \\
            &   \leq CR  \int_{\de B_{\rho_\e}} \Big( \tfrac{1}{\rho_\e^2} \big|  \phi_\e|_{\de B_{\rho_\e}}(y)  - a_\e  \big|^2 +  \big|\nabla_{\de B_{\rho_\e}} (\phi_\e|_{\de B_{\rho_\e}})(y) \big|^2 \Big)  \, \d \H^1(y)  \\
            & \leq CR  \int_{\de B_{\rho_\e}}  \big|\nabla_{\de B_{\rho_\e}} (\phi_\e|_{\de B_{\rho_\e}})(y) \big|^2    \, \d \H^1(y) \leq
            % C(r,R) \int_{\A{r}{R'}} |\nabla \overarc v_\e|^2 \, \d x \leq 
            \frac{CR}{(R-r)\e^2} XY_\e(v_\e, \A{r}{R})\, .
        \end{split}
    \end{equation*}
     To conclude, let us estimate the energy on triangles that are not contained in $B_{\rho_\e}$. Let us fix $T =\conv\{\e i, \e j, \e k\} \in \T_\e(B_R)$. If $T \subset \A{\rho_\e}{R}$, then $\overline v_\e = v_\e$ on $\L_\e \cap T$ and thus $XY_\e(\overline v_\e, T) = XY_\e(v_\e, T)$. Finally, since $\#\{T \in \T_\e(B_r) : T \cap \de B_{\rho_\e}\neq \emptyset\} \leq \frac{C\rho_\e}{\e} \leq \frac{C R}{\e}$, inequality~\eqref{eq:T at boundary} yields
    \begin{equation*}
        \sum_{\substack{T \in \T_\e(B_R)\\ T \cap \de B_{\rho_\e} \neq \emptyset }} \frac{1}{\e^2} XY_{\e}(\overline v_\e,T) \leq \frac{CR}{(R-r)\e^2} XY_\e(v_\e,\A{r}{R})  \, .
    \end{equation*}
    Putting together the previous estimates yields $XY_\e(\overline v_\e, B_{R}) \leq \frac{CR}{(R-r)} XY_\e(v_\e, \A{r}{R})$. Choosing $C_0\geq C$ such that  \eqref{ineq:philess2pi3} is satisfied yields the statement of the lemma.      
\end{proof}

We prove a lemma concerning sampling of $H^1$ functions used in the previous proof. 
 
\begin{lemma} \label{lemma:sampling}
    Let $T_0 := \conv\{0,\e \tb_1 , \e \tb_2\}$. There exists a universal constant $C > 0$ such that the following holds true: given $U \subset \R^2$, $v \in H^1(U;\S^1)$, $U' \subset \subset U$ and $\e>0$ with $\dist(U',\de U) > \e$, there exists a point $\overline x \in T_0$ (possibly depending on $U'$) such that 
    \begin{equation*}
        \frac{1}{\e^2} XY_\e(v(\, \cdot \, + \overline x), U') \leq C \int_{U} |\nabla v|^2 \d x \, .
    \end{equation*}
\end{lemma}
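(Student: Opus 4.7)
The plan is to use an averaging (Fubini) argument over $\overline x \in T_0$ and show that
\begin{equation*}
\fint_{T_0} XY_\e(v(\,\cdot\, + \overline x), U') \, \d \overline x \leq C \e^2 \int_U |\nabla v|^2 \, \d x,
\end{equation*}
so that by the mean value principle at least one $\overline x \in T_0$ attains the desired pointwise bound. By density I would first reduce to $v \in C^\infty(U;\S^1)$, both sides being continuous in the $H^1$-norm; the distance condition $\dist(U',\de U)>\e$ (possibly after replacing $U'$ by a slightly smaller set, which only worsens the constant) guarantees that for every $\overline x \in T_0$ and every edge $[\e i, \e i + \e \tb_l]$ of a triangle $T \in \T_\e(U')$ the translated segment lies inside $U$.

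The key estimate is that for every nearest-neighbor pair $\e i, \e i + \e \tb_l$, $l \in \{1,2,3\}$, the fundamental theorem of calculus combined with Cauchy--Schwarz gives
\begin{equation*}
|v(\e i + \overline x) - v(\e i + \e \tb_l + \overline x)|^2 \leq \e \int_0^\e |\nabla v(\e i + \overline x + t \tb_l) \cdot \tb_l|^2 \, \d t.
\end{equation*}
Integrating in $\overline x \in T_0$, applying Fubini and the translation $y = \e i + \overline x + t \tb_l$ yields
\begin{equation*}
\int_{T_0} |v(\e i + \overline x) - v(\e i + \e \tb_l + \overline x)|^2 \, \d \overline x \leq \e \int_0^\e \int_{\e i + t \tb_l + T_0} |\nabla v(y) \cdot \tb_l|^2 \, \d y \, \d t.
\end{equation*}

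Then I would sum over the three edges of every $T \in \T_\e(U')$. For each fixed $l$ and $t$, the translates $\{\e i + t \tb_l + T_0\}_{\e i \in \L_\e}$ have uniformly bounded overlap, since $T_0$ has diameter $\e$ and $\L_\e$ is $\e$-separated, so each point of $U$ lies in $O(1)$ such shifted triangles. Hence
\begin{equation*}
\sum_{T \in \T_\e(U')} \sum_{l=1}^3 \int_{T_0} |v(\e i + \overline x) - v(\e i + \e \tb_l + \overline x)|^2 \, \d \overline x \leq C \e^4 \int_U |\nabla v|^2 \, \d y.
\end{equation*}
Multiplying by the $\tfrac{\e^2}{2}$-prefactor in $XY_\e$ and dividing by $|T_0| = \tfrac{\sqrt{3}}{4} \e^2$ gives the averaged inequality with a universal constant, which completes the argument after selecting a favorable $\overline x$.

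The proof is essentially routine; the only mildly technical point is the bounded-overlap count, which rests solely on the elementary fact that in a set of diameter $\e$ there are $O(1)$ points of $\L_\e$. The $H^1$ vs.\ $C^\infty$ distinction and the minor book-keeping near $\de U'$ are handled by standard density and exhaustion arguments.
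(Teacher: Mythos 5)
Your proposal is correct and takes essentially the same route as the paper: estimate each squared nearest-neighbor difference by the integral of $|\nabla v|^2$ along the edge (FTC plus Cauchy--Schwarz/Jensen), integrate in $\overline x\in T_0$, apply Fubini with a change of variables, use that the resulting translated regions have $O(1)$ overlap, and extract a good $\overline x$ by the mean value principle. The only cosmetic difference is that you reduce to smooth $v$ by density, whereas the paper avoids the regularization by invoking the slicing theory of Sobolev functions (\cite[Proposition~3.105]{AmbFusPal}) to make sense of the restriction of an $H^1$ map to almost every translated edge directly; either route is legitimate.
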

\begin{proof} 
Let $T = \conv\{\e i, \e j , \e k\}\in \T_\e(U')$. Note that, if $T' \cap T \neq \emptyset$, then $T' \subset U$. Moreover, by the theory of slicing of Sobolev functions (\cf~\cite[Proposition~3.105]{AmbFusPal}), for a.e.\ $x \in T_0$ (actually, for $\H^1$-a.e.\ $x \in T_0$) we have that $v|_{[x + \e i, x + \e j]} \in H^1([x + \e i, x + \e j];\S^1)$ and $\frac{\d}{\d t} v \big(x+\e i + t(\e j- \e i)\big) = \nabla v\big(x+\e i + t(\e j- \e i)\big)  (\e j-\e i) $  for $t\in(0,1)$ . Thus, by Jensen's Inequality and Fubini's Theorem,
\begin{equation*}
    \begin{split}
        \int_{T_0} |v(\e i + x) - v(\e j + x)|^2 \, \d x 
        %= \int_{B_\e} \Big| \int_0^1  \frac{\d}{\d t} \Big(v\big(x+\e i + \e t(j-i)\big) \Big) \, \d t \Big|^2 \, \d x 
        & = \int_{T_0} \Big| \int_0^1 \nabla v\big(x+\e i + t(\e j- \e i)\big)  (\e j-\e i) \, \d t \Big|^2 \, \d x \\
        & \leq \int_{T_0}   \int_0^1 \e^2 \big| \nabla v\big(x+\e i + t(\e j- \e i)\big) \big|^2 \, \d t   \, \d x  \\
        &      \leq \e^2   \int_{T+T_0}  \big| \nabla v(y) \big|^2  \, \d y \leq \e^2 \sum_{T' \cap T \neq \emptyset} \int_{T'}  \big| \nabla v(y) \big|^2  \, \d y \, .
    \end{split}
    \end{equation*}
    Arguing analogously with the other vertices of the triangle, we get that for every $T \in \T_\e(U')$
    \begin{equation*}
        \int_{T_0} \frac{1}{\e^2} XY_\e(v(\, \cdot \, + x),T) \, \d x    \leq  3  \e^2   \sum_{T' \cap T \neq \emptyset} \int_{T'}  \big| \nabla v(y) \big|^2  \, \d y \, .
    \end{equation*}
    Hence, there exists $\overline x \in T_0$ such that
\begin{equation*}
    \begin{split}
        XY_\e(v(\, \cdot \, + \overline x),U')  & \leq \int_{T_0} \frac{4}{\sqrt{3}\e^2}XY_\e(v( \, \cdot \, + x),U') \, \d x = \sum_{T \in \T_\e(U')} \frac{4}{\sqrt{3}} \int_{T_0} \frac{1}{\e^2} XY_\e(v(\, \cdot \, + x),T) \, \d x \\
        & \leq \sum_{T \in \T_\e(U')} 4\sqrt{3}   \e^2   \sum_{T' \cap T \neq \emptyset} \int_{T'}  \big| \nabla v(y) \big|^2  \, \d y \leq  52\sqrt{3}  \e^2   \int_{U}  \big| \nabla v(y) \big|^2  \, \d y \, ,
    \end{split}
\end{equation*} 
where we used that each triangle $T'$ is counted at most 13 times. This concludes the proof.
\end{proof}
% \footnote{The set of points of $T_0$ that satisfy this property has positive measure.}

\subsection*{Relations between chirality and vorticity} We describe here the relations between the chirality of $u \in \SF_\e$ and the vorticity of the auxiliary spin field $v \in \SF_\e$ defined as in~\eqref{def:from u to v}.

\begin{remark} \label{rmk:chirality and vorticity}
    If the chirality of $u$ is close enough to 1 in a triangle of the lattice, there the auxiliary spin field $v$ cannot have vorticity. More precisely, there exists $\eta \in (0,1)$ such that $\mu_v(T) = 0$ for every $u \in \SF_\e$ and $T \in \T_\e(\R^2)$ with $\chi(u,T) > 1 - \eta$. Indeed, by Remark~\ref{rmk:chirality angles} there exists $\eta \in (0,1)$ such that $|\theta_1 - \frac{2\pi}{3}| < \frac{\pi}{2}$ and $|\theta_2 + \frac{2\pi}{3}| < \frac{\pi}{2}$ for every $\theta_1, \theta_2 \in [-\pi,\pi]$ with $\chi(\theta_1,\theta_2) > 1-\eta$. Let $T=\conv\{\e i, \e j, \e k\}$ with $\e i \in \L^1_\e$, $\e j \in \L^2_\e$, $\e k \in \L^3_\e$ and let $u(x) = \exp(\iota \theta(x))$ for $x \in \{\e i, \e j, \e k\}$ be such that $\theta(\e j) - \theta(\e i) \in [-\pi,\pi]$  and $\theta(\e k) - \theta(\e i) \in [-\pi,\pi]$. If $\chi(u,T) = \chi(\theta(\e j) - \theta(\e i), \theta(\e k) - \theta(\e i)) > 1 - \eta$, then $|\theta(\e j) - \theta(\e i) - \frac{2\pi}{3}| < \frac{\pi}{2}$ and $|\theta(\e k) - \theta(\e i) + \frac{2\pi}{3}| < \frac{\pi}{2}$. Let 
        \begin{equation*}
            \varphi(\e i) := \theta(\e i) \, , \quad \varphi(\e j) := \theta(\e j) - \frac{2\pi}{3} \, , \quad \varphi(\e k) := \theta(\e k) + \frac{2\pi}{3} \, ,
        \end{equation*}    
     so that $v(x) = \exp(\iota \varphi(x))$ for $x \in \{\e i, \e j, \e k\}$. In particular, $|\varphi(\e j)  - \varphi(\e i)| < \frac{\pi}{2}$, $|\varphi(\e k)  - \varphi(\e i)| < \frac{\pi}{2}$, and $|\varphi(\e k)  - \varphi(\e j)| < \pi$. The latter conditions imply that 
     \begin{equation*} %\label{zero-vorticity}
     		\d^e\varphi(\e i, \e j) + \d^e\varphi(\e j, \e k) + \d^e\varphi(\e k, \e i)=0= \d^e\varphi(\e i,\e k)+\d^e\varphi(\e k,\e j)+\d^e\varphi(\e j,\e i)
     	\end{equation*}
      and thus $\mu_v(T) = 0$, independent of the ordering of the vertices $\e i,\e j,\e k$.    
\end{remark}

\begin{remark} \label{rmk:vorticity and chirality}
    Conversely to Remark~\ref{rmk:chirality and vorticity}, if the vorticity of the auxiliary spin field $v$ is 0 in a triangle of the lattice, there the chirality of $u$ cannot be close to $-1$. More precisely, there exists  $\eta' \in (0,1)$  such that $\chi(u,T) \geq -1+\eta'$ for every $u \in \SF_\e$ and $T \in \T_\e(\R^2)$ with $\mu_v(T) =0$. Indeed, as in Remark~\ref{rmk:chirality angles} there exists  $\eta' \in (0,1)$   such that $|\theta_1 + \frac{2\pi}{3}| < \frac{\pi}{6}$ and $|\theta_2 - \frac{2\pi}{3}| < \frac{\pi}{6}$ for every $\theta_1, \theta_2 \in [-\pi,\pi]$ with $\chi(\theta_1,\theta_2) < -1+\eta'$. Let $T=\conv\{\e i, \e j, \e k\}$ with $\e i \in \L^1_\e$, $\e j \in \L^2_\e$, $\e k \in \L^3_\e$, let $u(x) = \exp(\iota \theta(x))$ for $x \in \{\e i, \e j, \e k\}$.  If $\chi(u,T) < -1 +\eta'$,  then $|\theta(\e j) - \theta(\e i) + \frac{2\pi}{3}| < \frac{\pi}{6}$ and $|\theta(\e k) - \theta(\e i) - \frac{2\pi}{3}| < \frac{\pi}{6}$.  Let now $\varphi(\e i) := \theta(\e i)$, $\varphi(\e j) := \theta(\e j) - \frac{2\pi}{3}$, $\varphi(\e k) := \theta(\e k) + \frac{2\pi}{3}$, so that $v(x) = \exp(\iota \varphi(x))$ for $x \in \{\e i, \e j, \e k\}$. Then 
    \begin{equation*}
        |\varphi(\e j) - \varphi(\e i) + \tfrac{4\pi}{3}| < \tfrac{\pi}{6} \, , \quad |\varphi(\e i) - \varphi(\e k) + \tfrac{4\pi}{3}| <\tfrac{\pi}{6} \, , \quad |\varphi(\e k) - \varphi(\e j) - \tfrac{8\pi}{3}| < \tfrac{\pi}{3} \, ,
    \end{equation*}
    which yields $\d^e \varphi(\e i, \e j) = \varphi(\e j) - \varphi(\e i) + 2 \pi$,  $\d^e \varphi(\e j, \e k) = \varphi(\e k) - \varphi(\e j) - 2 \pi$, $\d^e \varphi(\e k, \e i) = \varphi(\e i) - \varphi(\e k) + 2 \pi$, whence $\mu_v(T) = 1$,  assuming that the counterclockwise order of the vertices is $(\e i,\e j,\e k)$. If instead the counterclockwise order is $(\e i,\e k,\e j)$, then the  antisymmetry condition $\d^e\varphi(x,x')=-\d^e\varphi(x',x)$ implies that $\mu_v(T)=-1$. 
\end{remark}

 Although a control $XY_\e \leq C E_\e$ does not hold true, we show that it is feasible in the regions where the spin field has no vorticity. 

\begin{lemma} \label{lemma:rough XY bound}
    There exists a constant $C>0$ such that the following holds true. Let $u \in \SF_\e$, let $v \in \SF_\e$ the auxiliary spin field defined according to~\eqref{def:from u to v}, and let $T \in \T_\e(\R^2)$. Then 
    \begin{equation} \label{eq:estimate with XY when zero vorticity}
        \mu_v(T)=0 \quad \implies \quad  XY_\e(v,T) \leq C E_\e(u,T) \, .
    \end{equation}
\end{lemma}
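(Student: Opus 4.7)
The plan is to combine Remark~\ref{rmk:vorticity and chirality} with Lemma~\ref{lemma:bounds with XY} and Remark~\ref{rmk:chiralityenergy} through a dichotomy on the size of $\chi(u,T)$. The starting point is identity~\eqref{eq:E in terms of v final}, which reads $E_\e(u,T) = 4 XY_\e(v,T) - 9\e^2(1-\chi(u,T))$; since $9\e^2(1-\chi(u,T))$ can in principle be as large as $18\e^2$ (the obstruction being exactly ground states of chirality $-1$, see Figure~\ref{fig:from u to v}), this identity alone does not give a control of $XY_\e$ by $E_\e$. However, the hypothesis $\mu_v(T)=0$ rules out precisely that obstruction.

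The first step is to invoke Remark~\ref{rmk:vorticity and chirality}: the assumption $\mu_v(T)=0$ forces $\chi(u,T) \geq -1+\eta'$ for a universal $\eta' \in (0,1)$, so the chirality is quantitatively bounded away from $-1$. Next I would fix the value $\eta \in (0,1)$ provided by Lemma~\ref{lemma:bounds with XY} (for some chosen $\lambda \in (0,1)$, e.g.\ $\lambda = 1/2$) and split into two cases according to whether $\chi(u,T) > 1-\eta$ or $\chi(u,T) \in [-1+\eta', 1-\eta]$.

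In the first case, Lemma~\ref{lemma:bounds with XY} directly gives $(1-\lambda)XY_\e(v,T) \leq E_\e(u,T)$, hence $XY_\e(v,T) \leq 2 E_\e(u,T)$. In the second case, set $\tilde \eta := \min\{\eta, \eta'\}$; then $\chi(u,T) \in (-1+\tilde\eta, 1-\tilde\eta)$, so Remark~\ref{rmk:chiralityenergy} yields a constant $C_{\tilde\eta}>0$ with $E_\e(u,T) \geq C_{\tilde\eta} \e^2$. On the other hand, the trivial bound $|v(x)-v(y)|^2\leq 4$ for $v\colon\L_\e\to\S^1$ gives $XY_\e(v,T)\leq 6\e^2$, so $XY_\e(v,T) \leq (6/C_{\tilde\eta}) E_\e(u,T)$. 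Taking $C := \max\{2, 6/C_{\tilde\eta}\}$ concludes the proof.

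I do not anticipate any real obstacle: the content of the lemma is exactly that the $\mu_v=0$ constraint removes the only scenario (ground state with $\chi\equiv -1$) in which $E_\e$ fails to dominate $XY_\e$, and once that regime is excluded the two remaining sub-regimes are controlled either by the sharp Taylor expansion of Lemma~\ref{lemma:bounds with XY} (near $\chi=1$) or by the coercivity estimate of Remark~\ref{rmk:chiralityenergy} (away from both $\pm 1$).
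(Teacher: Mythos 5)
Your proposal is correct and follows essentially the same route as the paper's own proof: the same dichotomy on $\chi(u,T)$ around the threshold $\eta$ from Lemma~\ref{lemma:bounds with XY}, combined with Remark~\ref{rmk:vorticity and chirality} to exclude $\chi(u,T) < -1+\eta'$ and Remark~\ref{rmk:chiralityenergy} to get $E_\e(u,T)\geq C_{\tilde\eta}\e^2$ in the intermediate regime, against the trivial bound $XY_\e(v,T)\leq 6\e^2$. The opening remark about identity~\eqref{eq:E in terms of v final} is sound motivation but not needed for the argument.
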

\begin{proof}
     Let us fix $\lambda = \frac{1}{2}\in (0,1)$ and let $\eta > 0$ be the corresponding number given by Lemma~\ref{lemma:bounds with XY}. If $\chi(u,T) > 1-\eta$, by Lemma~\ref{lemma:bounds with XY} we have that $\frac{1}{2}XY_\e(v,T) \leq E_\e(u,T)$. Therefore, we are left to prove the bound when $\chi(u,T) \leq 1-\eta$. Let now $\eta' > 0$ be as in Remark~\ref{rmk:vorticity and chirality}. From the condition $\mu_v(T) = 0$, we infer that $\chi(u,T) \geq -1+\eta'$.  Let $\eta'' := \min\{\eta,\eta'\}$. Since $ -1+\eta'' \leq \chi(u,T) \leq 1- \eta''$, by Remark~\ref{rmk:chiralityenergy} we deduce the existence of a constant $C_{\eta''} > 0$ such that $E_\e(u , T) \geq C_{\eta''} \e^2$. On the other hand $XY_\e(v,T) \leq 6 \e^2 \leq \frac{6}{C_{\eta''}} E_\e(u , T)$. This concludes the proof. 
\end{proof}

\begin{remark}
    The constant $C$ found in~\eqref{eq:estimate with XY when zero vorticity} is not optimal. 
\end{remark}

\section{\texorpdfstring{$\Gamma$}{}-limit  in the bulk scaling}

In this section we compute the $\Gamma$-limit of the AFXY-energy in the bulk scaling. We start with a lemma concerning the spin fields that cannot overcome the energetic barrier of the chirality transition (of order $\e$). 

\begin{lemma} \label{lemma:chi converges to 1}  Let $\Omega \subset \R^2$ be an open, bounded, and connected set and  let $u_\e \in \SF_\e$ be such that $E_\e(u_\e,\Omega) \leq C \delta_\e$ with $\delta_\e \ll \e$.  Then, up to a subsequence, either $\chi(u_\e) \to 1$ or $\chi(u_\e) \to -1$ in $L^1(\Omega)$. 
\end{lemma}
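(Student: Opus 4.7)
The strategy combines two geometric estimates on the chirality. Remark~\ref{rmk:chiralityenergy} guarantees that the set where $\chi(u_\e)$ is bounded away from $\pm 1$ has negligible area, and Lemma~\ref{lemma:energy of two triangles} forces an energy cost at every edge separating the regions $\{\chi(u_\e) \approx 1\}$ and $\{\chi(u_\e) \approx -1\}$. A relative isoperimetric inequality in $\Omega$ will then force one of these two regions to essentially fill the whole domain.

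In detail, I would fix $\eta \in (0,1)$ and decompose $\T_\e(\Omega) = \T_\e^+ \cup \T_\e^- \cup \T_\e^0$ according to whether $\chi(u_\e,T) \geq 1-\eta$, $\chi(u_\e,T) \leq -1+\eta$, or $\chi(u_\e,T) \in (-1+\eta, 1-\eta)$; set $A_\e^\star := \bigcup \T_\e^\star$ for $\star \in \{+,-,0\}$. Remark~\ref{rmk:chiralityenergy} yields $E_\e(u_\e, T) \geq C_\eta \e^2$ for every $T \in \T_\e^0$, so $|A_\e^0| \leq C_\eta^{-1} E_\e(u_\e,\Omega) \leq C \delta_\e \to 0$. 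Lemma~\ref{lemma:energy of two triangles} implies that every neighbouring pair $T \in \T_\e^+$, $T' \in \T_\e(\Omega) \setminus \T_\e^+$ carries at least $C_\eta \e^2$ of energy on $T\cup T'$; since each triangle has at most three neighbours, the number of such pairs is bounded by $C E_\e(u_\e,\Omega)/\e^2 \leq C \delta_\e/\e^2$, so
\[
\H^1(\partial A_\e^+ \cap \Omega) \leq C \, \delta_\e / \e \to 0,
\]
where the convergence uses the crucial hypothesis $\delta_\e \ll \e$. This perimeter bound is the main technical step of the argument.

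I would then fix a connected open set $U \subset\subset \Omega$ with Lipschitz boundary and apply the relative isoperimetric inequality in $U$:
\[
\min \bigl( |A_\e^+ \cap U|, |U \setminus A_\e^+| \bigr) \leq C_U \bigl( \H^1(\partial A_\e^+ \cap U) \bigr)^2 \to 0.
\]
Extracting a subsequence along which, say, $|U \setminus A_\e^+| \to 0$ (the opposite case gives the limit $-1$), for every $\eta' \in (0,\eta)$ the inclusion $\{\chi(u_\e) \leq -1+\eta'\} \subset A_\e^-$ together with Remark~\ref{rmk:chiralityenergy} applied with $\eta'$ yield $|\{\chi(u_\e) < 1-\eta'\} \cap U| \to 0$; hence $\limsup_{\e \to 0} \int_U (1-\chi(u_\e))\dx \leq \eta'|U|$, and letting $\eta' \to 0$ gives $\chi(u_\e) \to 1$ in $L^1(U)$. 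Exhausting $\Omega$ by an increasing family of such sets $U_n$, the dichotomy is automatically consistent across $n$ (the opposite case on $U_{n+1}$ would force $|A_\e^+ \cap U_n| \to 0$, contradicting $|U_n \setminus A_\e^+| \to 0$), so using $\|\chi(u_\e)\|_\infty \leq 1$ and $|\Omega \setminus U_n| \to 0$ to pass to the limit in $n$ delivers the claimed $L^1(\Omega)$ convergence.
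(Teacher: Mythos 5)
Your proof is correct and follows essentially the same strategy as the paper's: a perimeter bound of order $\delta_\e/\e$ obtained from Lemma~\ref{lemma:energy of two triangles}, a connectedness/isoperimetric argument forcing one of the two chirality phases to fill $\Omega$, and Remark~\ref{rmk:chiralityenergy} to pass from the thresholded set to actual $L^1$ convergence of $\chi(u_\e)$. The paper works directly with the sign function $\hat\chi_\e := \sign(\chi(u_\e))$ and applies Lemma~\ref{lemma:energy of two triangles} with $\eta = 1$, which is slightly more economical than your three-way split $\T_\e^\pm,\T_\e^0$, but the underlying mechanism is identical.
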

\begin{proof}
    Let 
\begin{equation*}
    \hat \chi_\e := \begin{cases}
        1 \, , & \text{if } \chi(u_\e)  > 0 \, , \\
        -1 \, , & \text{if } \chi(u_\e) \leq  0 \, .
    \end{cases}
\end{equation*}
For every $\Omega' \subset \subset \Omega$  connected set  such that  $\mathrm{dist}(\Omega',\partial \Omega) >\sqrt{3}\e$,  due to Lemma \ref{lemma:energy of two triangles} applied with $\eta=1$,  there exists $C>0$ such that 
\begin{equation*}
    \H^1(\de \{\hat \chi_\e = 1 \} \cap \Omega') \leq   \frac{C}{\e} E_\e(u_\e,\Omega) \leq C  \frac{\delta_\e}{\e} \, ,
\end{equation*}
for $\e$ small enough.  Since $\Omega'$ is connected, this implies that, up to a subsequence, either $\hat \chi_\e \to 1$ or $\hat \chi_\e \to -1$ strongly in $L^1(\Omega)$. Moreover, due to Remark \ref{rmk:chiralityenergy}, for every  $\eta \in (0,1)$  and every $\Omega' \subset \subset \Omega$ with $\mathrm{dist}(\Omega',\partial \Omega)>3\e$ we have that  
\begin{equation*}
    \lim_{\e \to 0} |\{ |\hat \chi_\e - \chi(u_\e) | > \eta \} \cap \Omega'| = 0 \, ,
\end{equation*}
which implies that either $\chi(u_\e) \to 1$ or $\chi(u_\e) \to -1$  in measure and therefore also  strongly in~$L^1(\Omega)$.
\end{proof}

 We compute the $\Gamma$-limit in the bulk scaling in the case where $\chi(u_\e) \sim 1$. An analogous statement holds true if $\chi(u_\e) \sim -1$ (in that case, the auxiliary variable has to be redefined accordingly). 
 
 We state the next theorem for $\Omega$ connected. In case $\Omega$ is not connected, the result holds true in every connected component of $\Omega$.

\begin{theorem} \label{thm:bulk}
    Assume that $\Omega$ is an open, bounded,  and connected  set with Lipschitz boundary.  The following results hold:
    \begin{enumerate}[label=\roman*)]
        \item {\em (Compactness)} Let $u_\e \in \SF_\e$ be such that $E_\e(u_\e,\Omega) \leq C \e^2$.  Then, up to a subsequence, either $\chi(u_\e) \to 1$ or $\chi(u_\e) \to -1$ in $L^1(\Omega)$. Assume that $\chi(u_\e) \to 1$, let $v_\e \in \SF_\e$ be the auxiliary spin field defined as in~\eqref{def:from u to v}, and let $\hat v_\e$ be its piecewise affine interpolation. Then there exists a subsequence (not relabeled) and~$v \in H^1(\Omega;\S^1)$ such that $\hat v_\e \to v$ strongly in $L^2(\Omega;\R^2)$ and~$\hat v_\e \wto v$ in $H^1_{\mathrm{loc}}(\Omega;\R^2)$.
        \item  {\em ($\liminf$ inequality)} Let $u_\e \in \SF_\e$ be such that $\chi(u_\e) \to 1$ in $L^1(\Omega)$, let $v_\e \in \SF_\e$ be the auxiliary spin field defined as in~\eqref{def:from u to v}, and let $\hat v_\e$ be its piecewise affine interpolation. Let~$v \in H^1(\Omega;\S^1)$ and assume that $\hat v_\e \to v$ strongly in $L^2(\Omega;\R^2)$. Then 
        \begin{equation*}
            \sqrt{3} \int_\Omega |\nabla v|^2 \, \d x \leq \liminf_{\e \to 0} \frac{1}{\e^2} E_\e(u_\e, \Omega) \, .
        \end{equation*}
        \item {\em ($\limsup$ inequality)} Let $v \in H^1(\Omega;\S^1)$. Then there exist $u_\e \in \SF_\e$ such that  $\chi(u_\e) \to 1$ in $L^1(\Omega)$ and $\hat v_\e \to v$ strongly in $L^2(\Omega;\R^2)$ and 
        \begin{equation*}
            \limsup_{\e \to 0} \frac{1}{\e^2} E_\e(u_\e, \Omega)  \leq \sqrt{3} \int_\Omega |\nabla v|^2 \, \d x  \, ,
        \end{equation*}
        where $v_\e \in \SF_\e$ is the auxiliary spin field defined as in~\eqref{def:from u to v} and  $\hat v_\e$ is its piecewise affine interpolation.
    \end{enumerate}
\end{theorem}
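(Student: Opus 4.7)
\medskip

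\noindent\textbf{Plan of proof.} The idea is to partition every family of triangles into \emph{good} ones with $\chi(u_\e,T)>1-\eta$, on which Lemma~\ref{lemma:bounds with XY} sandwiches $E_\e$ between $(1\pm\lambda)XY_\e$, and \emph{bad} ones, whose number is controlled by Lemma~\ref{lemma:counting} whenever $\frac{1}{\e^2}E_\e(u_\e,\Omega)\le C$. This reduces all three statements to the piecewise affine XY-energy $\frac{1}{\e^2}XY_\e(v_\e,\cdot)=\sqrt{3}\int|\nabla\hat v_\e|^2\dx$ (Remark~\ref{rmk:XY is integral}) and to standard weak-$H^1$ lower semicontinuity of the Dirichlet integral of $\S^1$-valued maps.

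For \textbf{(i)} I would first apply Lemma~\ref{lemma:chi converges to 1} with $\delta_\e=\e^2\ll\e$ to extract a subsequence along which $\chi(u_\e)\to\pm 1$ in $L^1(\Omega)$, and assume the limit is $+1$. Fixing $\Omega'\wcont\Omega$ with Lipschitz boundary, $\lambda=\tfrac12$, and the associated $\eta$ from Lemma~\ref{lemma:bounds with XY}, Lemma~\ref{lemma:counting} bounds the number of triangles $T\in\T_\e(\Omega')$ with $\chi(u_\e,T)\le 1-\eta$ by a constant $C_\eta(\Omega')$; on each such triangle $XY_\e(v_\e,T)\le 6\e^2$, while on the remaining ones $XY_\e(v_\e,T)\le 2E_\e(u_\e,T)$. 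Summing and invoking Remark~\ref{rmk:XY is integral} yields $\int_{\Omega'}|\nabla\hat v_\e|^2\dx\le C(\Omega')$. Since $|\hat v_\e|\le 1$, a Rellich-plus-diagonal argument followed by dominated convergence in $L^2(\Omega)$ would give $\hat v_\e\to v$ strongly in $L^2(\Omega;\R^2)$ and weakly in $H^1_\loc(\Omega;\R^2)$; Lemma~\ref{lemma:potential part} forces $|v|=1$ a.e., and the global bound $v\in H^1(\Omega;\S^1)$ will follow a posteriori from (ii).

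For \textbf{(ii)} I may assume $\frac{1}{\e^2}E_\e(u_\e,\Omega)\le C$. Fix $\lambda\in(0,1)$ and the corresponding $\eta$; for $\Omega'\wcont\Omega$ Lipschitz, writing $G_\e(\Omega')$ for the union of good triangles in $\Omega'$, Lemma~\ref{lemma:bounds with XY} and Remark~\ref{rmk:XY is integral} give
\begin{equation*}
\frac{1}{\e^2}E_\e(u_\e,\Omega)\ge(1-\lambda)\sqrt{3}\int_{G_\e(\Omega')}|\nabla\hat v_\e|^2\dx.
\end{equation*}
Lemma~\ref{lemma:counting} produces $|\Omega'\sm G_\e(\Omega')|\le C_\eta(\Omega')\e^2\to 0$, so by absolute continuity of the integral and the weak $H^1_\loc$ convergence from (i), $\mathbf{1}_{G_\e(\Omega')}\nabla\hat v_\e\wto\nabla v$ in $L^2(\Omega';\R^{2\x 2})$. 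Weak-$L^2$ lower semicontinuity, followed by $\Omega'\nearrow\Omega$ and $\lambda\to 0$, yields the claim.

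For \textbf{(iii)} I first treat $v\in C^\infty(\overline\Omega;\S^1)$: set $v_\e(\e i):=v(\e i)$ on $\L_\e\cap\overline\Omega$ and recover $u_\e$ by inverting~\eqref{def:from u to v}. A first-order Taylor expansion of $v$ gives $1-\chi(u_\e,T)=O(\e^2)$ uniformly in $T\in\T_\e(\Omega)$, so for $\e$ small Lemma~\ref{lemma:bounds with XY} produces $E_\e(u_\e,T)\le(1+\lambda)XY_\e(v_\e,T)$ on every $T$; combining with Remark~\ref{rmk:XY is integral} and the uniform convergence $\nabla\hat v_\e\to\nabla v$ on compacta gives $\limsup\frac{1}{\e^2}E_\e(u_\e,\Omega)\le(1+\lambda)\sqrt{3}\int_\Omega|\nabla v|^2\dx$, and $\lambda\to 0$ concludes. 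For general $v\in H^1(\Omega;\S^1)$ I would approximate by smooth $\S^1$-valued maps in $H^1$ (via local liftings $v=e^{\iota\phi}$, $\phi\in H^1$, available on simply connected subdomains, then patched and mollified in a standard way) and diagonalize. \textbf{Main obstacle.} The delicate point is the $\liminf$ in (ii): since Lemma~\ref{lemma:counting} only controls the \emph{Lebesgue measure} of the bad set, not pointwise smallness of $|\nabla\hat v_\e|$ there, the indicator $\mathbf{1}_{G_\e(\Omega')}$ has to be pushed through the weak $H^1$ convergence via absolute continuity --- a step that relies crucially on the quadratic energy-dependence encoded in Lemma~\ref{lemma:counting} and would break down with only an $L^1$ bound on $\chi(u_\e)-1$.
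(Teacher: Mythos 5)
Your overall strategy --- splitting triangles into good ($\chi(u_\e,T)>1-\eta$) and bad ones, using Lemma~\ref{lemma:bounds with XY} on the good ones and Lemma~\ref{lemma:counting} to control the bad ones, and reducing to the Dirichlet energy of $\hat v_\e$ via Remark~\ref{rmk:XY is integral} --- is exactly the paper's. Part~(i) matches closely.

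For part~(ii) you take a genuinely different and in fact cleaner route. The paper extracts a subsequence so that the finitely many bad triangles' barycenters converge to points $b^1,\dots,b^M$, removes small balls $B_\rho(b^h)$, applies Lemma~\ref{lemma:bounds with XY} and weak lower semicontinuity on the complement, and finally sends $\rho\to 0$, $\lambda\to 0$, $\Omega'\nearrow\Omega$. You instead use $|\Omega'\setminus G_\e(\Omega')|\to 0$ to deduce $\mathbf{1}_{G_\e(\Omega')}\nabla\hat v_\e\wto\nabla v$ in $L^2(\Omega';\R^{2\times 2})$ and conclude by weak lower semicontinuity of the $L^2$ norm. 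This works: for any test $\psi\in L^2$ one splits $\int\mathbf{1}_{G_\e}\nabla\hat v_\e{\cdot}\psi = \int\nabla\hat v_\e{\cdot}\psi - \int_{\Omega'\setminus G_\e}\nabla\hat v_\e{\cdot}\psi$, and the error term is bounded by $\|\nabla\hat v_\e\|_{L^2}\big(\int_{\Omega'\setminus G_\e}|\psi|^2\big)^{1/2}\to 0$ precisely by the absolute continuity of the Lebesgue integral of the \emph{fixed} $\psi$. Your ``main obstacle'' worry is thus resolved by your own argument: you never need uniform integrability of $|\nabla\hat v_\e|^2$ on the bad set, only the vanishing of $\int_{\Omega'\setminus G_\e}|\psi|^2$. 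Your route avoids the barycenter-concentration subsequence and the auxiliary parameter $\rho$.

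In part~(iii) the smooth case is handled correctly, and noting $1-\chi(u_\e,T)=O(\e^2)$ (equivalently, small geodesic distances as the paper uses) indeed puts every triangle in the good regime of Lemma~\ref{lemma:bounds with XY}. The genuine gap is the approximation step for general $v\in H^1(\Omega;\S^1)$. Your proposed route ``via local liftings $v=e^{\iota\phi}$ on simply connected subdomains, patched and mollified'' has two problems. First, on a non-simply-connected $\Omega$ (which is allowed) the local liftings need not patch to a global one, and the standard density of smooth $\S^1$-valued maps in $H^1(\Omega;\S^1)$ in two dimensions is not proved by this naive patching. Second, and more critically for your argument, interior mollification does not yield approximants that are smooth up to $\de\Omega$, whereas your Taylor estimate requires $\|\nabla v\|_{L^\infty(\Omega)}<\infty$ and $\|\D^2 v\|_{L^\infty(\Omega)}<\infty$ on all of $\Omega$, not merely on each $\Omega'\wcont\Omega$ --- and you cannot simply cut off an $\S^1$-valued map. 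The paper handles this by first extending $v$ to an $H^1$ map on $\tilde\Omega\supset\supset\Omega$ via a reflection argument (using the Lipschitz boundary), and only then invoking Schoen--Uhlenbeck density on $\tilde\Omega$, so that the resulting $v_n$ are smooth on a neighborhood of $\overline\Omega$. You should replace your lifting sketch with this extension-plus-density argument (or directly cite a density theorem for $C^\infty(\overline\Omega;\S^1)$ in $H^1(\Omega;\S^1)$).
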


\begin{proof} 
Let us prove {\em i)}. The fact that either $\chi(u_\e) \to 1$ or $\chi(u_\e) \to -1$ in $L^1(\Omega)$ (up to a subsequence) follows from Lemma~\ref{lemma:chi converges to 1}. In the following, we assume that $\chi(u_\e) \to 1$.

Let us fix $\Omega' \subset \subset  \Omega'' \subset \subset \Omega$ be such that both $\Omega'$ and $\Omega''$ have Lipschitz boundary. For $\e$ small enough, $\sqrt{3} \e < \dist(\Omega'', \de \Omega)$ holds true. We fix $\lambda \in (0,1)$ and we consider the corresponding $\eta \in (0,1)$ provided by Lemma~\ref{lemma:bounds with XY}. By Lemma~\ref{lemma:counting} we get that there exists $C_\eta >0$ depending also on $\Omega''$ such that
\begin{equation*}
    \# \{ T \in  \T_\e(\Omega'' ) : \chi(u_\e, T) \leq 1 - \eta \} \leq  C_\eta\Big(\frac{1}{\e^2} E_\e(u_\e,\Omega)\Big)^2  \leq C_\eta.
\end{equation*}
Therefore, up to a subsequence, we can assume that $\#  \{ T \in  \T_\e(\Omega'' ) : \chi(u_\e, T) \leq 1 - \eta \} = \overline M$, the number $\overline M$ possibly depending on $\eta$ and $\Omega''$. This yields 
\begin{equation*}
    \sum_{\substack{T \in \T_\e(\Omega'')\\ \chi(u_\e,T) \leq 1 - \eta}} \frac{1}{\e^2} XY_\e(v_\e,T) \leq 6 \# \{T \in \T_\e(\Omega'') : \chi(u_\e,T) \leq 1 - \eta \}   \leq 6 \overline M \, .
\end{equation*}
We apply Lemma~\ref{lemma:bounds with XY} to obtain that
\begin{equation*}
    \begin{split}
        C &\geq \frac{1}{\e^2}E_\e(u_\e,\Omega) \geq \sum_{\substack{T \in \T_\e(\Omega'')\\ \chi(u_\e,T) > 1 - \eta}} \frac{1}{\e^2}E_\e(u_\e,T)  \geq (1-\lambda)\sum_{\substack{T \in \T_\e(\Omega'')\\ \chi(u_\e,T) > 1 - \eta}} \frac{1}{\e^2} XY_\e(v_\e,T) \\
        & \geq (1-\lambda) \frac{1}{\e^2}XY_\e(v_\e, \Omega'') -  (1-\lambda)  6 \overline M \, .
    \end{split}
\end{equation*}
In conclusion, applying Remark~\ref{rmk:XY is integral},
\begin{equation*}
    C \geq \frac{1}{\e^2} XY_\e(v_\e, \Omega'')  \geq \sqrt{3} \int_{\Omega'} |\nabla \hat v_\e|^2 \, \d x \, .
 \end{equation*}  
From this we deduce that, up to a subsequence, $\hat v_\e \wto v$ in $H^1(\Omega';\R^2)$ and $\hat v_\e \to v $ a.e.\ in $\Omega'$, with $v \in H^1(\Omega';\R^2)$. To prove that $|v| = 1$ we apply Lemma~\ref{lemma:potential part} to infer that 
\begin{equation*}
    \int_{\Omega'} (1 - |\hat v_\e|^2)^2 \d x \leq C XY_\e(v_\e, \Omega) \leq C \e^2 \to 0 \, 
\end{equation*}
to obtain, due to Fatou's Lemma,  that $|v| = 1$ a.e. in $\Omega^{\prime\prime}$. 

By a diagonal argument we find a $v \in H^1(\Omega;\S^1)$ and we extract a subsequence such that for every $\Omega' \subset \subset \Omega$ we have $\hat v_\e \wto v$ in $H^1(\Omega';\R^2)$. Finally, $\hat v_\e \to v$ in $L^2(\Omega;\R^2)$ by the Dominated Convergence Theorem. 

Let us prove {\em ii)}. We let $\Omega' \subset \subset  \Omega'' \subset \subset \Omega$ and  $\lambda \in (0,1)$,  $\eta \in (0,1)$ as in the proof of {\em i)}. We prove in the same way that $\#  \{ T \in  \T_\e(\Omega'' ) : \chi(u_\e, T) \leq 1 - \eta \} = \overline M$ with $\overline M$ possibly depending on~$\eta$ and~$\Omega''$.  Let $b_\e^1,\dots, b_\e^{\overline M}$ be the barycenters of the triangles in $\{ T \in  \T_\e(\Omega'' ) : \chi(u_\e, T) \leq 1 - \eta \}$. There exist $b^1, \dots, b^{M} \in  \Omega$ with $M \leq \overline M$ such that, up to a subsequence, each of the points~$b_\e^1,\dots, b_\e^{\overline M}$ converges to one of the points in $\{b^1, \dots, b^{M}\}$. Let us fix $\rho > 0$.
 For $\e$ small enough,  every triangle $T \in \T_\e\big( \Omega'' \sm \bigcup_{h=1}^{M} B_{\rho}(b^h) \big)$ satisfies $\chi(u_\e,T) > 1 - \eta$. In particular, Lemma~\ref{lemma:bounds with XY} and Remark~\ref{rmk:XY is integral} yield
\begin{equation*}
    \begin{split}
         \liminf_{\e \to 0} \frac{1}{\e^2}E_\e(u_\e, \Omega) & \geq \liminf_{\e \to 0} \frac{1}{\e^2} E_\e\Big(u_\e, \Omega'' \sm \bigcup_{h=1}^{M} B_{\rho}(b^h) \Big) \\
         & \geq  (1-\lambda) \liminf_{\e \to 0}  \frac{1}{\e^2} XY_\e\Big(v_\e, \Omega'' \sm \bigcup_{h=1}^{M} B_{\rho}(b^h) \Big) \\
        & \geq  (1-\lambda)  \liminf_{\e \to 0} \sqrt{3} \int_{\Omega' \sm \bigcup_{h=1}^{M} B_{2\rho}(b^h)} |\nabla \hat v_\e|^2 \, \d x  \\
        & \geq  (1-\lambda)  \sqrt{3} \int_{\Omega' \sm \bigcup_{h=1}^{M} B_{2\rho}(b^h)} |\nabla v|^2 \, \d x \, , 
    \end{split}
\end{equation*}
where we used that $\hat v_\e  \wto v$  in $H^1(\Omega';\R^2)$. The claim is proven by letting, in the order, $\rho \to 0$,  $\lambda \to 0$ , and $\Omega' \nearrow \Omega$.  

Let us prove {\em iii)}. Let $v \in H^1(\Omega;\S^1)$. Thanks to the regularity of the boundary we find  $\tilde \Omega \supset \supset \Omega$ open, bounded set with Lipschitz boundary and we extend $v$ to a map in $H^1(\tilde \Omega;\S^1)$, which we still denote, with a slight abuse of notation, by~$v$. This can be achieved via a reflection argument in an open neighborhood of $\de \Omega$. More details can be found, \eg in~\cite[Step 2 in proof of Proposition~4.3]{CicOrlRuf-I}. For the moment, let us assume that $v \in C^\infty(\tilde \Omega;\S^1) \cap H^1(\tilde \Omega;\S^1)$. Later we will prove the result for a generic $v \in H^1(\tilde \Omega;\S^1)$ with a regularization argument.   We define the discrete spin field $u_\e \in \SF_\e$ as follows:
    \begin{gather*}
        v_\e(\e i) := v(\e i) \, , \quad v_\e(\e j) := v(\e j) \, , \quad v_\e(\e k) := v(\e k) \, , \\
        u_\e(\e i) := v_\e(\e i) \, , \quad u_\e(\e j) := R[\tfrac{2\pi}{3}](v_\e(\e j)) \, , \quad u_\e(\e k) := R[-\tfrac{2\pi}{3}](v_\e(\e k)) \, ,
    \end{gather*}
    for $\e i \in \L^1_\e \cap \tilde \Omega$, $\e j \in \L^2_\e \cap \tilde \Omega$, $\e k \in \L^3_\e \cap \tilde \Omega$. For points of $\L_\e$ outside $\tilde \Omega$, we define $u_\e$ arbitrarily.
    
    Let~$\hat v_\e$ be the affine interpolation of $v_\e$ and let us prove that $\hat v_\e \to v$ in $L^2(\Omega;\R^2)$ and
    \begin{equation} \label{eq:limsup}
        \limsup_{\e \to 0} \frac{1}{\e^2} E_\e(u_\e, \Omega) \leq \sqrt{3} \int_{\Omega} |\nabla v|^2 \, \d x \, .
    \end{equation}
    
    Let $\Omega \subset \subset U \subset \subset \tilde \Omega$ and let $T \in \T_\e(U)$. Then, for $x \in T$ we have that $|\hat v_\e(x) - v(x)| \leq 3 \|\nabla v\|_{L^\infty(U)} \e$. This yields $\|\hat v_\e - v\|_{L^2(\Omega)} \to 0$. 
    
   Let now $\alpha \in \{1,2,3\}$ and let $\e i, \e j$ be two vertices of $T \in \T_\e(\Omega)$ (not necessarily $\e i \in \L^1_\e$ and $\e j \in \L^2_\e$) with $j - i = \tb_\alpha$. By a Taylor expansion there exists $\xi$ belonging to the segment $[\e i, \e j]$ such that 
    \begin{equation*}
        \begin{split}
           | \nabla \hat v_\e(x) \tb_\alpha - \nabla v(x)\tb_\alpha | & = \Big| \frac{v(\e j) - v(\e i)}{\e} - \nabla v(x)\tb_\alpha \Big| \\
            & = \Big| \nabla v(\e i)  \tb_\alpha + \frac{1}{2} \D^2 v(\xi)(\e j - \e i) \cdot ( j - i) - \nabla v(x)\tb_\alpha \Big| \\
            & \leq \Big| \nabla v(\e i)  \tb_\alpha  - \nabla v(x)\tb_\alpha \Big| + \frac{1}{2} \| \D^2 v \|_{L^\infty( U)} \e \\
            & \leq C \| \D^2 v \|_{L^\infty(U)} \e 
        \end{split}
    \end{equation*}
    for every $x \in T$, which yields $\|\nabla \hat v_\e - \nabla v\|_{L^2(\Omega)} \to 0$. Let us fix  $\lambda \in (0, 1)$ and let  $\eta \in (0,1)$  be as in Lemma~\ref{lemma:bounds with XY}. Let $T = \conv\{\e i, \e j , \e k\} \in \T_\e(\Omega)$ with $\e i \in \L^1_\e$, $\e j \in \L^2_\e$, $\e k \in \L^3_\e$. For $\e$ small enough we have that 
    \begin{align*}
        \d_{\S^1}(v_\e(\e i), v_\e(\e j)) & \leq \frac{\pi}{2} |v_\e(\e i) - v_\e(\e j)| \leq \frac{\pi}{2} \|\nabla v\|_{L^\infty(\Omega)} \e  <  \min\big\{ \eta , \tfrac{\pi}{3} \big\}  \, , \\
        \d_{\S^1}(v_\e(\e i), v_\e(\e k)) & \leq \frac{\pi}{2} |v_\e(\e i) - v_\e(\e k)| \leq \frac{\pi}{2} \|\nabla v\|_{L^\infty(\Omega)} \e  <  \min\big\{ \eta, \tfrac{\pi}{3} \big\}  \, .
    \end{align*}
    In particular, this implies that $\chi(u_\e) > 0$,  as $(u_\e(\e i), u_\e(\e j), u_\e(\e k))$ are in a counterclockwise order (see \cite[Remark 2.3]{BacCicKreOrl}). By Lemma~\ref{lemma:bounds with XY}, by Remark~\ref{rmk:XY is integral}, and since $\|\nabla \hat v_\e - \nabla v\|_{L^2(\Omega)} \to 0$,
    \begin{equation*}
        \begin{split}
            \limsup_{\e \to 0}\frac{1}{\e^2}E_\e(u_\e,\Omega) &  \leq  (1+\lambda)\limsup_{\e \to 0} \frac{1}{\e^2}XY_\e(v_\e,\Omega) \leq  (1+\lambda) \limsup_{\e \to 0} \sqrt{3} \int_{\Omega} |\nabla \hat v_\e|^2 \, \d x  \\
            & \leq  (1+\lambda) \sqrt{3} \int_{\Omega} |\nabla   v|^2 \, \d x  \, .
        \end{split}
    \end{equation*}
    Letting $\lambda \to 0$ we conclude the proof of~\eqref{eq:limsup}. Let us prove that $\chi(u_\e) \to 1$ in $L^1(\Omega)$. From~\eqref{eq:limsup} we get that $\frac{1}{\e^2} E_\e(u_\e,\Omega) \leq C$. Using Lemma \ref{lemma:chi converges to 1} and using the fact that $\chi(u_\e) > 0$ (independent of the subsequence), we conclude that  $\chi(u_\e) \to 1$ in $L^1(\Omega)$. 

    We assume now that $v \in H^1(\tilde \Omega;\S^1)$ and we regularize it. By Schoen-Uhlenbeck's approximation theorem for Sobolev maps between manifolds~\cite[Section 4]{SchUhl}, there exists a sequence $v_n \in C^\infty(\tilde \Omega;\S^1) \cap H^1(\tilde \Omega;\S^1)$ such that $\| v_n - v \|_{H^1(\tilde \Omega;\R^2)} \leq \frac{1}{n}$ (see also~\cite[5.1, Theorem~3 and Remark~1]{GiaModSou-I}). Then we conclude the proof of the limsup inequality by a standard diagonal argument. 
\end{proof}

 A consequence of the $\Gamma$-limit result in the bulk scaling is the following lower bound for the energy under a degree constraint on the spin field. To properly set the constraint, we define the set of admissible spin fields with degree $d$ in an annulus $\A{r}{R}$ by
\begin{equation*}
    \Adm{r}{R}^\e(d) := \Big\{ u \in \SF_\e : \mu_v(T) = 0 \text{ for every } T \in \T_\e(\mathbb{R}^2) \text{  with  } T\cap \A{r}{R} \neq \emptyset \,, \  \mu_v(B_r) = d  \Big\} \, ,
\end{equation*} 
where $v \in \SF_\e$ is the auxiliary spin field associated to $u$ defined as in~\eqref{def:from u to v}.

\begin{proposition} \label{prop:convergence of minima}
    For every $0 < r < R$, we have 
    \begin{equation*}
        \liminf_{\e \to 0}   \ \inf \Big\{ \frac{1}{\e^2} E_{\e}(u,\A{r}{R}) : u \in \Adm{r}{R}^\e(d) \Big\} \geq 2\sqrt{3} \pi |d|^2 \log \frac{R}{r} \, .
    \end{equation*}
\end{proposition}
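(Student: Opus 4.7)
The plan is to extract from a minimizing sequence $u_\e \in \Adm{r}{R}^\e(d)$ a limiting map $v\in H^1(\A{r}{R};\S^1)$ of degree $d$ on every inner circle, and then invoke the classical continuum Ginzburg--Landau lower bound. We may assume $\frac{1}{\e^2}E_\e(u_\e,\A{r}{R})\leq C$, otherwise the liminf is infinite. The admissibility $\mu_{v_\e}(T)=0$ on every $T$ meeting $\A{r}{R}$, combined with Remark~\ref{rmk:vorticity and chirality}, gives $\chi(u_\e,T)\geq -1+\eta'$ on such triangles; since $\e^2\ll\e$ and $\A{r}{R}$ is connected, Lemma~\ref{lemma:chi converges to 1} rules out the limit $-1$ and forces $\chi(u_\e)\to 1$ in $L^1(\A{r}{R})$. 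Theorem~\ref{thm:bulk} applied with $\Omega=\A{r}{R}$ then yields, up to a subsequence, $\hat v_\e \to v$ in $L^2(\A{r}{R};\R^2)$ with $v\in H^1(\A{r}{R};\S^1)$ and
\[
\sqrt{3}\int_{\A{r}{R}}|\nabla v|^2\,\d x\leq\liminf_{\e\to 0}\frac{1}{\e^2}E_\e(u_\e,\A{r}{R}).
\]

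The key step is to show that $\deg(v,\de B_\rho)=d$ for every $\rho\in(r,R)$. I would work with the $\S^1$-valued interpolation $\overarc v_\e$ of Remark~\ref{rmk:arcwise interpolation}, which lies in $W^{1,\infty}(\A{r}{R};\S^1)$ by admissibility. Combining Lemma~\ref{lemma:bounds with XY} on the triangles with $\chi(u_\e,T)>1-\eta$ (where $XY_\e \leq C E_\e$) with Remark~\ref{rmk:chiralityenergy} applied to the remaining triangles (forcing $E_\e(u_\e,T)\geq c\e^2$ and thus bounding their number by a constant), one obtains $\frac{1}{\e^2}XY_\e(v_\e,\A{r}{R})\leq C$, whence by item~(4) of Remark~\ref{rmk:arcwise interpolation}, $\int_{\A{r}{R}}|\nabla\overarc v_\e|^2\,\d x\leq C$. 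Therefore $\overarc v_\e\wto v$ weakly in $H^1(\A{r}{R};\R^2)$ along a subsequence (the limit agrees with that of $\hat v_\e$ since a direct triangle-wise sup-norm comparison yields $\|\hat v_\e-\overarc v_\e\|_{L^2}\to 0$). Being continuous and $\S^1$-valued on the annulus, $\overarc v_\e$ satisfies $\deg(\overarc v_\e,\de B_\rho)=\mu_{v_\e}(B_r)=d$ for a.e.\ $\rho\in(r,R)$. Using~\eqref{eq:degree for H1}, Fubini, and the weak convergence $j(\overarc v_\e)\wto j(v)$ in $L^1$ (product of the $L^2$-weakly convergent gradients and the a.e.-convergent, uniformly bounded $\overarc v_\e$), one passes to the limit in
\[
d\,(R-r)=\int_r^R\deg(\overarc v_\e,\de B_\rho)\,\d\rho=\frac{1}{\pi}\int_{\A{r}{R}}j(\overarc v_\e)\cdot\frac{x^\perp}{|x|}\,\d x,
\]
obtaining $\int_r^R\deg(v,\de B_\rho)\,\d\rho=d\,(R-r)$. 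The constancy property~\eqref{eq:degree independent} of the degree on $H^1$-maps of the annulus then forces $\deg(v,\de B_\rho)=d$ for every $\rho\in(r,R)$.

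The proof concludes with the classical continuum estimate: on a.e.\ $\rho\in(r,R)$ the slice $v|_{\de B_\rho}$ belongs to $H^1(\de B_\rho;\S^1)$ and admits a parametrized lift with angular variation $2\pi d$, so Cauchy--Schwarz gives $\int_{\de B_\rho}|\nabla_{\de B_\rho}v|^2\,\d\H^1\geq\frac{2\pi d^2}{\rho}$. Integration in $\rho$ yields $\int_{\A{r}{R}}|\nabla v|^2\,\d x\geq 2\pi d^2\log(R/r)$, and combining with the bulk liminf bound proves the claim. The main obstacle is the passage to the limit in the degree, which forces working with two distinct interpolations of $v_\e$ ($\hat v_\e$ and $\overarc v_\e$) and reconciling their limits, the former providing the sharp energy lower bound via Theorem~\ref{thm:bulk} and the latter carrying the topological information encoded in the discrete admissibility condition.
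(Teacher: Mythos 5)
Your proposal is correct, and it follows essentially the same route as the paper: Remark~\ref{rmk:vorticity and chirality} to pin the chirality away from $-1$, the bulk $\Gamma$-limit of Theorem~\ref{thm:bulk} for compactness and the liminf inequality, the $\S^1$-interpolation $\overarc v_\e$ of Remark~\ref{rmk:arcwise interpolation} to carry the topological information, and the classical slicing-plus-Jensen estimate on circles for the final bound. The only place where you diverge is the identification of $\deg(v,\de B_\rho)=d$. The paper obtains the weak $L^1$ convergence of $j(\overarc v_\e)$ towards $j(v)$ by proving directly that $j(\overarc v_\e)-j(\hat v_\e)\to 0$ in $L^1$ through a triangle-by-triangle pointwise estimate (combined with Lemma~\ref{lemma:rough XY bound}), and then tests against a Lipschitz cut-off. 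You instead extract a weak $H^1$ limit of $\overarc v_\e$ itself (which is available since $\int|\nabla\overarc v_\e|^2$ is controlled by $XY_\e(v_\e,\A{r}{R})/\e^2\leq C$, a bound you re-derive but which is exactly Lemma~\ref{lemma:rough XY bound}), identify it with the limit $v$ of $\hat v_\e$ by a sup-norm comparison, and pass to the limit via coarea and the constancy of degree~\eqref{eq:degree independent}. Both paths ultimately rest on the same two facts — the $XY$-bound on the auxiliary field on vorticity-free triangles and a quantitative comparison between $\overarc v_\e$ and $\hat v_\e$ — so I regard this as a minor variation in presentation rather than a different argument; your version is slightly more structural (avoiding the explicit $j(\overarc v_\e)-j(\hat v_\e)$ computation), while the paper's is more self-contained.
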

\begin{proof}
    For every $\e > 0$ let $u_\e \in \Adm{r}{R}^\e(d)$ be such that 
    \begin{equation}\label{ineq:almostmin}
        \frac{1}{\e^2} E_{\e}(u_\e,\A{r}{R}) \leq \inf \Big\{ \frac{1}{\e^2} E_{\e}(u,\A{r}{R}) : u \in \Adm{r}{R}^\e(d) \Big\} + \e \, .
    \end{equation}
    Without loss of generality, we assume that $\frac{1}{\e^2} E_{\e}(u_\e,\A{r}{R})$ is equibounded. By Remark~\ref{rmk:vorticity and chirality}, there exists $\eta'\in (0,1)$ such that $\chi(u_\e) > -1 + \eta'$ in $\A{r}{R}$. Let $r < r' < R' < R$. By Theorem~\ref{thm:bulk}-{\em i)}, up to a subsequence, either $\chi(u_\e) \to 1$ or $\chi(u_\e) \to -1$ in $L^1(\A{r'}{R'})$. Since $\chi(u_\e) > -1 + \eta'$, the latter possibility is ruled out. Via a diagonal argument, we obtain that $\chi(u_\e) \to 1$ in $L^1(\A{r'}{R'})$ for every $r < r' < R' < R$. By Theorem~\ref{thm:bulk}-{\em i)} and via a diagonal argument we find $v \in H^1(\A{r}{R};\S^1)$ such that $\hat v_\e \wto v$ in $H^1_{\mathrm{loc}}(\A{r}{R};\R^2)$, up to a subsequence that we do not relabel. 
    
    Let us prove that $\deg(v,\de B_\rho) = d$ for every $\rho \in [r,R]$. By~\eqref{ineq:almostmin}, by Theorem~\ref{thm:bulk}--{\em ii)},  since $2 |j(v)| = |\nabla v|$, and~\eqref{eq:degree for H1},  this yields that 
    \begin{equation*}
         \begin{split}
           & \liminf_{\e \to 0}   \ \inf \Big\{ \frac{1}{\e^2} E_{\e}(u,\A{r}{R}) : u \in \Adm{r}{R}^\e(d) \Big\} \geq \liminf_{\e \to 0} \frac{1}{\e^2}E_\e(u_\e,\A{r}{R}) \geq \sqrt{3}\int_{\A{r}{R}}|\nabla v|^2\,\mathrm{d}x \\
           &  \geq 4 \sqrt{3} \int_r^R \int_{\partial B_\rho} \big|j(v)|_{\partial B_{\rho}} \cdot \tau_{\partial B_{\rho}} \big|^2\,\mathrm{d}\mathcal{H}^1 \, \d \rho \geq  \sqrt{3} \int_r^R \frac{2}{\pi\rho} \Big|\int_{\partial B_\rho} j(v)|_{\partial B_\rho}\cdot \tau_{\partial B_{\rho}}\,\mathrm{d}\mathcal{H}^1\Big|^2 \, \d \rho \\
           & = 2 \sqrt{3} \pi \int_{r}^R \frac{1}{\rho}|d|^2\,\mathrm{d}\rho = 2 \sqrt{3}  \pi |d|^2\log \frac{R}{r}\,.
           %& \quad \geq \inf \Big\{ \sqrt{3} \int_{\A{r}{R}} |\nabla v|^2 \, \d x : v \in H^1(\A{r}{R};\S^1) \, , \ \deg(v, \de B_\rho) = d \text{ for } \rho \in [r,R] \Big\} \\
           %& \quad \geq 2\sqrt{3} \pi |d|^2 \log \frac{R}{r} \, .
        \end{split}  
    \end{equation*} 
    To prove that $\deg(v,\de B_\rho) = d$ for every $\rho \in [r,R]$, by~\eqref{eq:degree independent} it is enough to show that $\deg(v,\de B_r) = d$. Let $\overarc v_\e \colon \R^2 \to \S^1$ be the map associated to $v_\e$ given by Remark~\ref{rmk:arcwise interpolation}. Let $r < r' < R' < R$. We claim that $j(\overarc v_\e) - j(\hat v_\e) \to 0$ in $L^1(\A{r'}{R'};\R^2)$. Indeed, given $T =\conv\{\e i, \e j, \e k\} \in \T_\e(\A{r}{R})$, since $\mu_{v_\e}(T) = 0$, we get $\overarc v_\e = \exp(\iota \hat \phi_\e)$, where $\hat \phi_\e$ is the affine function in $T$ with $v_\e(x) = \exp(\iota \hat \phi_\e(x))$ for $x \in \{\e i, \e j, \e k\}$, see~Remark~\ref{rmk:arcwise interpolation}.  Let $\alpha \in \{1,2,3\}$ and let $j - i = \tb_\alpha$. For every $x \in T$ we have that
     \begin{equation*}
        \begin{split}
            | \nabla \overarc v_\e(x)  \tb_\alpha - \nabla \hat v_\e(x)   \tb_\alpha  | & = | \exp(\iota \hat \phi_\e(x))^\perp  \nabla \hat \phi_\e(x) \cdot \tb_\alpha - \nabla \hat v_\e(x)   \tb_\alpha | \\
            & = \Big| \exp(\iota \hat \phi_\e(x))^\perp  \frac{\phi_\e(\e j) - \phi_\e(\e i)}{\e} -  \frac{\exp(\iota \phi_\e(\e j)) - \exp(\iota \phi_\e(\e i))}{\e}\Big| \\
            & =  \Big| \exp(\iota \hat \phi_\e(x))^\perp  \frac{\phi_\e(\e j) - \phi_\e(\e i)}{\e} -  \exp(\iota \xi_{i,j})^\perp  \frac{\phi_\e(\e j) - \phi_\e(\e i)}{\e} \Big| \\
            & = |\exp(\iota \hat \phi_\e(x))^\perp  - \exp(\iota \xi_{i,j})^\perp | | \nabla \hat \phi_\e(x) \cdot \tb_\alpha|  \\
            & \leq | \hat \phi_\e(x)  -   \xi_{i,j}  | |  \nabla \overarc v_\e(x)  \tb_\alpha|  \leq \e |\nabla  \hat \phi_\e(x)| |  \nabla \overarc v_\e(x)  \tb_\alpha| \leq \e  |  \nabla \overarc v_\e(x) |^2,
        \end{split}
     \end{equation*}
     where $\xi_{i,j}$ belongs to the segment $[\phi_\e(\e i), \phi_\e(\e j)]$. Moreover, by a straightforward computation\footnote{\eg in the case $j - i=\tb_1$ and $k-i = \tb_2$, one writes $x = \e i + s \e \tb_1 + t \e \tb_2$ with $s,t \in [0,1]$, $\overarc v_\e(x) = \exp(\iota \hat \phi_\e(x)) = \exp(\iota \phi_\e(\e i))   \exp\big(\iota s (\phi_\e(\e j) - \phi_\e(\e i))\big)  \exp\big(\iota t (\phi_\e(\e k) - \phi_\e(\e i))\big)$,2 and $\hat v_\e(x) = \exp(\iota \phi_\e(\e i)) +  s \big( \exp(\iota \phi_\e(\e j)) - \exp(\iota \phi_\e(\e i)) \big) + t \big(\exp(\iota \phi_\e(\e k)) - \exp(\iota \phi_\e(\e i)) \big)$.} one shows that for every $x \in T$ 
     \begin{equation*}
         |\overarc v_\e(x) - \hat v_\e(x)| \leq C \e |\nabla \hat \phi_\e(x)| = C \e |\nabla \overarc v_\e(x)| \, .
     \end{equation*}
    The previous inequalities and \eqref{eq:arcwise controlled with XY}  yield
     \begin{equation*}
        \begin{split}
            \int_{T} |j(\overarc v_\e) - j(\hat v_\e)| \, \d x  & \leq C \int_{T} | \overarc v_\e - \hat v_\e | |\nabla \overarc v_\e| + |\hat v_\e|| \nabla \overarc v_\e - \nabla \hat v_\e | \, \d x \\
            &  \leq C \int_{T} | \overarc v_\e - \hat v_\e | |\nabla \overarc v_\e| + | \nabla \overarc v_\e - \nabla \hat v_\e | \, \d x  \leq C \e \int_{T} |\nabla \overarc v_\e|^2 \, \d x  \\
            & \leq  \frac{C}{\e} XY_\e(v_\e, T) \leq \frac{C}{\e} E_\e(u_\e, T) \, ,
        \end{split}
    \end{equation*}
    where in the last inequality we used the fact that $\mu_{v_\e}(T) = 0$ and we applied Lemma~\ref{lemma:rough XY bound}. Summing over all triangles $T \in \T_\e(\A{r}{R})$ that intersect $\A{r'}{R'}$ we conclude that 
    \begin{equation*}
        \int_{\A{r'}{R'}} |j(\overarc v_\e) - j(\hat v_\e)| \, \d x  \leq \frac{C}{\e} E_\e(u_\e, \A{r}{R}) \leq C \e \to 0 \, ,
    \end{equation*}
     which in turn implies that $j(\overarc{v}_\e)\to j(v)$ in $L^1(\A{r'}{R'};\R^2)$. 
    We are now in a position to prove that $\deg(v,\de B_r) = d$. Let $\psi(x) := 1- \min\{\frac{1}{R'-r'}\dist(x, B_{r'}),1\}$. Using the convergence of $j(\overarc{v}_\e)$ together with the fact that $v\in H^1(\A{r}{R};\S^1)$, by~\eqref{eq:degree for H1} and~\eqref{eq:degree independent} we have that 
    \begin{equation*}
        \begin{split}
            & \pi d = \pi \mu_{v_\e}(B_r) = \int_{B_R} J (\overarc v_\e) \psi \, \d x  = -\int_{\A{r'}{R'}} j(\overarc v_\e) \cdot  \nabla^\perp \psi \, \d x \to  \\
            & \quad \to  -\int_{\A{r'}{R'}} j( v) \cdot  \nabla^\perp \psi \, \d x  = \frac{1}{R'-r'}\int_{r'}^{R'} \int_{\de B_\rho}j( v)|_{\de B_\rho} \cdot \tau_{\de B_\rho} \, \d \H^1  \, \d \rho  =  \pi\deg(v,\de B_r)   \, ,
        \end{split}
    \end{equation*}
    which concludes the proof.
\end{proof}

\section{Ball Construction}

 In this section we prove a variant of the well-known ball construction~\cite{San,Jer} suited for our arguments. 

Let $\mathcal{B}= \{B_{r_i}(x_i)\}_{i=1}^N$ be a finite family of open balls such that $B_{r_i}(x_i)\cap B_{r_j}(x_j) = \emptyset$ for every $i,j \in \{1,\ldots,N\}$, $i\neq j$. Let $\mu=\sum_{i=1}^N d_i \delta_{x_i}$, $d_i \in \mathbb{Z} \setminus \{0\}$, $x_i \in \mathbb{R}^2$, and let $\mathcal{E}(\mathcal{B},\mu,\cdot) \colon \mathcal{A}(\mathbb{R}^2) \to [0,+\infty]$ be an increasing set function satisfying the following properties:
\begin{itemize}
\item[(B1)] $\mathcal{E}(\mathcal{B},\mu, U\cup V) \geq \mathcal{E}(\mathcal{B},\mu,U) + \mathcal{E}(\mathcal{B},\mu,V)$ for every $U,V \in \mathcal{A}(\mathbb{R}^2)$ such that $U\cap V = \emptyset$.
\item[(B2)] for every annulus $\A{r}{R}(x) = B_R(x)\setminus \overline{B}_r(x)$, $0<r<R$ with $\A{r}{R}(x) \cap \bigcup_{i=1}^N B_{r_i}(x_i) = \emptyset$, it holds
\begin{align}\label{ineq:lowerboundannulus}
\mathcal{E}(\mathcal{B},\mu,\A{r}{R}(x)) \geq c_0 |\mu(B_r(x))| \log \frac{R}{r}\,,
\end{align}
for some constant $c_0 >0$.
\end{itemize}
 Given a ball $B$, we let $r(B)$ denote its radius. For a family of balls $\mathcal{B}$, we let $\mathcal{R}(\mathcal{B}) := \sum_{B \in \mathcal{B}} r(B)$. 

\begin{lemma}[Ball construction] \label{lemma:ball construction} Let $\mathcal{B}$, $\mu$, and $\mathcal{E}$ be as above. Let $\sigma >0$. Then there exists a one-parameter family $\{ \mathcal{B}(t) \}_{t \geq 0}$ of balls such that
\begin{itemize}
\item[{\rm (1)}] the following inclusions hold:
\begin{align*}
\bigcup_{B \in \mathcal{B}} B \subset \bigcup_{B \in \mathcal{B}(t_1)} \!  B \subset \bigcup_{B \in \mathcal{B}(t_2)}\! B \, , \quad  \text{for every } 0 \leq t_1 \leq t_2\, ;
\end{align*}
\item[{\rm (2)}]  $\overline B\cap \overline B^\prime = \emptyset$ for every $B,B^\prime \in \mathcal{B}(t)$, $B\neq B^\prime$,  and $t \geq 0$;  
\item[{\rm (3)}] for every $0\leq t_1 \leq t_2$ and every $U \in \mathcal{A}(\mathbb{R}^2)$ we have that
\begin{align*}
\mathcal{E}\Big(\mathcal{B},\mu, U \cap \Big( \bigcup_{B \in \mathcal{B}(t_2)}B \setminus \bigcup_{B \in \mathcal{B}(t_1)}\overline{B}\Big)\Big) \geq c_0 \underset{B \subset U}{\sum_{B \in\mathcal{B}(t_2)}} |\mu(B)| \log\frac{1+t_2}{1+t_1}\, ;
\end{align*}
\item[{\rm (4)}] $|\mu|(B_{r+\sigma}(x)\setminus B_{r-\sigma}(x))=0$ for every $B=B_r(x) \in \mathcal{B}(t)$ and for every $t\geq 0$;
\item[{\rm (5)}] for every $t\geq 0$ we have that $\mathcal{R}(\mathcal{B}(t)) \leq (1+t) \left(\mathcal{R}(\mathcal{B})  +N\sigma \right)$;
\item[{\rm (6)}]  for every $t \geq 0$, $B \in \mathcal{B}$, and $B' \in \mathcal{B}(t)$ with $B \subset B'$ we have that $r(B') \geq (1+t) r(B)$.
\end{itemize}
\end{lemma}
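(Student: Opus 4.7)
My approach follows the classical expansion--merging ball construction of Jerrard and Sandier~\cite{San,Jer}, modified to maintain property (4). I first define $\mathcal{B}(0)$ by enlarging each ball of $\mathcal{B}$ by at most $\sigma$ so that no $x_i\in\supp\mu$ lies within distance $\sigma$ of any boundary, and by merging any balls that come to overlap (replacing tangent balls $B_{r_1}(y_1), B_{r_2}(y_2)$ by a single ball of radius $r_1+r_2$ centered on the segment $[y_1,y_2]$ and containing both). After iterating to remove all residual overlaps and $\sigma$-violations, this yields a disjoint family $\mathcal{B}(0)$ satisfying~(4) with $\mathcal{R}(\mathcal{B}(0))\leq\mathcal{R}(\mathcal{B})+N\sigma$.

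For $t>0$ I alternate two operations. During expansion, each ball $B_{r_0}(x)\in\mathcal{B}(t_0)$ grows homothetically to $B_{r_0(1+t)/(1+t_0)}(x)$ with $x$ fixed; whenever the boundary of an expanding ball comes within distance $\sigma$ of an external $x_i\in\supp\mu$, the radius is instantaneously pushed past $x_i$ by at most $2\sigma$. Since each atom can be absorbed at most once along a given ball lineage, and each absorption triggers at most one such jump, these discrete increments accumulate to $O(N\sigma)$ over the whole construction. When two balls become tangent, they are merged as described above, and any residual $\sigma$-violation near the boundary of the merged ball is corrected by an additional $2\sigma$-jump. Properties (1), (2), (5), and (6) are immediate: inclusions hold because radii are monotone nondecreasing; disjointness is preserved because merging is triggered precisely at tangency; the total radius grows by the factor $(1+t)/(1+t_0)$ under expansion and is preserved by merging, up to the $O(N\sigma)$ contribution from jumps; and a ball $B\in\mathcal{B}$ contained in $B'\in\mathcal{B}(t)$ must have been enlarged by at least the factor $1+t$ through successive expansions, giving $r(B')\geq(1+t)\,r(B)$.

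The core of the proof, which I expect to be the main obstacle, is property (3). Fixing $0\leq t_1\leq t_2$ and $U\in\mathcal{A}(\R^2)$, I partition $[t_1,t_2]$ at the finitely many merging and jump times, and on each sub-interval $[\tau',\tau'']$ I apply (B2) to each ball $B_r(x)\in\mathcal{B}(\tau')$ that evolves undisturbed to a ball of radius $r(1+\tau'')/(1+\tau')$: property (4) guarantees that the swept annulus is disjoint from $\supp\mu$, so the energy on it is bounded below by $c_0|\mu(B_r(x))|\log\frac{1+\tau''}{1+\tau'}$. Summing these estimates via (B1) over the disjoint annuli and telescoping across sub-intervals---using at each merging the identity $\mu(B_{\mathrm{merged}})=\sum_j\mu(B_j)$, hence $|\mu(B_{\mathrm{merged}})|\leq\sum_j|\mu(B_j)|$---yields, for each $B\in\mathcal{B}(t_2)$ contained in $U$, a contribution of at least $c_0|\mu(B)|\log\frac{1+t_2}{1+t_1}$. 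Summing over such $B$ then gives~(3). The main subtlety, which explains the appearance of $|\mu(B)|$ rather than $|\mu|(B)$, is precisely this merging step and the triangle inequality it invokes; a secondary difficulty is the bookkeeping needed to ensure (4) is preserved at every $t\geq 0$ within the $N\sigma$ budget of~(5), which rests on the observation that each atom can be crossed at most once along a ball lineage.
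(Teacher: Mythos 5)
Your overall architecture — initial $\sigma$-enlargement at $t=0$, homothetic expansion, merging at tangency, and applying (B2) on the swept annuli with telescoping across merging times using $|\mu(B_{\mathrm{merged}})| \leq \sum_j|\mu(B_j)|$ — matches the paper's proof in structure and is sound. The genuine divergence, and the gap, is in how you handle property~(4).

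You propose to \emph{actively maintain} (4) by means of instantaneous radius jumps of size $2\sigma$ whenever the boundary of an expanding ball gets within distance $\sigma$ of an atom, plus corrective jumps after each merge. This is both unnecessary and problematic. It is unnecessary because of the following observation, which is the key to the paper's proof: if one enlarges each initial ball by \emph{exactly} $\sigma$ at $t=0$ (so that $B_{r_i+\sigma}(x_i)$ becomes, or is contained in, a ball of $\mathcal{B}(0)$), then by induction every $B_{r_i+\sigma}(x_i)$ remains contained in some $B_{r'}(y)\in\mathcal{B}(t)$ for all $t$ — expansion preserves containment, and merging produces a ball containing all constituents. Combined with the strict disjointness of closures from (2), one checks that for any $B_r(x)\in\mathcal{B}(t)$ and any initial ball $B_{r_i}(x_i)$: either $B_{r_i+\sigma}(x_i)\subset B_r(x)$, so $B_{r_i}(x_i)\subset B_{r-\sigma}(x)$; or $B_{r_i+\sigma}(x_i)\subset B_{r'}(y)$ with $\overline{B_{r'-\sigma}(y)}\cap\overline{B_{r+\sigma}(x)}=\emptyset$, so $B_{r_i}(x_i)\cap B_{r+\sigma}(x)=\emptyset$. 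Either way $B_{r_i}(x_i)$, hence $x_i$, misses the thickened annulus $B_{r+\sigma}(x)\setminus B_{r-\sigma}(x)$, and~(4) holds with no runtime intervention. (In fact even your weaker invariant — every atom $\sigma$-deep inside its containing ball — is preserved automatically by expansion and merging, by a short triangle-inequality computation; your jumps never fire.) The jump mechanism is problematic because it breaks the proof of~(5): a jump of size $2\sigma$ at time $t^*$ contributes $\tfrac{1+t}{1+t^*}\cdot 2\sigma$ to $\mathcal{R}(\mathcal{B}(t))$, and summing over up to $N$ such jumps (on top of the initial $N\sigma$) gives a bound of the form $(1+t)(\mathcal{R}(\mathcal{B}) + 3N\sigma)$, not the constant $N\sigma$ stated in~(5); your ``accumulate to $O(N\sigma)$'' accounting does not address this. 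Moreover, the jump--merge--jump cascade is only informally dismissed (``each atom absorbed at most once along a lineage''), and you do not prove the hybrid dynamics is well-defined or terminating. Replacing ``enlarge by at most $\sigma$, then jump as needed'' by ``enlarge each initial ball by exactly $\sigma$ once'' eliminates all of these issues and makes~(5) exact: the $N\sigma$ is paid once at $t=0$ and thereafter $\mathcal{R}$ only scales by $\tfrac{1+t}{1+T_k}$ during expansion and never increases at a merge, since $r_j^k\leq\sum_{B\in S_j^k}r(B)$.
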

\begin{proof}  In order to construct the family $\mathcal{B}(t)$, we closely follow the strategy of the ball construction due to Sandier~\cite{San} and Jerrard~\cite{Jer}. We adapt the argument in order to be sure that condition~({\rm 4}) holds true, \ie that the measure $\mu$ is supported far from the boundaries of the balls of the constructed family. 
  
  The ball construction consists in letting the balls alternatively expand and merge into each other.  We let $T_0 := 0$ and we define the family $\mathcal{B}(T_0)$ by distinguishing the following two cases: If $\overline B_{r_i+\sigma}(x_i)\cap \overline B_{r_j+\sigma}(x_j) \neq \emptyset$ for some of the starting balls with $i,j \in \{1,\ldots,N\}$, $i\neq j$, then the construction starts with a merging phase and $T_0 = 0 $ is the first merging time. This phase consists in identifying a suitable partition $\{S_j^0\}_{j=1,\ldots,N_0}$  of the family $\{B_{r_i+\sigma}(x_i)\}_{i=1}^N$ which satisfies the following: for each  $j \in \{1,\dots,N_0\}$ there exists a ball $B_{r_j^0}(x_j^0)$ which contains all the balls in $S_j^0$ and such that
  \begin{itemize}
  \item[{\em i}\,)]  $\overline B_{r_j^0}(x_j^0) \cap \overline B_{r_\ell^0}(x_\ell^0) = \emptyset$   for every $j,\ell \in \{1,\dots, N_0\}$, $j \neq \ell$,
  \item[{\em ii}\,)] $r_j^0 \leq \sum_{B \in S_j^0} r(B)\,.$
  \end{itemize}
   We then define
  \begin{equation}\label{eq:ball-construction-Tzero}
  \mathcal{B}(T_0)\defas\{B_{r_j^0}(x_j^0)\colon j=1\,,\ldots\,, N_0\}\,.
  \end{equation}
  If, instead, $\overline B_{r_i+\sigma}(x_i)\cap \overline B_{r_j+\sigma}(x_j) = \emptyset$ for every $i,j \in \{1,\ldots,N\}$, $i\neq j$, then we let $N_0 := N$, $B_{r_j^0}(x_j^0) := B_{r_j+\sigma}(x_j)$ for $j = 1, \dots, N$ in ~\eqref{eq:ball-construction-Tzero}, and we start with an expansion phase. During this first expansion phase, we let the balls expand without changing their centres, in such a way that the  new radius $r_j^0(t)$ of the ball centred in  $x_j^0$ satisfies
\begin{align*}%\label{eq:rit}
 \frac{r_j^0(t)}{r_j^0} = \frac{1+t}{1+T_0} = 1+t\,,
\end{align*}
 for every $t\geq T_0 = 0$ and every $j \in \{1,\dots,N_0\}$. 
We continue the first expansion phase as a long as
\begin{align}\label{ineq:distancealpha}
 \overline B_{r_j^0(t)}(x_j) \cap \overline B_{r_\ell^0(t)}(x_\ell) =\emptyset\, \text{ for every } j,\ell \in \{1,\dots,N_0\}\, , \ j \neq \ell\,, 
\end{align}
and we let $T_1$ denote the smallest  $t\geq T_0 = 0$ such that \eqref{ineq:distancealpha} is violated. (Note that $T_1 > 0$.)  At time $T_1$, following the same procedure described above, a merging phase starting from the balls $\{B_{r_j^0(T_1)}(x_j^0)\}_{j=1}^{N_0}$ begins, that defines a new family of balls $\{B_{r_j^1}(x_j^1)\}_{j=1}^{N_1}$. 

We iterate this procedure by alternating merging and expansion phases to obtain the following: a discrete set of times $\{T_0,\ldots,T_K\}$, $K \leq N$; for each $k \in \{1,\ldots,K\}$, a partition $\{S_j^k\}_{j=1}^{N_k}$ of~$\{B_{r_j^{k-1}(T_k)}(x_j^{k-1})\}_{j=1}^{N_{k-1}}$; for each subclass $S_j^k$, a ball $B_{r_j^k}(x_j^k)$, which contains the balls in $S_j^k$ and such that the following properties are satisfied:
\begin{itemize}
\item[{\em i}\,)]  $\overline B_{r_j^k}(x_j^k) \cap \overline B_{r_\ell^k}(x_\ell^k) = \emptyset$ for every $j,\ell \in \{1,\dots, N_k\}$, $j \neq \ell$,
\item[{\em ii}\,)] $r_j^k \leq \sum_{B \in S_j^k} r(B) $.
\end{itemize}
For $t\geq 0$, the family $\mathcal{B}(t)$ is given by $\{B_{r_j^k(t)}(x_j^k)\}_{j=1}^{N_k}$ for $t \in [T_k,T_{k+1})$ and $k =0,\ldots,K$, where we set $T_{K+1}:=+\infty$ (in other words, it consists of a single expanding ball for $t \geq T_K$). For every $t \in [T_k,T_{k+1})$ and for $j = 1, \dots, N_k$, the radii satisfy  
\begin{equation}\label{eq:rj1t}
  \frac{r_j^k(t)}{r_j^k} = \frac{1+t}{1+T_k} \, .
\end{equation}
Note that 
\begin{equation} \label{eq:initial radii}
  \mathcal{R}(\mathcal{B}(T_0)) = \sum_{j=1}^{N_0} r_j^0 \leq \mathcal{R}(\mathcal{B}) + N \sigma \, .
\end{equation} 

It remains to check that conditions {\rm (1)}--{\rm (5)} hold true.  By construction, it is clear that {\rm (1)} and~{\rm (2)} are satisfied.

Let us prove {\rm (3)}. We note that,  by {\rm (1)},  
\begin{align}\label{ineq:tau1tau2}
\underset{B \subset U}{\sum_{B \in \mathcal{B}(\tau_1)}} |\mu(B)| \geq \underset{B \subset U}{\sum_{B \in \mathcal{B}(\tau_2)}} |\mu(B)| \quad \text{ for every } 0 < \tau_1 < \tau_2\,.
\end{align}
Let $t_1 < \overline{t} < t_2$. In view of \eqref{ineq:tau1tau2}, since $\mathcal{E}$ is an increasing sub-additive set-function, if we show that {\rm (3)} holds true for the pairs $(t_1,\overline{t})$ and $(\overline{t},t_2)$, then {\rm (3)} also follows for $t_1$ and $t_2$. Therefore we can assume, without loss of generality, that $T_k \notin (t_1,t_2)$ for every $k=1,\ldots,K$. Let $t_1 < \tau < t_2$ and let $B \in \mathcal{B}(\tau)$. Then, there exists a unique ball $B^\prime \in \mathcal{B}(t_1)$ such that $B^\prime \subset B$. By construction $\mu(B) = \mu(B^\prime)$ and, by \eqref{ineq:lowerboundannulus}, we have that
\begin{align*}
\mathcal{E}(\mathcal{B},\mu, B\setminus B^\prime) \geq c_0 |\mu(B^\prime)| \log \frac{1+\tau}{1+t_1}=  c_0 |\mu(B)| \log \frac{1+\tau}{1+t_1}\,.
\end{align*}
Summing up over all $B \in \mathcal{B}(\tau)$ with $B \subset U$ and using \eqref{ineq:tau1tau2} yields
\begin{equation*}
\mathcal{E}\Big(\mathcal{B},\mu, U \cap \Big(\bigcup_{B \in \mathcal{B}(t_2)} B \setminus \bigcup_{B \in \mathcal{B}(t_1)} B \Big)\Big) \geq c_0\underset{B \subset U}{\sum_{B \in \mathcal{B}(\tau)}} |\mu(B)| \log \frac{1+\tau}{1+t_1} \geq c_0\underset{B \subset U}{\sum_{B \in \mathcal{B}(t_2)}} |\mu(B)| \log \frac{1+\tau}{1+t_1}\,.
\end{equation*}
Property {\rm (3)} follows by letting $\tau \to t_2$.

Let us prove {\rm (4)}.  Let $t \geq 0$ and let $B = B_{r}(x) \in \mathcal{B}(t)$. Let us fix an initial ball $B_{r_i}(x_i)$. By construction, $B_{r_i+\sigma}(x_i)$ is contained in some ball $B_{r'}(y) \in \mathcal{B}(t)$, \ie $B_{r_i}(x_i) \subset B_{r'-\sigma}(y)$. Then $B_{r_i}(x_i) \cap B_{r+\sigma}(x) \subset B_{r-\sigma}(x)$, since condition~{\rm (2)} implies that $\overline B_{r'-\sigma}(y) \cap \overline B_{r+\sigma}(x) = \emptyset$ whenever~$y \neq x$. This yields
\begin{equation*}
  B_{r+\sigma}(x) \cap \bigcup_{i=1}^N B_{r_i}(x_i) \subset B_{r-\sigma}(x) \quad \implies \quad B_{r+\sigma}(x) \sm B_{r-\sigma}(x) \subset  B_{r+\sigma}(x) \sm   \bigcup_{i=1}^N B_{r_i}(x_i)   \, .
\end{equation*}
Therefore
\begin{align*}
|\mu|(B_{r+\sigma}(x)\setminus B_{r-\sigma}(x)) \leq |\mu| \Big(\mathbb{R}^2 \setminus \bigcup_{i=1}^N B_{r_i}(x_i)\Big) = 0\,,
\end{align*}
where we used the fact that $\mu$ is supported on $\{x_1,\dots,x_N\}$. This proves~{\rm (4)}.

To prove {\rm (5)}, we start by observing that, by~\eqref{eq:rj1t},
\begin{equation*}
  \mathcal{R}(\mathcal{B}(t)) = \sum_{j=1}^{N_k} r_j^k(t) = \sum_{j=1}^{N_k} \frac{1+t}{1+T_k} r_j^k = \frac{1+t}{1+T_k} \mathcal{R}(\mathcal{B}(T_k)) 
\end{equation*}
for every $t \in [T_k,T_{k+1})$ and every $k\in\{0\,,\ldots\,, K\}$. It thus suffices to show that $\mathcal{R}(\mathcal{B}(T_k))\leq(1+T_k)(\mathcal{R}(\mathcal{B})+N\sigma)$ for every $k\in\{0\,,\ldots\,, K\}$. For $k=0$ this is a direct consequence of~\eqref{eq:initial radii}. For $k\geq 1$, it follows inductively by applying {\rm (5)} for $t\in[T_{k-1},T_k)$ and observing that
\begin{align*}
\mathcal{R}(\mathcal{B}(T_k)) = \sum_{j=1}^{N_k} r_j^k \leq \sum_{j=1}^{N_k}\sum_{B \in S_j^k} r(B)  = \sum_{j=1}^{N_{k-1}} r_{j}^{k-1}(T_k) =  \limsup_{t \nearrow T_k} \, \mathcal{R}(\mathcal{B}(t)) \leq (1+T_k)\left(\mathcal{R}(\mathcal{B}) +N\sigma\right)\,,
\end{align*}
 which follows from~{\em ii}\,).

 Finally, property~(6) holds true by construction.
\end{proof}

\section{Proof of Theorem~\ref{thm:main}-{\normalfont \em i)} and {\normalfont \em ii)}}

In this section we prove Theorem~\ref{thm:main}-{\em i)} and {\em ii)}. We start by proving  a first estimate on the XY-energy of the auxiliary spin field, from which, however, the compactness statement does not follow straightforwardly. 

\begin{lemma} \label{lemma:XY is log2} Let $\Omega \subset \R^2$ be open, bounded, and connected let $\Omega' \subset \subset \Omega$  with  Lipschitz boundary. Let $u_\e \in \SF_\e$ be such that $E_\e(u_\e,\Omega) \leq C \e^2 |\log \e|$ and $\chi(u_\e)\to 1$, and let $v_\e \in \SF_\e$ be the auxiliary spin field defined as in~\eqref{def:from u to v}. There exists a constant $C>0$ depending on $\Omega'$ such that
    \begin{equation*}
         XY_\e(v_\e,\Omega') \leq C  \e^2 |\log \e|^2 ,
    \end{equation*}
    for $\e$ sufficiently small.
\end{lemma}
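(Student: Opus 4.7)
The plan is straightforward: split the triangles of $\T_\e(\Omega')$ into those where the chirality is close to $1$ and those where it is not, control the ``good'' contribution via Lemma~\ref{lemma:bounds with XY}, and estimate the ``bad'' contribution by counting exceptional triangles via Lemma~\ref{lemma:counting}.

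First, I would fix some $\lambda\in(0,1)$ (say $\lambda=\tfrac12$) and let $\eta\in(0,1)$ be the constant given by Lemma~\ref{lemma:bounds with XY}. For every triangle $T\in\T_\e(\Omega')$ with $\chi(u_\e,T)>1-\eta$, the lemma yields
\begin{equation*}
XY_\e(v_\e,T)\leq \frac{1}{1-\lambda}E_\e(u_\e,T)\,.
\end{equation*}
Summing over these ``good'' triangles and using the hypothesis gives a contribution bounded by $\frac{1}{1-\lambda}E_\e(u_\e,\Omega)\leq C\e^2|\log\e|$, which is already within the desired bound.

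For the remaining ``bad'' triangles, I would use the trivial estimate $XY_\e(v_\e,T)\leq 6\e^2$ together with Lemma~\ref{lemma:counting} applied with $U=\Omega'$. Since $\chi(u_\e)\to 1$ in $L^1(\Omega)$ and $E_\e(u_\e,\Omega)\leq C\e^2|\log\e|$, the latter lemma gives, for $\e$ small,
\begin{equation*}
\#\{T\in\T_\e(\Omega'):\chi(u_\e,T)\leq 1-\eta\}\leq C_\eta\Big(\frac{1}{\e^2}E_\e(u_\e,\Omega)\Big)^{\!2}\leq C_\eta|\log\e|^2\,.
\end{equation*}
Multiplying by $6\e^2$ yields a bound $C\e^2|\log\e|^2$ for the bad contribution.

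Combining the two estimates,
\begin{equation*}
XY_\e(v_\e,\Omega')\leq \frac{1}{1-\lambda}E_\e(u_\e,\Omega)+6\e^2\,\#\{T\in\T_\e(\Omega'):\chi(u_\e,T)\leq 1-\eta\}\leq C\e^2|\log\e|^2\,,
\end{equation*}
for $\e$ sufficiently small, which is the claim. There is no real obstacle here since all the main ingredients are already in place: Lemma~\ref{lemma:bounds with XY} for the pointwise comparison under a chirality hypothesis, and the quadratic counting estimate from Lemma~\ref{lemma:counting} which is exactly what produces the extra $|\log\e|$ factor compared with $E_\e$.
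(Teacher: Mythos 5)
Your proof is correct and follows essentially the same route as the paper's: split triangles by whether $\chi(u_\e,T)>1-\eta$, bound the good ones via Lemma~\ref{lemma:bounds with XY} against $E_\e$, and bound the bad ones by the trivial $6\e^2$ combined with the quadratic count from Lemma~\ref{lemma:counting}. The only cosmetic difference is that the paper keeps $\lambda$ generic and notes the additional condition $\dist(\Omega',\partial\Omega)>\sqrt{3}\e$ needed to ensure $\T_\e(\Omega')$ is well within $\Omega$; otherwise the arguments coincide.
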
  
\begin{proof}
 Let $\Omega' \subset \subset \Omega$ with  Lipschitz boundary  and assume  $\mathrm{dist}(\Omega^\prime,\partial \Omega) > \sqrt{3}\e$. Fix $\lambda \in (0,1)$  and let $\eta \in (0,1)$ be given by Lemma~\ref{lemma:bounds with XY}. For every $T \in \T_\e(\Omega')$ with $\chi(u_\e, T) > 1 - \eta$, by Lemma~~\ref{lemma:bounds with XY} we have that
        \begin{equation} \label{eq:good triangles}
             (1 - \lambda)  XY_\e(v_\e,T) \leq E_\e(u_\e,T) \, .
        \end{equation}
        For $T \in \T_\e(\Omega')$ with $\chi(u_\e, T) \leq 1 - \eta$ we estimate
        \begin{equation*} %\label{eq:bad triangles}
            XY_\e(v_\e,T) \leq 6 \e^2 .
        \end{equation*}
        and we count the number of such triangles. By Lemma~\ref{lemma:counting} we obtain that there exists $C>0$ depending on $\eta$ and $\Omega'$ such that
        \begin{equation} \label{eq:number bad triangles}
            \# \{ T \in  \T_\e(\Omega' ) : \chi(u_\e, T) \leq 1 - \eta \} \leq  \frac{ C_\eta }{\e^4}  E_\e(u_\e,\Omega)^2 \leq  C |\log \e|^2
        \end{equation} 
        for $\e$ sufficiently small.
        Putting together~\eqref{eq:good triangles}--\eqref{eq:number bad triangles}, we infer that
        \begin{equation*}
            \begin{split}
                XY_\e(v_\e, \Omega') & \leq \sum_{\substack{T \in \T_\e(\Omega') \\ \chi(u_\e,T) > 1-\eta} } \!\!\!\! XY_\e(v_\e, T) \ \ +  \! \! \sum_{\substack{T \in \T_\e(\Omega') \\ \chi(u_\e,T) \leq 1-\eta} }  \!\!\!\! XY_\e(v_\e, T) \\
                & \leq C E_\e(u_\e,\Omega) +  C \e^2 |\log \e|^2 \leq C   \e^2 |\log \e|^2 ,
            \end{split}
        \end{equation*}
        thus concluding the proof of the lemma.
\end{proof}

We are now in a position to prove the compactness statement Theorem~\ref{thm:main}-{\em i)}. Let $\Omega$ be an open, bounded set. Let $u_\e \in \SF_\e$ be such that $E_\e(u_\e,\Omega) \leq C\e^2 |\log \e|$. The fact that either $\chi(u_\e) \to 1$ or $\chi(u_\e) \to -1$ in $L^1(\Omega)$ (up to a subsequence) follows from Lemma~\ref{lemma:chi converges to 1}. In the following, we assume that $\chi(u_\e) \to 1$ and we let $v_\e \in \SF_\e$ be the auxiliary spin field defined as in~\eqref{def:from u to v}. 
 We plan to apply the Ball Construction of Lemma~\ref{lemma:ball construction} to the measures~$\mu_{v_\e}$.  

 \begin{figure}[H]
    \includegraphics{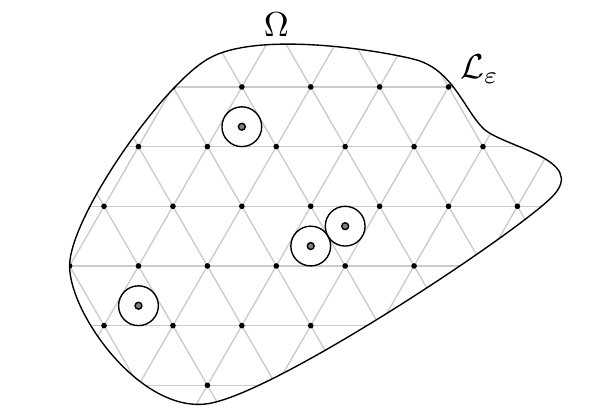}
    % \begin{tikzpicture}[scale=0.7]

    %     \draw(0:4.3)++(60:3.3) node{$\mathcal{L}_\varepsilon$};

    %     \draw(3,3.85) node[anchor=north]{$\Omega$};

    %     \begin{scope}
    %      \clip plot [smooth cycle] coordinates {(0,0) (2,3) (5,3) (6,2) (7,1) (2,-2)};
         
    %      \foreach \j in {-5,...,8}{
        
    %       \draw[black!20](0:\j)++(60:-5)--++(60:10);
          
    %         \draw[black!20](0:\j)++(120:-3)--++(120:10);
          
    %        \draw[black!20](60:\j)++(0:-5)--++(0:13);
           
    %        }
    %         \foreach \j in {-5,...,8}{
    %      \foreach \k in {-5,...,8}{
         
    %      \draw[fill=black](0:\j)++(60:\k)circle(.03);
         
    %      }
         
    %      }

    %     \end{scope}
        
    %     \draw[fill=gray](120:2)++(0:3)++(60:-3)++(0:1/3)++(60:1/3)circle(.05);
        
    %     \draw(120:2)++(0:3)++(60:-3)++(0:1/3)++(60:1/3)circle({sqrt(3)/6});
        
    %     \draw[fill=gray](120:2)++(0:3)++(0:1/3)++(60:1/3)circle(.05);
        
    %     \draw(120:2)++(0:3)++(0:1/3)++(60:1/3)circle({sqrt(3)/6});
    %     \draw[fill=gray](0:3)++(0:1/3)++(60:1/3)circle(.05);
    %     \draw[fill=gray](0:4)++(120:1/3)++(60:1/3)circle(.05);
        
    %     \draw(0:3)++(0:1/3)++(60:1/3)circle({sqrt(3)/6});
    %     \draw(0:4)++(120:1/3)++(60:1/3)circle({sqrt(3)/6});
        
    %     \draw plot [smooth cycle] coordinates {(0,0) (2,3) (5,3) (6,2) (7,1) (2,-2)};

    %     \end{tikzpicture}
        \caption{Example of a possible family of balls $\mathcal{B}_\e$, from which the ball construction starts.}
        \label{fig:beps}
    \end{figure}

    Let us fix $\Omega' \subset\subset \Omega'' \subset \subset \Omega$ with Lipschitz boundary. By Remark~\ref{rmk:XY controls mass} and by Lemma~\ref{lemma:XY is log2}, for $\e$ sufficiently small we have that there exists a constant $C >0$ depending on $\Omega''$ such that 
    \begin{align}\label{ineq: massboundmu}
     \#\mathrm{supp}(\mu_{v_\e}) \cap \Omega' \leq |\mu_{v_\e}|(\Omega') \leq \frac{C}{\e^2} XY_\e(v_\e, \Omega'') \leq  C  |\log \e|^2. 
    \end{align}
     We consider the family of balls 
    \begin{align}\label{def:Beps}
     \mathcal{B}_\e := \{ B_{\frac{\e}{2\sqrt{3}}}(x) : x \in \mathrm{supp}(\mu_{v_\e}) \cap \Omega' \} \, . 
    \end{align}
    Notice that each of these balls is fully contained in a triangle of the lattice, see Figure~\ref{fig:beps}.
    For every $0<r<R$ and for every $x \in \mathbb{R}^2$ such that $\A{r}{R}(x) \cap \bigcup_{B \in \mathcal{B}_\e} B =\emptyset$ we set 
    \begin{align*}
    \mathcal{E}(\mathcal{B}_\e,\mu_{v_\e},\A{r}{R}(x)) := |\mu_{v_\e}(B_r(x))| \log\frac{R}{r}\,,
    \end{align*}
     and we extend $\mathcal{E}$ to every  $A \in  \mathcal{A}(\mathbb{R}^2)$ by  
    \begin{align}\label{def: setfunction}
    \begin{split}
    \mathcal{E}(\mathcal{B}_\e,\mu_{v_\e},A) := \sup\Big\{ \sum_{j=1}^N \mathcal{E}(\mathcal{B}_\e,\mu_{v_\e},A^j)  : & \ N \in \mathbb{N} \, , \ A^j =\A{r_j}{R_j}(x_j)\, , \  A^j \cap \bigcup_{B \in \mathcal{B}_\e} B =\emptyset \, ,\\
    &  \hspace{2em} A^j \cap A^k =\emptyset \text{ for } j \neq k \, , \ A^j \subset A  \text{ for all } j \Big\}\,.
    \end{split}
    \end{align}
     We apply Lemma~\ref{lemma:ball construction} with $\sigma = 3 \e$ to $\mathcal{B} = \mathcal{B}_\e$, $\mu= \mu_{v_\e}$, and $\mathcal{E}$ defined in \eqref{def: setfunction}, which satisfy the assumptions~(B1) and~(B2) with $c_0=1$. Hence, there exists a family of balls $\{\mathcal{B}_\e(t)\}_{t \geq 0}$ satisfying {\rm (1)}--{\rm (6)} of Lemma~\ref{lemma:ball construction}. Due to~\eqref{ineq: massboundmu} and~\eqref{def:Beps}, we have that
    \begin{align}\label{ineq:radbound}
    \mathcal{R}(\mathcal{B}_\e) \leq  C \e |\log\e|^2.
    \end{align}
     Moreover, by property~(6) in Lemma~\ref{lemma:ball construction},
    \begin{equation} \label{eq:lower bound on radius}
        r(B) \geq (1+t)  \frac{\e}{2 \sqrt{3}} \quad \text{for every } B \in \mathcal{B}_\e(t) \, .
    \end{equation}

     In  the next lemma we deduce an upper bound for the set function $\mathcal{E}$. 

    \begin{lemma} Let $\mathcal{E}$, $\mathcal B_\e$, and $\mu_{v_\e}$ be as above. Let $U_\e(t) := \bigcup_{B \in \mathcal{B}_\e(t)} B$ for all $\e >0$ and $t \geq 0$. Then we have the following inequalities
        \begin{align}\label{ineq:setfunctionbound}
            \mathcal{E}(\mathcal{B}_\e,\mu_{v_\e}, \Omega''\setminus \overline{U}_\e(0)) \leq C \int_{\Omega^{\prime\prime} \setminus \overline{U}_\e(0)} |\nabla \overarc v_\e|^2 \, \d x  \leq \frac{C}{\e^2} E_\e(u_\e, \Omega) \leq C|\log \e|\,.
        \end{align}
    \end{lemma}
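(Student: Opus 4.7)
The plan is to verify the three inequalities in turn: the first by a Stokes--Cauchy--Schwarz argument on each admissible annulus, the second by combining the pointwise bounds of Remark~\ref{rmk:arcwise interpolation} with the rough $XY$-bound of Lemma~\ref{lemma:rough XY bound}, and the third by simply invoking the standing hypothesis on $E_\e(u_\e,\Omega)$.

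For the first inequality I would argue annulus by annulus. Fix $A = \A{r}{R}(x_0)$ among the admissible annuli in the supremum~\eqref{def: setfunction}. Since $A \cap \bigcup_{B \in \mathcal{B}_\e} B = \emptyset$ and each ball of $\mathcal{B}_\e$ contains its own centre in $\supp \mu_{v_\e} \cap \Omega'$, one has $\supp \mu_{v_\e} \cap A = \emptyset$; properties~(2)--(3) of Remark~\ref{rmk:arcwise interpolation} then yield $\overarc v_\e \in W^{1,\infty}_{\mathrm{loc}}(A;\S^1)$ and $J(\overarc v_\e) \mres A = 0$. Applying Stokes' theorem to $j(\overarc v_\e)$ on $B_\rho(x_0)$ for a.e.\ $\rho \in (r,R)$ gives $\pi\,\mu_{v_\e}(B_r(x_0)) = \int_{\partial B_\rho(x_0)} j(\overarc v_\e) \cdot \tau\, \d \H^1$. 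Cauchy--Schwarz together with the pointwise bound $|j(\overarc v_\e)| \leq \frac{1}{2}|\nabla \overarc v_\e|$ then implies $\pi^2 |\mu_{v_\e}(B_r(x_0))|^2 \leq \frac{\pi \rho}{2} \int_{\partial B_\rho(x_0)} |\nabla \overarc v_\e|^2\,\d\H^1$. Dividing by $\rho$, integrating in $\rho$ over $(r,R)$, and exploiting the integer-valuedness of the vorticity (so $|\mu|^2 \geq |\mu|$) yields $|\mu_{v_\e}(B_r(x_0))| \log(R/r) \leq C \int_A |\nabla \overarc v_\e|^2\,\d x$. Summing over the disjoint annuli and passing to the supremum delivers the first inequality.

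For the second inequality, the key observation is that $\Omega'' \setminus \overline U_\e(0)$ intersects only triangles $T$ with $\mu_{v_\e}(T) = 0$. Indeed, any $T$ with $\mu_{v_\e}(T) \neq 0$ and $b(T) \in \Omega'$ has its inscribed disk $B_{\e/(2\sqrt{3})}(b(T))$ in $\mathcal{B}_\e$; the initial enlargement by $\sigma = 3\e$ in the construction of $\mathcal{B}_\e(0)$ then ensures $T \subset B_{2\e}(b(T)) \subset U_\e(0)$. On each triangle $T$ intersecting $\Omega'' \setminus \overline U_\e(0)$, property~(4) of Remark~\ref{rmk:arcwise interpolation} gives $\e^2 \int_T |\nabla \overarc v_\e|^2\,\d x \leq \frac{\pi^2}{4\sqrt{3}} XY_\e(v_\e,T)$, and Lemma~\ref{lemma:rough XY bound} further controls $XY_\e(v_\e,T) \leq C E_\e(u_\e,T)$. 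Summing over these triangles produces $\int_{\Omega'' \setminus \overline U_\e(0)} |\nabla \overarc v_\e|^2\,\d x \leq \frac{C}{\e^2} E_\e(u_\e,\Omega)$.

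The third inequality is immediate from the standing bound $E_\e(u_\e,\Omega) \leq C \e^2 |\log \e|$. The principal difficulty I anticipate is the bookkeeping in the second step: one must check that no nonzero-vorticity triangle sneaks into $\Omega'' \setminus \overline U_\e(0)$, which is precisely why $\mathcal{B}_\e$ is enlarged by $\sigma = 3\e$ before merging, so that the inscribed-disk expansion absorbs the entire triangle containing a singularity.
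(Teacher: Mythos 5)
Your proposal is correct and follows essentially the same route as the paper: Stokes' theorem plus Jensen/Cauchy--Schwarz on each admissible annulus (using $J(\overarc v_\e)=\pi\mu_{v_\e}$ and integer-valuedness of the vorticity), then summing and passing to the supremum in the definition of $\mathcal{E}$ for the first inequality; the inclusion $\Omega''\setminus\overline U_\e(0)\subset\bigcup_{|\mu_{v_\e}|(T)=0}T$ (a consequence of the $\sigma=3\e$ enlargement in the ball construction) together with Remark~\ref{rmk:arcwise interpolation}(4) and Lemma~\ref{lemma:rough XY bound} for the second; and the standing energy bound for the third.
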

        \begin{proof}
         We set $U_\e  := \bigcup_{B \in \mathcal{B}_\e} B$ and we let $0<r<R$ and $x_0 \in \mathbb{R}^2$ be such that $\A{r}{R}(x_0) \cap U_\e =\emptyset$. Since $J(\overarc v_\e) = \pi\mu_{v_\e}$ and by Stokes' theorem (see also~\eqref{eq:degree for H1}), we have that
        \begin{equation*}
         \pi\mu_{v_\e}(B_s(x_0)) = J(\overarc v_\e)(B_s(x_0)) = \int_{B_s(x_0)} \curl(j(\overarc v_\e)) \, \d x = \int_{\partial B_s(x_0)} j(\overarc v_\e)  \cdot \tau_{\de B_s(x_0)} \, \d \H^1 \,
        \end{equation*} 
        for a.e.\ $s \in (r,R)$.\footnote{In fact, $\overarc v_\e |_{\de B_s(x_0)} \in H^1(\de B_s(x_0);\S^1)$ for every $s \in (r,R)$. See also Footnote~\ref{foot:actually every radius}. } Note that, since $|\overarc v_\e| =1$, we have $2|j(\overarc v_\e) | = |\nabla \overarc v_\e|$. Therefore, by Jensen's inequality,
        \begin{equation*}
            \Big|\int_{\partial B_s(x_0)} j(\overarc v_\e) \cdot \tau_{\de B_s(x_0)} \, \d \H^1 \Big|^2 \leq \frac{1}{4}\Big( \int_{\partial B_s(x_0)} |\nabla \overarc v_\e| \, \d \H^1 \Big)^{\! 2} \leq \frac{\pi}{2} s\int_{\partial B_s(x_0)} |\nabla \overarc v_\e|^2 \, \d \H^1 .
        \end{equation*}
        Since $\A{r}{R} \cap U_\e = \emptyset$, $\mu_{v_\e}(B_s(x_0)) = \mu_{v_\e}(B_r(x_0)) \in \mathbb{Z}$ for every $s \in (r,R)$. Thus, the two previous inequalities imply that 
        \begin{equation*}
            \frac{2\pi}{ s} |\mu_{v_\e}(B_r(x_0))| \leq \frac{2\pi}{ s} |\mu_{v_\e}(B_r(x_0))|^2  \leq \int_{\partial B_s(x_0)} |\nabla \overarc v_\e|^2 \, \d \H^1 .
        \end{equation*}
        Integrating in $s$ from $r$ to $R$, by the coarea formula we obtain 
        \begin{equation}\label{ineq:muboundannulus}
            \mathcal{E}(\mathcal{B}_\e,\mu_{v_\e},\A{r}{R}(x_0)) \leq C \int_{\A{r}{R}(x_0)} |\nabla \overarc v_\e|^2\mathrm{d}x\,.
        \end{equation}
        Let now $A \in \mathcal{A}(\mathbb{R}^2)$. For all $A^j$ admissible in \eqref{def: setfunction}, we have $A^j \subset A \setminus \overline U_\e$ and $A^j\cap A^k =\emptyset$ for~$j \neq k$. Therefore, using \eqref{ineq:muboundannulus}, we get
        \begin{align*}
        \sum_{j} \mathcal{E}(\mathcal{B}_\e,\mu_{v_\e},A^j) \leq \sum_{j} C \int_{A^j} |\nabla \overarc v_\e|^2 \, \d x \leq C \int_{A\setminus \overline U_\e} |\nabla \overarc v_\e|^2 \, \d x\,.
        \end{align*}
        Taking the supremum over all admissible $A^j$, we infer that 
        \begin{align*} %\label{ineq:setfctbound}
            \mathcal{E}(\mathcal{B}_\e,\mu_{v_\e},A) \leq C\int_{A\setminus \overline U_\e} |\nabla \overarc v_\e|^2 \, \d x\, .
            \end{align*}
        We are now in a position to prove~\eqref{ineq:setfunctionbound}. 
        Note that Lemma~\ref{lemma:ball construction} gives 
        \begin{equation*}
            \Omega'' \setminus \overline{U}_\e(0) \subset \bigcup_{\substack{T \in \mathcal{T}_\e(\Omega) \\ |\mu_{v_\e}|(T)=0  }} T
        \end{equation*}
        thanks to the choice $\sigma=3\e$. Hence, by Lemma~\ref{lemma:rough XY bound} and the properties of $\overarc v_\e$, we obtain  
        \begin{equation*} 
        \begin{split} 
        \mathcal{E}(\mathcal{B}_\e,\mu_{v_\e}, \Omega^{\prime\prime}\setminus \overline{U}_\e(0)) &\leq C\int_{\Omega^{\prime\prime} \setminus \overline{U}_\e(0)} |\nabla \overarc v_\e|^2 \, \d x \leq C \sum_{\substack{T \in \mathcal{T}_\e(\Omega) \\ |\mu_{v_\e}|(T)=0 }} \int_T |\nabla \overarc v_\e|^2 \, \d x \\
        &\leq  \frac{C}{\e^2}\sum_{\substack{T \in \mathcal{T}_\e(\Omega) \\ |\mu_{v_\e}|(T)=0 }} \hspace{-1em} XY_\e(v_\e,T) \leq  \frac{C}{\e^2} \sum_{\substack{T \in \mathcal{T}_\e(\Omega) \\ |\mu_{v_\e}|(T)=0 }} \hspace{-1em} E_\e(u_\e,T) \leq \frac{C}{\e^2} E_\e(u_\e,\Omega) \leq C|\log \e|\,. 
        \end{split}
        \end{equation*}
        This concludes the proof of~\eqref{ineq:setfunctionbound}. 
        \end{proof}

        In the next lemma we estimate the number of merging times in the ball construction and show that the trivial estimate of order $|\log\e|^{2}$ can be improved to become of order $|\log\e|$.  By inspecting the proof, we get a better insight on the structure of the vorticity measure $\mu_{v_\e}$: the possible $|\log \e|^2$ short dipoles in the region $\chi(u_\e) \sim -1$ are annihilated at the first step of the ball construction. 

        \begin{lemma}
            Let $\mathcal{B}_\e(t)$ be as above  and let
        \begin{align*}%\label{def:Tmerg}
        \mathrm{T}^{\mathrm{merg}}_\e  := \{t \in [0,+\infty) : \#\mathcal{B}_\e(t^+) < \#\mathcal{B}_\e(t^-)\}
        \end{align*}
        denote the set of merging times.  Then there exists $M>0$ such that
        \begin{align}\label{ineq:Tmergbound}
        \# \mathrm{T}^{\mathrm{merg}}_\e  \leq M|\log\e|\,.
        \end{align}
        \end{lemma}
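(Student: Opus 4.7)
The plan is to reduce the statement to the bound $N_\e^0 := \#\mathcal{B}_\e(0) \leq C|\log\e|$ on the number of balls produced by the first merging phase. Indeed every subsequent merging time $t>0$ strictly decreases $\#\mathcal{B}_\e(t)$ by at least one, and the first merging phase at $t=0$ contributes a single element of $\mathrm{T}^{\mathrm{merg}}_\e$, so one would immediately conclude $\#\mathrm{T}^{\mathrm{merg}}_\e \leq N_\e^0 \leq M|\log\e|$.

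First I would classify the triangles carrying a singularity. Fix $\eta \in (0,1)$ as in Remark~\ref{rmk:chirality and vorticity} and $\eta' \in (0,1)$ as in Remark~\ref{rmk:vorticity and chirality}, and set $S_\e := \{T \in \T_\e(\Omega'') : \chi(u_\e, T) \leq -1+\eta'\}$. Remark~\ref{rmk:chirality and vorticity} shows that every singular triangle obeys $\chi(u_\e, T) \leq 1-\eta$, while Remark~\ref{rmk:vorticity and chirality} ensures that every triangle of $S_\e$ is singular. Remark~\ref{rmk:chiralityenergy} applied with a parameter smaller than $\min\{\eta,\eta'\}$ shows that each triangle with $\chi(u_\e, T) \in (-1+\eta', 1-\eta)$ has energy at least $c\e^2$, and therefore the number of such transition triangles is at most $E_\e(u_\e,\Omega)/(c\e^2) \leq C|\log\e|$. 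Combining this with Lemma~\ref{lemma:energy of two triangles}, which controls by $C|\log\e|$ the number of boundary pairs between $\{\chi \leq 1-\eta\}$ and $\{\chi > 1-\eta\}$, the full boundary of $S_\e$ contains at most $C|\log\e|$ triangles. Since each connected component of $S_\e$ in $\mathbb{R}^2$ must contain at least one boundary triangle, $S_\e$ has at most $C|\log\e|$ connected components.

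Next I would exploit the choice $\sigma = 3\e$ in the ball construction to describe how $\mathcal{B}_\e(0)$ is built. The initial balls in $\mathcal{B}_\e$ have radius $\e/(2\sqrt{3})$, so after the $\sigma$-enlargement two of them centered at $x,y$ are forced into the same cluster whenever $|x-y| \leq \e/\sqrt{3} + 6\e$. Since the barycenters of any two edge-adjacent triangles are at distance exactly $\e/\sqrt{3}$ and the merging procedure is transitive, all singularities supported on triangles belonging to a single connected component of $S_\e$ are agglomerated into one single ball of $\mathcal{B}_\e(0)$; this accounts for at most $C|\log\e|$ balls. The remaining singularities lie in triangles with $\chi(u_\e, T) \in (-1+\eta', 1-\eta]$, of which there are at most $C|\log\e|$ by the energy estimate above, and each of them can contribute at most one further ball to $\mathcal{B}_\e(0)$. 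Therefore $N_\e^0 \leq C|\log\e|$ and the claim follows.

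The main obstacle lies in the first merging step: one has to ensure that the (up to) $|\log\e|^2$ short dipoles scattered in the region $\{\chi(u_\e) \sim -1\}$ really do collapse into a controllably small number of clusters. The key point is that inside $S_\e$ every triangle automatically carries a singularity thanks to Remark~\ref{rmk:vorticity and chirality}, so the chain of merging along edge-adjacencies is never interrupted by a singularity-free triangle, and each entire connected component of $S_\e$ collapses to a single ball at $t=0$. This is precisely the mechanism by which the $|\log\e|^2$ dipoles are annihilated, bringing the $N_\e^0$ count down from the trivial $C|\log\e|^2$ to the improved $C|\log\e|$.
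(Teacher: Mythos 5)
Your reduction to bounding $\#\mathcal{B}_\e(0)$ is the same as the paper's, but your route to that bound is genuinely different. The paper argues \emph{locally}: for each ball $B\in\mathcal{B}_\e(0)$ it isolates a singular triangle $T_1\subset B$, uses property~(4) of the ball construction (with $\sigma=3\e$) to find a vorticity-free triangle $T'$ in the shell $B_r\setminus\overline B_{r-\sigma}$, and then exhibits a chain of triangles inside $B$ joining $T_1$ to $T'$ along which $\chi$ must change sign; Lemma~\ref{lemma:energy of two triangles} (or Remark~\ref{rmk:chiralityenergy}) then gives $E_\e(u_\e,B)\geq c\e^2$, and the bound follows by summing over the disjoint balls. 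You instead argue \emph{globally}: you bound the number of transition triangles and the number of boundary triangles of $S_\e=\{\chi(u_\e,\cdot)<-1+\eta'\}$ by $C|\log\e|$ via the same energy mechanisms, deduce that $S_\e$ has $\leq C|\log\e|$ connected components, and then use $\sigma=3\e$ not to find a quiet triangle in a shell but to show that every singularity inside one component of $S_\e$ is swept into a single cluster at the first merging. Both proofs rely on the same ingredients (the implications between chirality and vorticity, the cost of a chirality transition, and the $\sigma=3\e$ choice), but yours makes the dipole-annihilation mechanism visible in the combinatorics, whereas the paper encodes it in the per-ball chain argument. One gap in your write-up: you assert without justification that every connected component of $S_\e$ must contain a boundary triangle, but $S_\e$ lives in $\T_\e(\Omega'')$ and a component reaching from $\Omega'$ out to $\partial\Omega''$ need not have an edge-neighbour of larger chirality inside $\T_\e(\Omega'')$. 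This is fixable: Lemma~\ref{lemma:counting} gives $\#S_\e\leq C|\log\e|^2$, so for $\e$ small a component touching $\Omega'$ cannot span the fixed gap of width $\dist(\Omega',\partial\Omega'')$, hence is interior and does have a boundary triangle. You should also define $S_\e$ with strict inequality $\chi<-1+\eta'$ so that Remark~\ref{rmk:vorticity and chirality} actually forces $\mu_v(T)\neq 0$ on all of $S_\e$. With these repairs your argument is sound and reaches the same sharp $C|\log\e|$ bound; the paper's per-ball version is more self-contained near $\partial\Omega''$ because all its triangles automatically live inside a small ball in the interior.
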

        \begin{proof}
        We start by proving that there exists $c>0$ such that
    \begin{align}\label{ineq:energyB}
    E_\e(u_\e,B) \geq c\e^2 \quad \text{for every } B \in \mathcal{B}_\e(0) \,.
    \end{align}
    Given $B = B_r(x) \in \mathcal{B}_\e(0)$, there exists $T_1 \in \mathcal{T}_\e(B)$ such that $|\mu_{v_\e}|(T_1)=1$.  Letting $\eta \in (0,1)$ be given by Remark~\ref{rmk:chirality and vorticity}, we have that   $\chi(u_\e,T_1) \leq 1-\eta$ (otherwise, the vorticity of $v_\e$ would be zero in~$T_1$). If additionally $-1+\eta \leq \chi(u_\e,T_1) \leq 1-\eta$, then,  by Remark~\ref{rmk:chiralityenergy},  $E_\e(u_\e,T_1) \geq C_\eta \e^2$ for some constant $C_\eta>0$ and thus~\eqref{ineq:energyB} holds true. If, instead, $\chi(u_\e,T_1) < -1+\eta$, then we argue as follows. Thanks to the choice $\sigma = 3 \e$, there exists $T' \in \T_\e(B_{r}(x) \sm \overline{B}_{r-\sigma}(x))$. Property~{(4)} in Lemma~\ref{lemma:ball construction} implies that $|\mu_{v_\e}|(T') = 0$.  Letting $\eta' \in (0,1)$ be given by Remark~\ref{rmk:vorticity and chirality}, we have that  $-1 + \eta' \leq \chi(u_\e,T')$. If $-1 + \eta' \leq \chi(u_\e,T') \leq 1- \eta'$, then $E_\e(u_\e,T') \geq C_{\eta'} \e^2$ for some constant $C_{\eta'}>0$ and thus~\eqref{ineq:energyB} holds true. Then we assume $1- \eta' < \chi(u_\e,T')$. We find now a chain of triangles $\{T_1,\ldots, T_L = T^\prime\} \subset \mathcal{T}_\e(B)$ with $T_{\ell+1} \in \mathcal{N}_\e(T_\ell)$ for all $\ell=1,\ldots,L-1$, see Figure~\ref{fig:chain}. Since $\chi(u_\e, T_1) < -1+\eta$ and $1-\eta' < \chi(u_\e, T_N)$, there exists $\ell \in \{1,\ldots,L - 1\}$ such that $\chi(u_\e,T_\ell) < 0$ and $\chi(u_\e,T_{\ell+1}) \geq 0$. Then~\eqref{ineq:energyB} follows from Lemma~\ref{lemma:energy of two triangles}.   

    \begin{figure}[H]
        \includegraphics{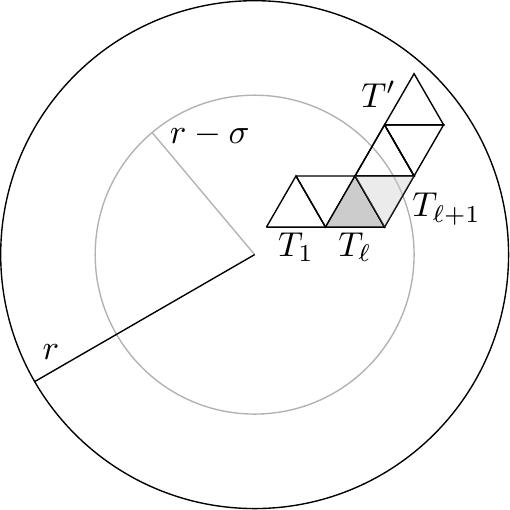}
        % \begin{tikzpicture}[scale=0.6]
                        
        %     \draw(0,0) circle(4.3);
        %     \draw[black!30](0,0) circle(2.7);

        %     \draw[black!30](0,0)--++(130:2.7);
        %     \draw(0,.4)++(115:1.8) node{$r-\sigma$}; 

        %     \draw(0,0)--++(210:4.3);
        %     \draw(.1,.4)++(210:4.1) node{$r$}; 

        %     \begin{scope}[xshift=0.2cm]
        %     \draw(60:1)++(0:0)++(270:.7) node[anchor=north, inner sep=-0.3cm]{$T_1$};
        %     \draw(60:1)++(0:1)++(270:.7) node[anchor=north, inner sep=-0.3cm]{$T_\ell$};

        %     \draw(60:1.7)++(0:2.2)++(270:.7) node{$T_{\ell+1}$};

        %     \begin{scope}[shift={(-.5,-.4)}]

        %     \draw(60:1)--++(0:1)--++(120:1)--++(240:1);

        %     \draw(60:2)++(0:1)--++(0:1)--++(120:1)--++(240:1);
        %     \draw(60:2)++(0:0)--++(0:1)--++(240:1)--++(120:1);

        %     \draw[fill=gray!40](60:1)++(0:1)--++(0:1)--++(120:1)--++(240:1);
        %     \draw(60:2)++(0:1)--++(0:1)--++(240:1)--++(120:1);

        %     \draw(60:2)++(0:1)--++(0:1)--++(120:1)--++(240:1);
        %     \draw[fill=gray!40, opacity=0.4](60:2)++(0:1)--++(0:1)--++(240:1)--++(120:1);

        %     \draw(60:3)++(0:1)--++(0:1)--++(240:1)--++(120:1);

        %     \draw(60:3)++(0:1)--++(0:1)--++(120:1)--++(240:1);

        %     \draw(60:3.6)++(0:.6) node{$T'$};

        %     \end{scope}
        %     \end{scope}
        % \end{tikzpicture}
        \caption{Choice of a chain of triangles $\{T_1,\ldots, T_L = T^\prime\} \subset \mathcal{T}_\e(B)$.}
        \label{fig:chain}
    \end{figure}

    Estimate~\eqref{ineq:Tmergbound} is a consequence of~\eqref{ineq:energyB} since   
        \begin{align*}
            c \e^2 \#\mathrm{T}^{\mathrm{merg}}_\e \leq  c \e^2 \#\mathcal{B}_\e(0) \leq  \sum_{B \in \mathcal{B}_\e(0)} E_\e(u_\e,B)  \leq E_\e(u_\e,\Omega) \leq C\e^2|\log \e|\,,
        \end{align*}
        hence~\eqref{ineq:Tmergbound} follows.
\end{proof}

Let us fix $p \in (0,1)$.  (At the very end of the proof we will let $p \to 1$.)  We construct an auxiliary sequence of measures $\mu_{\e,p}$ such that~$\mu_{\e,p}$ have equibounded mass and $\mu_{\e,p}$ are close to $\mu_{v_\e}$ in the flat norm.  For  $k = 0,\ldots, \lfloor 2M|\log \e| \rfloor$  we set\footnote{ The choice of these particular expansion times will become clearer later when we deduce~\eqref{ineq:ineqXYarcannulus}. Similar arguments can be found, \eg in \cite{DLGarPon,AliBraCicDLPia}.}
    \begin{align}\label{def:time}
        \beta_p  := \exp\Big(\frac{\sqrt{p}(1 - \sqrt{p})}{2M}\Big) \, , \quad t_{\e,p}^{k} := (\beta_p)^k \e^{\sqrt{p}-1}-1 \, , 
    \end{align}  
    and 
    \begin{align}\label{def:Keps0}
    \K_\e:= \Big\{k \in \{1,\ldots,\lfloor 2M|\log \e|
    \rfloor \} : (t_{\e,p}^{k-1}, t_{\e,p}^{k}] \cap \mathrm{T}^{\mathrm{merg}}_\e = \emptyset \Big\}\,.
    \end{align}
   By~\eqref{ineq:Tmergbound}, we have that $\# \K_\e  \geq M|\log \e|$. We choose $k_\e \in \K_\e$ (depending also on $p$) such that
    \begin{align*}%\label{ineq:selectionk}
    \int_{\Omega''\cap U_\e(t_{\e,p}^{k_\e}) \setminus \overline U_\e(t_{\e,p}^{k_\e-1})} |\nabla \overarc v_\e|^2 \, \d x \leq \frac{1}{\# \K_\e} \sum_{k \in \K_\e} \int_{\Omega''\cap U_\e(t_{\e,p}^{k}) \setminus \overline U_\e(t_{\e,p}^{k-1})}|\nabla \overarc v_\e|^2 \, \d x \,.  
    \end{align*}
    By conditions (1)--(2) in Lemma~\ref{lemma:ball construction} and by~\eqref{ineq:setfunctionbound} we have that
    \begin{equation*}
    \sum_{k \in \K_\e} \int_{\Omega''\cap  U_\e(t_{\e,p}^{k}) \setminus \overline U_\e(t_{\e,p}^{k-1})}|\nabla \overarc v_\e|^2\, \d x \leq \int_{\Omega'' \setminus \overline U_\e(0)}|\nabla \overarc v_\e|^2 \, \d x \leq C|\log \e|\,,
    \end{equation*}  
    whence
    \begin{align}\label{ineq:energyboundtkeps}
    \int_{\Omega''\cap U_\e(t_{\e,p}^{k_\e}) \setminus \overline U_\e(t_{\e,p}^{k_\e-1})} |\nabla \overarc v_\e|^2 \, \d x \leq C\frac{|\log \e|}{\# \K_\e} \leq C_1\,.
    \end{align}
    We define 
    \begin{align}\label{def:muepsp}
     \mu_{\e,p} :=  \sum_{B \in \mathcal{B}_\e(t_{\e,p}^{k_\e})} \mu_{v_\e}(B) \delta_{x_B}\,,
    \end{align}
    where we let $x_B$ denote the center of the ball $B$. 

    \begin{lemma} \label{lemma:muepsp}
        Let $\mu_{v_\e}$ be as above and let $\mu_{\e,p}$ be the measure defined in~\eqref{def:muepsp}. Then 
        \begin{align}\label{ineq:Muepspmassbound}
            |\mu_{\e,p}|(\Omega^\prime) \leq \frac{C}{1-\sqrt{p}} =: C_p \quad \text{and} \quad \|\mu_{v_\e} -\mu_{\e,p}\|_{\mathrm{flat},\Omega'} \to 0 \,.
            \end{align}
    \end{lemma}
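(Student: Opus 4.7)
My plan is to prove the two assertions in~\eqref{ineq:Muepspmassbound} separately, both leveraging the careful choice of the time $t_{\e,p}^{k_\e}$ and the fact that the radii of the balls $\mathcal{B}_\e(t_{\e,p}^{k_\e})$ remain $o(1)$ as $\e\to 0$.

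\textbf{Mass bound.} First I would verify the radius estimate. Since $k_\e \leq 2M|\log\e|$, by~\eqref{def:time} we have
\begin{equation*}
  1 + t_{\e,p}^{k_\e} \leq \beta_p^{2M|\log\e|}\,\e^{\sqrt{p}-1} = \exp\bigl(\sqrt{p}(1-\sqrt{p})|\log\e|\bigr)\,\e^{\sqrt{p}-1} = \e^{p-1}\,.
\end{equation*}
Combined with property~(5) of Lemma~\ref{lemma:ball construction}, estimate~\eqref{ineq:radbound}, and the bound $N\sigma \leq C\e|\log\e|^2$, this gives $\mathcal{R}(\mathcal{B}_\e(t_{\e,p}^{k_\e})) \leq C\e^p|\log\e|^2\to 0$. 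In particular, for $\e$ small, every $B\in\mathcal{B}_\e(t_{\e,p}^{k_\e})$ with $x_B\in\Omega'$ satisfies $B\subset\Omega''$. The key step is now to apply property~(3) of Lemma~\ref{lemma:ball construction} with $U=\Omega''$ on the time interval $[t_{\e,p}^{k_\e-1},t_{\e,p}^{k_\e}]$. Since $k_\e\in\K_\e$ forbids mergings on this interval, $(1+t_{\e,p}^{k_\e})/(1+t_{\e,p}^{k_\e-1}) = \beta_p$ and $\log\beta_p = \tfrac{\sqrt{p}(1-\sqrt{p})}{2M}$. Coupling this with the inequality $\mathcal{E}(\mathcal{B}_\e,\mu_{v_\e},\cdot)\leq C\int|\nabla\overarc{v}_\e|^2\,\d x$ (derived just as in~\eqref{ineq:muboundannulus}--\eqref{def: setfunction}) and with~\eqref{ineq:energyboundtkeps} yields
\begin{equation*}
  \log\beta_p\!\!\!\sum_{\substack{B\in\mathcal{B}_\e(t_{\e,p}^{k_\e})\\ B\subset\Omega''}}\!\!\!|\mu_{v_\e}(B)| \leq C\!\int_{\Omega''\cap U_\e(t_{\e,p}^{k_\e})\setminus\overline{U}_\e(t_{\e,p}^{k_\e-1})}\!\!|\nabla\overarc{v}_\e|^2\,\d x \leq CC_1\,,
\end{equation*}
and dividing by $\log\beta_p$ gives $|\mu_{\e,p}|(\Omega')\leq C/(1-\sqrt{p})$ since the sum on the left dominates $|\mu_{\e,p}|(\Omega')$.

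\textbf{Flat convergence.} For any admissible test function $\psi\in C^\infty_c(\Omega')$ with $\|\psi\|_{L^\infty}\leq 1$ and $\|\nabla\psi\|_{L^\infty}\leq 1$, I would extend $\psi$ by zero to obtain a globally $1$-Lipschitz function on $\R^2$. By property~(1) of Lemma~\ref{lemma:ball construction}, $\mathrm{supp}(\mu_{v_\e})\cap\Omega'\subset\bigcup_{B\in\mathcal{B}_\e(t_{\e,p}^{k_\e})}B$, so regrouping the two delta-sums by the ball containing each support point (for $\e$ small enough that every contributing ball lies in $\Omega''$) gives
\begin{equation*}
  \langle \mu_{v_\e}-\mu_{\e,p},\psi\rangle = \sum_{B\in\mathcal{B}_\e(t_{\e,p}^{k_\e})}\,\sum_{x\in B\cap\mathrm{supp}(\mu_{v_\e})} \mu_{v_\e}(\{x\})\bigl(\psi(x)-\psi(x_B)\bigr)\,.
\end{equation*}
The Lipschitz bound $|\psi(x)-\psi(x_B)|\leq r(B)$ together with the disjointness of the balls (property~(2) of Lemma~\ref{lemma:ball construction}) and~\eqref{ineq: massboundmu} yields the uniform estimate
\begin{equation*}
  |\langle\mu_{v_\e}-\mu_{\e,p},\psi\rangle|\leq \mathcal{R}(\mathcal{B}_\e(t_{\e,p}^{k_\e}))\cdot|\mu_{v_\e}|(\Omega'') \leq C\e^p|\log\e|^4\to 0\,,
\end{equation*}
which concludes the proof.

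\textbf{Main obstacle.} The most delicate point is matching up the two representations of the mass inside each ball uniformly in $\psi$: the initial balls $\mathcal{B}_\e$ are defined only at support points in $\Omega'$, so $\mu_{v_\e}(B)$ could in principle also see support points of $\mu_{v_\e}$ that lie outside $\Omega'$ but have been engulfed by the expanded ball. The saving fact is that $\mathcal{R}(\mathcal{B}_\e(t_{\e,p}^{k_\e}))=o(1)$, which allows us to sandwich $\Omega'\subset\subset\Omega''\subset\subset\Omega$ and guarantees that all balls relevant to the computation lie in~$\Omega''$ and avoid $\partial\Omega'$ uniformly in the test function $\psi$.
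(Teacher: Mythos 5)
Your proof is correct, and the flat-convergence half follows essentially the same argument as the paper's (regroup the difference ball by ball, bound the oscillation of $\psi$ on each ball by the radius, and use the radius and mass estimates). The mass bound, however, takes a genuinely different route. The paper applies property~(3) of Lemma~\ref{lemma:ball construction} over the \emph{entire} time interval $[0,t_{\e,p}^{k_\e}]$ and bounds the set function via~\eqref{ineq:setfunctionbound} ($\mathcal{E}\leq C|\log\e|$), dividing through by $\log(1+t_{\e,p}^{k_\e})\geq(1-\sqrt{p})|\log\e|$. You instead apply it only over the single annular layer $[t_{\e,p}^{k_\e-1},t_{\e,p}^{k_\e}]$ and bound the set function via~\eqref{ineq:energyboundtkeps} ($\mathcal{E}\leq CC_1$), dividing by $\log\beta_p$. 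Both approaches are valid; yours is arguably more local and makes the role of the ``good annulus'' choice of $k_\e$ more explicit, while the paper's requires only the cruder global estimate. Two minor remarks: (1) your sentence ``Since $k_\e\in\K_\e$ forbids mergings on this interval, $(1+t_{\e,p}^{k_\e})/(1+t_{\e,p}^{k_\e-1})=\beta_p$'' is misleading --- the equality of time ratios is a direct algebraic consequence of~\eqref{def:time} and has nothing to do with the absence of mergings (which is exploited only later, in Lemma~\ref{lemma:replace}); (2) chasing constants, your bound reads $\frac{C}{\sqrt{p}(1-\sqrt{p})}$ rather than $\frac{C}{1-\sqrt{p}}$, which is harmless since the regime of interest is $p\to 1$, but slightly weaker than the literal constant stated in the lemma.
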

    \begin{proof}    
     We start by estimating the radii of the balls in the family $\mathcal{B}_\e(t_{\e,p}^{k_\e})$ used in the definition of~$\mu_{\e,p}$. Recalling that $\sigma = 3 \e$ and that the number of balls at the start of the ball construction is $N \leq C|\log \e|^2$, by condition~(5) in Lemma~\ref{lemma:ball construction} and by~\eqref{ineq:radbound}, we infer that
    \begin{equation}\label{ineq:radboundt}
        \mathcal{R}(\mathcal{B}_\e(t_{\e,p}^{k_\e}))\leq (1+t_{\e,p}^{k_\e}) (\mathcal{R}(\mathcal{B}_\e) + C \e |\log \e|^2) \leq  C  (\beta_p)^{ 2M|\log \e| } \e^{\sqrt{p}-1} \e  |\log \e|^2  =  C \e^{p}|\log \e|^2,
    \end{equation}
  where $C$ depends on $\Omega''$.
    % \begin{equation*}
    %     (\beta_p)^{ 2M|\log \e| } \e^{\sqrt{p}-1} \e  = \exp\Big( \sqrt{p}(1 - \sqrt{p})   \log \e^{-1} \Big) \e^{\sqrt{p}}    = \e^{-\sqrt{p}(1 - \sqrt{p}) + \sqrt{p}}  = \e^{\sqrt{p}(-1 + \sqrt{p} + 1)}
    % \end{equation*}
    % \begin{align}
    % \mathcal{R}(\mathcal{B}_\e(t_{\e,p}^{k_\e}))\leq \mathcal{R}(\mathcal{B}_\e(t_{\e,p}^{\lfloor 2M|\log \e| \rfloor })) \leq  C \e^{1-p} \e^{-p(1-p)}|\log \e|^2 = C \e^{(1-p)^2}|\log \e|^2 \,.
    % \end{align}
    In particular, the balls of the family $\mathcal{B}_\e(t_{\e,p}^{k_\e})$ have infinitesimal radius as $\e \to 0$. Hence, by~\eqref{ineq:setfunctionbound} and by property~(3) in Lemma~\ref{lemma:ball construction}, for $\e$ small enough
     \begin{align*}
     & C|\log \e| \geq \mathcal{E}(\mathcal{B}_\e,\mu_{v_\e}, \Omega^{\prime\prime}\setminus \overline{U}_\e(0)) \geq \mathcal{E}(\mathcal{B}_\e,\mu_{v_\e}, \Omega^{\prime\prime} \cap U_\e(t_{\e,p}^{k_\e})\setminus \overline{U}_\e(0)) \\ 
     & \quad \geq  {\sum_{\substack{ B \in \mathcal{B}_\e(t_{\e,p}^{k_\e}) \\ x_B \in \Omega' }}} \hspace{-.8em} |\mu_{v_\e}(B)|\log (1+ t_{\e,p}^{k_\e}) \geq  |\mu_{\e,p}|(\Omega')  \, |\log (1+ t_{\e,p}^{0}) |\geq  |\mu_{\e,p}|(\Omega') (1-\sqrt{p}) |\log \e|\,,
    \end{align*} 
    % \begin{equation*}
    %     |\log (1+ t_{\e,p}^{0}) | = \Big|\log \Big( \frac{1}{\e^{1-\sqrt{p}}} \Big) \Big| = 1-\sqrt{p} 
    % \end{equation*}
    which yields the estimate in~\eqref{ineq:Muepspmassbound}.
    
    To deduce the convergence in~\eqref{ineq:Muepspmassbound}, we estimate the flat distance between $\mu_{\e,p}$ and $\mu_{v_\e}$.  The argument to do this is standard (see, \eg \cite[Lemma~2.2]{DLPon}). One lets $\psi \in C^{0,1}_c(\Omega')$ be such that $\|\psi\|_{L^\infty(\Omega')} \leq 1$, $\|\nabla \psi\|_{L^\infty(\Omega')} \leq 1$. Since the balls in $\mathcal{B}_\e(t_{\e,p}^{k_\e})$ are pairwise disjoint,
   \begin{equation*}
       \begin{split}
           \langle \mu_{v_\e} -\mu_{\e,p}, \psi \rangle & = \sum_{\substack{ B \in \mathcal{B}_\e(t_{\e,p}^{k_\e}) \\ x_B \in \Omega' }} \int_B \psi \, \d (\mu_{v_\e} -\mu_{\e,p}) + \sum_{\substack{ B \in \mathcal{B}_\e(t_{\e,p}^{k_\e}) \\ x_B \notin \Omega' }} \int_{B \cap \supp(\psi)} \psi \, \d \mu_{v_\e}  \\
           & \leq \sum_{\substack{ B \in \mathcal{B}_\e(t_{\e,p}^{k_\e}) \\ x_B \in \Omega' }} \hspace{-1em} \mathrm{osc}_B(\psi)\big(|\mu_{v_\e}|+ |\mu_{\e,p}| \big)(\Omega') + \mathcal{R}(\mathcal{B}_\e(t_{\e,p}^{k_\e}))|\mu_{v_\e}|(\Omega') \\
           & \leq 2\mathcal{R}(\mathcal{B}_\e(t_{\e,p}^{k_\e})) \big(|\mu_{v_\e}|+ |\mu_{\e,p}| \big)(\Omega') \,.
       \end{split}
   \end{equation*} 
   Taking the supremum over $\psi$ in the previous inequality, by~\eqref{ineq:radboundt}, \eqref{ineq: massboundmu}, and the uniform bound in~\eqref{ineq:Muepspmassbound}, we get that
    \begin{align*}
    \|\mu_{v_\e} -\mu_{\e,p}\|_{\mathrm{flat},\Omega'} \leq C \mathcal{R}(\mathcal{B}_\e(t_{\e,p}^{k_\e})) \big(|\mu_{v_\e}|+ |\mu_{\e,p}| \big)(\Omega') \leq  C \e^{p} |\log \e|^4 \to 0 \, ,
    \end{align*}
     hence the convergence in~\eqref{ineq:Muepspmassbound} is proved.
\end{proof}

Thanks to the previous lemma, we conclude the proof of the compactness statement Theorem~\ref{thm:main}-{\em i)}. Indeed, by~\eqref{ineq:Muepspmassbound} the measures $\mu_{\e,p}\mres{\Omega'}$ converge weakly* to some measure $\mu$ in $\Omega'$, up to a subsequence. Moreover, $\mu$ is a finite sum of Dirac deltas with centers in $\Omega'$ and with integer weights, because of the structure of $\mu_{\e,p}$ in~\eqref{def:muepsp} and the uniform bound on the mass~\eqref{ineq:Muepspmassbound}. Finally, we have that\footnote{This is due to the fact that the flat norm metrizes the weak convergence of measures with equibounded mass.}  $\| \mu_{\e,p} - \mu\|_{\mathrm{flat},\Omega'} \to 0$  and thus, by~\eqref{ineq:Muepspmassbound}, $\|\mu_{v_\e} - \mu\|_{\mathrm{flat},\Omega'} \to 0$. We argue for every $\Omega' \subset \subset \Omega$ and by a diagonal argument to obtain that $\|\mu_{v_\e} - \mu\|_{\mathrm{flat},\Omega'} \to 0$ for every $\Omega' \subset \subset \Omega$. The finiteness of $|\mu|(\Omega)$ will follow from Theorem~\ref{thm:main}-{\em ii)}.

\vspace{1em}

Let us now prove Theorem~\ref{thm:main}-{\em ii)}. 
Let $u_\e \in \SF_\e$ and assume that $\chi(u_\e) \to 1$. We let $v_\e \in \SF_\e$ be the auxiliary spin field defined as in~\eqref{def:from u to v}. Let $\mu = \sum_{h = 1}^N d_h \delta_{x_h}$ with $d_h \in \Z$, $x_h \in \Omega$ and assume that $\|\mu_{v_\e} - \mu\|_{\mathrm{flat},\Omega^\prime} \to 0$ for all $\Omega^\prime\subset\subset \Omega$. Let us prove that 
\begin{equation} \label{claim:liminf}
    \liminf_{\e \to 0} \frac{1}{\e^2|\log \e|} E_\e(u_\e, \Omega)  \geq  2\sqrt{3} \pi |\mu|(\Omega) \, .
\end{equation} 
We can assume, without loss of generality, that
\begin{align*}
\liminf_{\e \to 0}\frac{1}{\e^2|\log \e|}E_\e(u_\e,\Omega) =  \lim_{\e \to 0}\frac{1}{\e^2|\log \e|}E_\e(u_\e,\Omega) <+\infty\,.
\end{align*}
Let us fix $\Omega' \subset \subset \Omega'' \subset \subset \Omega$  with Lipschitz boundary. We assume that $0 \in \Omega^\prime$ and $\mu \mres \Omega^\prime = d \delta_{0}$ for some $d \in \mathbb{Z} \setminus \{0\}$, hence $\| \mu_{v_\e} - d \delta_0\|_{\mathrm{flat},\Omega'} \to 0$. (The fact that $\mu$ is supported in $0$ is not relevant for the discussion.) Thanks to the superadditivity of the $\liminf$ and the non-negativity of the energy, it will be enough to prove the claim in $\Omega'$. 

We apply the ball construction and we define $\mu_{\e,p}$ as done above for the compactness result. By Lemma~\ref{lemma:muepsp} and the assumptions made above, we have that 
\begin{equation} \label{eq:convergence of muepsp}
    \mu_{\e,p}\mres \Omega' \wsto d\delta_0 \, .
\end{equation}

We classify the balls of the family $\mathcal{B}_\e(t_{\e,p}^{k_\e})$ into two subclasses 
    \begin{align}\label{def:subclasses}
    \begin{split}
    &\mathcal{B}_\e^{=0} := \{ B \in \mathcal{B}_\e(t_{\e,p}^{k_\e}) : \mu_{v_\e}(B)=0 \, , \ x_B \in \Omega^{\prime}\}\,, \\
     & \mathcal{B}_\e^{\neq0} := \{ B \in \mathcal{B}_\e(t_{\e,p}^{k_\e}) : \mu_{v_\e}(B)\neq0\}\,.
    \end{split}
    \end{align}
 We modify the spin field $u_\e$ in such a way that we can assume $\mathcal{B}_\e^{=0} = \emptyset$ without loss of generality. Then we will work only with balls in the family~$\mathcal{B}_\e^{\neq0}$, which are relevant from the energetic point of view.  

\begin{lemma} \label{lemma:replace}
    Let $u_\e$ be as above, let $\mathcal{B}_\e^{=0}$ be as in~\eqref{def:subclasses}, and let $c_p := \frac{\beta_p+1}{2 \beta_p} \in (0,1)$. Then there exists $\overline{u}_\e \in \mathcal{SF}_\e$ such that $\overline{u}_\e= u_\e$ on $\Omega' \setminus \bigcup_{B_R(x) \in \mathcal{B}_\e^{=0}} B_{c_p R}(x)$, $|\mu_{\overline{v}_\e}|(B)=0$ for all $B \in \mathcal{B}_\e^{=0}$, and 
    \begin{align*}
        \liminf_{\e \to 0}\frac{1}{\e^2|\log \e|} E_\e(\overline{u}_\e, \Omega') \leq  \liminf_{\e \to 0}\frac{1}{\e^2|\log \e|} E_\e(u_\e, \Omega^\prime)\,.
       \end{align*}
\end{lemma}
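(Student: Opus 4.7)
The plan is to apply the extension Lemma~\ref{lemma:extension} to the auxiliary spin field $v_\e$ on a carefully chosen annulus near the boundary of each $B \in \mathcal{B}_\e^{=0}$ to remove the internal dipoles, and then recover the modified AFXY spin field $\overline u_\e$ by inverting the relation~\eqref{def:from u to v}. Fix $B = B_{R_B}(x_B) \in \mathcal{B}_\e^{=0}$ and set $r_B := R_B/\beta_p$. Since $k_\e \in \K_\e$, no merging occurs in the time interval $(t_{\e,p}^{k_\e - 1},t_{\e,p}^{k_\e}]$, so $B$ has a unique predecessor $B^- = B_{r_B}(x_B) \in \mathcal{B}_\e(t_{\e,p}^{k_\e - 1})$ with the same center and containing the same vortices as $B$ (by properties (1), (2), and (6) of Lemma~\ref{lemma:ball construction}). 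Property (4) applied to $B^-$ with $\sigma = 3\e$ then ensures $|\mu_{v_\e}|(B_{r_B + 3\e}(x_B) \setminus B_{r_B - 3\e}(x_B)) = 0$, so all vortices of $v_\e$ inside $B$ lie in $B_{r_B - 3\e}(x_B)$. We deduce $\mu_{v_\e}(B_{r_B}(x_B)) = \mu_{v_\e}(B) = 0$ and, since triangles have diameter $\e$, no $T \in \T_\e(\R^2)$ with nonzero vorticity intersects the annulus $A_B := \A{r_B}{R_B}(x_B)$.

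To verify the remaining hypotheses of Lemma~\ref{lemma:extension}, note that every $T \in \T_\e(A_B)$ has zero vorticity, so Remark~\ref{rmk:arcwise interpolation} gives
\[
    XY_\e(v_\e, A_B) \leq \sqrt{3}\,\e^2 \int_{A_B} |\nabla \overarc v_\e|^2 \, \d x.
\]
For $\e$ small we have $A_B \subset \Omega''$, since $R_B \leq C\e^p|\log \e|^2$ by~\eqref{ineq:radboundt} and $x_B \in \Omega' \subset\subset \Omega''$; summing over $B \in \mathcal{B}_\e^{=0}$ and invoking~\eqref{ineq:energyboundtkeps} yields $\sum_B XY_\e(v_\e, A_B) \leq \sqrt{3}\,C_1 \e^2$, so in particular each $XY_\e(v_\e, A_B) \leq C_1' \e^2$ with $C_1'$ independent of $\e$ and $B$. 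The scale condition $\e/r_B < (\beta_p - 1)/(C_0 C_1')(2\pi/3)^2$ of Remark~\ref{rmk:extension} holds for $\e$ small, because $r_B \geq \e^{\sqrt p}/(2\sqrt{3}\,\beta_p)$ by property (6) and $\sqrt p < 1$. Observing that the midpoint of $r_B$ and $R_B$ is exactly $c_p R_B$, Lemma~\ref{lemma:extension} produces $\overline v_\e^B \in \SF_\e$ with $\overline v_\e^B = v_\e$ on $\L_\e \cap (\R^2 \setminus \overline B_{c_p R_B}(x_B))$, $|\mu_{\overline v_\e^B}|(B) = 0$, and
\[
    XY_\e(\overline v_\e^B, B) \leq C(\beta_p)\,XY_\e(v_\e, A_B).
\]

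By property (2) the balls in $\mathcal{B}_\e(t_{\e,p}^{k_\e})$ are pairwise disjoint, so we glue the $\overline v_\e^B$ into a single $\overline v_\e \in \SF_\e$ by setting $\overline v_\e = \overline v_\e^B$ inside each $B \in \mathcal{B}_\e^{=0}$ and $\overline v_\e = v_\e$ elsewhere; then define $\overline u_\e$ by inverting~\eqref{def:from u to v}. The two required properties $\overline u_\e = u_\e$ on $\Omega' \setminus \bigcup_B B_{c_p R_B}(x_B)$ and $|\mu_{\overline v_\e}|(B) = 0$ for every $B \in \mathcal{B}_\e^{=0}$ hold by construction. For the energy estimate, $R_B - c_p R_B = R_B(\beta_p - 1)/(2\beta_p) \geq c\e^{\sqrt p} \gg \e$ for $\e$ small, so any $T \in \T_\e(\Omega')$ not entirely contained in some $B \in \mathcal{B}_\e^{=0}$ cannot have both a vertex inside $B_{c_p R_B}(x_B)$ and a vertex outside $B_{R_B}(x_B)$; hence all its vertices lie outside $\overline B_{c_p R_B}(x_B)$, and $E_\e(\overline u_\e, T) = E_\e(u_\e, T)$. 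For the remaining triangles, inequality~\eqref{eq:E less than XY} and the above bound give
\[
    \sum_{B \in \mathcal{B}_\e^{=0}} \sum_{T \subset B} E_\e(\overline u_\e, T) \leq 4 \sum_B XY_\e(\overline v_\e^B, B) \leq 4 C(\beta_p) \sum_B XY_\e(v_\e, A_B) \leq C(\beta_p)\,\e^2,
\]
so $E_\e(\overline u_\e, \Omega') \leq E_\e(u_\e, \Omega') + C(\beta_p)\,\e^2$. Dividing by $\e^2|\log \e|$ and taking the $\liminf$ yields the claim. The main obstacle is the uniform verification of the hypotheses of Lemma~\ref{lemma:extension} across the possibly diverging family $\mathcal{B}_\e^{=0}$: this crucially relies on the averaging choice of $k_\e$ underlying~\eqref{ineq:energyboundtkeps}, which guarantees that the collective $XY$-energy on the annuli $A_B$ is of order $\e^2$ rather than $\e^2|\log \e|$, so that the cost of removing the dipoles is negligible at the scale $\e^2|\log \e|$.
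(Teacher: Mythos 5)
Your proof is correct and follows essentially the same route as the paper's: identify the predecessor ball $B^-$ at time $t_{\e,p}^{k_\e-1}$ (available because $k_\e\in\K_\e$ avoids merging times), verify the hypotheses of the extension Lemma~\ref{lemma:extension} on each annulus $\A{r_B}{R_B}(x_B)$ using property~(4) of the ball construction and the averaged energy bound~\eqref{ineq:energyboundtkeps}, then sum the resulting $XY$-energies of the extensions and control the total by a constant times~$\e^2$. The only place where you are more explicit than the paper is in checking that triangles that are not contained in any $B\in\mathcal{B}_\e^{=0}$ have all their vertices outside $\overline B_{c_pR_B}(x_B)$ (so that $E_\e(\overline u_\e,T)=E_\e(u_\e,T)$ there); the paper takes this for granted when it writes the decomposition $E_\e(\overline u_\e,\Omega')\le E_\e(u_\e,\Omega')+E_\e\big(\overline u_\e,\bigcup_B B\big)$, but your observation $R_B-c_pR_B\gg\e$ is exactly the fact needed to justify it.
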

\begin{proof}
Let $B_{R_\e}(x_\e) \in \mathcal{B}_\e^{=0}$. Since $k_\e \in \K_\e$, by~\eqref{def:Keps0} no merging occurs in the interval $(t_{\e,p}^{k_\e-1},t_{\e,p}^{k_\e}]$ and  therefore there exists $B_{r_\e}(x_\e) \in \mathcal{B}_\e(t_{\e,p}^{k_\e-1})$ (\ie a ball with the same center). Note that, by~\eqref{eq:lower bound on radius}, 
\begin{equation} \label{eq:useful ratio}
    \frac{\e}{r_\e} \leq \frac{ C  }{1+t_\e^{k_\e-1,p}} = \frac{ C  \e^{1-\sqrt{p}}}{(\beta_p)^{k_\e -1}} \leq  C  \e^{1-\sqrt{p}} \to 0 \, .
\end{equation}
  Let $r_\e'$ be the radius of the ball centred in $x_\e$ at the last merging time  $T \leq t_{\e,p}^{k_\e-1}$  (in the case no merging occurred before  $t_{\e,p}^{k_\e-1}$,  let $T=0$). By construction, recalling~\eqref{def:time}, we have that
    \begin{align*}
    \frac{r_\e}{r_\e'} = \frac{1+t_{\e,p}^{k_\e-1}}{1+T}\,,\quad \frac{R_\e}{r_\e'} = \frac{1+t_{\e,p}^{k_\e}}{1+T} \implies \frac{R_\e}{r_\e} = \beta_p\,.
    \end{align*}
    Note that $\mu_{v_\e}(B_{r_\e}(x_\e))=0$ and, by property~(4) in Lemma~\ref{lemma:ball construction} and due to the choice $\sigma = 3 \e$, $|\mu_{v_\e}|(\A{r_\e-3\e}{R_\e+3\e}(x_\e))=0$. Furthermore, due to \eqref{est:XY-Gradient} and to \eqref{ineq:energyboundtkeps}, we have that
    \begin{align}\label{ineq:ineqXYarcannulus}
    \frac{1}{\e^2} XY_\e(v_\e,\A{r_\e}{R_\e}(x_\e)) \leq \sqrt{3}\int_{\A{r_\e}{R_\e}(x_\e)}|\nabla \overarc v_\e|^2 \, \d x \leq C_1\,.
    \end{align}
     Therefore, we are in a position to apply Lemma~\ref{lemma:extension}, see also Remark~\ref{rmk:extension}.  We obtain $\overline{v}_\e \in \mathcal{SF}_\e$ such that $\overline{v}_\e = v_\e$ on  $ \A{c_p R_\e}{R_\e}(x_\e)$ (observe that $\frac{r_\e+R_\e}{2} = c_p R_\e$),   $|\mu_{\overline{v}_\e}|(B_{R_\e}(x_\e)) =0$, and 
     \begin{align}\label{ineq:extension}
     XY_\e(\overline{v}_\e,B_{R_\e}(x_\e)) \leq  C(\beta_p)  XY_\e(v_\e,\A{r_\e}{R_\e}(x_\e)) 
     \end{align}
      for $\e$ small enough (\ie such that  $\frac{\e}{r_\e} < \frac{\beta_p-1}{C_0 C_1}(\frac{2\pi}{3})^2$, \cf~\eqref{eq:useful ratio}, where $C_0$ is given by Lemma~\ref{lemma:extension}). We set 
     \begin{equation*}
        \overline{u}_\e(\e i) := \overline{v}_\e(\e i) \, , \quad  \overline{u}_\e(\e j) := R\big[\tfrac{2\pi}{3}\big](\overline{v}_\e(\e j)) \, , \quad  \overline{u}_\e(\e k) := R\big[-\tfrac{2\pi}{3}\big](\overline{v}_\e(\e k)) 
     \end{equation*}
    for $\e i \in \mathcal{L}_\e^1$, $\e j  \in \mathcal{L}_\e^2$, $\e k \in \mathcal{L}_\e^3$ in accordance with~\eqref{def:from u to v}. By~\eqref{eq:E less than XY}, \eqref{ineq:extension}, and~\eqref{ineq:ineqXYarcannulus}, we get 
     \begin{align}\label{ineq:tildeu}
     \frac{1}{\e^2}E_\e(\overline{u}_\e, B_{R_\e}(x_\e))\leq \frac{C}{\e^2}  XY_\e(\overline{v}_\e,B_{R_\e}(x_\e)) \leq \frac{C}{\e^2} XY_\e(v_\e,\A{r_\e}{R_\e}(x_\e))\leq C\int_{\A{r_\e}{R_\e}(x_\e)} |\nabla \overarc v_\e|^2 \, \d x\, .
     \end{align}
    We apply this construction for all $B  \in \mathcal{B}_\e^{=0}$ in order to obtain $\overline{u}_\e \in \mathcal{SF}_\e$ such that $\overline{u}_\e= u_\e$ on $\Omega' \setminus \bigcup_{B_R(x) \in \mathcal{B}_\e^{=0}} B_{c_p R}(x)$, $|\mu_{\overline{v}_\e}|(B)=0$ for all $B \in \mathcal{B}_\e^{=0}$, and 
    \begin{align}\label{ineq: Beps0}
    \frac{1}{\e^2} E_\e\Big(\overline{u}_\e,\bigcup_{B \in \mathcal{B}_\e^{=0}}B\Big) \leq C \int_{U_\e(t_{\e,p}^{k_\e})\setminus \overline U_\e(t_{\e,p}^{k_\e-1})} |\nabla \overarc v_\e|^2 \, \d x \leq C\,,
    \end{align}
     where we exploited~\eqref{ineq:tildeu} and~\eqref{ineq:energyboundtkeps}.  Using \eqref{ineq: Beps0}, we therefore obtain
    \begin{align*}
     \liminf_{\e \to 0}\frac{1}{\e^2|\log \e|} E_\e(\overline{u}_\e, \Omega') &\leq   \liminf_{\e \to 0}\frac{1}{\e^2|\log \e|} \Big( E_\e(u_\e, \Omega^\prime) +E_\e\Big(\overline{u}_\e,\bigcup_{B \in \mathcal{B}_\e^{=0}}B \Big)  \Big) 
     \\&\leq \liminf_{\e \to 0}\frac{1}{\e^2|\log \e|} E_\e(u_\e, \Omega^\prime) + \limsup_{\e \to 0} \frac{C}{|\log \e|}   \\&=  \liminf_{\e \to 0}\frac{1}{\e^2|\log \e|} E_\e(u_\e, \Omega^\prime)\,.
    \end{align*}
    This concludes the proof.
\end{proof}

    Thanks to Lemma~\ref{lemma:replace}, we replace $u_\e$ by $\overline{u}_\e$ and thus we can assume hereafter that the collection~$\mathcal{B}_\e^{=0}$ is empty without loss of generality. Hence, it remains to prove the lower bound for the sequence $u_\e$ using only the family of balls $\mathcal{B}_\e^{\neq 0}$. Before going further with the proof, we obtain the lower bound in a simpler framework. Afterwards, we shall reduce to this setting. We recall that 
\begin{equation*}
    \Adm{r}{R}^\e(d) := \Big\{ u \in \SF_\e : \mu_v(T) = 0 \text{ for every } T \in \T_\e(\mathbb{R}^2)\,, \ T\cap \A{r}{R} \neq \emptyset \,,   \  \mu_v(B_r) = d  \Big\} \, ,
\end{equation*} 
where $v \in \SF_\e$ is the auxiliary spin field associated to $u$ defined as in~\eqref{def:from u to v}.

        \begin{lemma} \label{lem:liminf with exponents}
        Let $d \in \Z \sm \{0\}$ and let $0 < q_1 < q_2 < 1$. Then 
        \begin{equation*}
            \liminf_{\e \to 0} \frac{1}{\e^2|\log \e|} \inf \{ E_\e(u,\A{\e^{q_2}}{\e^{q_1}}) : u \in \Adm{\e^{q_2}}{\e^{q_1}}^\e(d) \} \geq (q_2 - q_1) 2 \sqrt{3} \pi |d|^2 .
        \end{equation*}  
        \end{lemma}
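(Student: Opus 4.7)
The plan is to reduce the lemma to Proposition~\ref{prop:convergence of minima} through a dyadic decomposition of the annulus combined with a scaling argument that preserves the $\tfrac{1}{\e^2}$-normalized energy. Fix $\rho > 1$ and set $r_k := \rho^k \e^{q_2}$ for $k = 0,\ldots,K_\e - 1$ with $K_\e := \lfloor (q_2-q_1)|\log\e|/\log\rho \rfloor$; the subannuli $A_k := \A{r_k}{\rho r_k}$ are pairwise disjoint and contained in $\A{\e^{q_2}}{\e^{q_1}}$. The first step is to verify that $u|_{A_k} \in \Adm{r_k}{\rho r_k}^\e(d)$: the vorticity-free condition is inherited directly from the larger annulus, while the degree condition $\mu_v(B_{r_k}) = d$ follows because $\mu_v(\A{\e^{q_2}}{r_k}) = 0$ (any triangle with barycenter there intersects $\A{\e^{q_2}}{\e^{q_1}}$, hence carries zero vorticity by admissibility).

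Next, I rescale each $A_k$ by $r_k^{-1}$ to produce the \emph{fixed} target annulus $\A{1}{\rho}$ with new lattice spacing $\tilde\e_k := \e/r_k$. Setting $\tilde u(y) := u(r_k y)$, the definition $E_\e(u,T) = \e^2 |u(\e i)+u(\e j)+u(\e k)|^2$ yields $E_\e(u,A_k) = r_k^2\, E_{\tilde\e_k}(\tilde u, \A{1}{\rho})$, so the $\tfrac{1}{\e^2}$-normalized energies coincide and admissibility is preserved. The key point is that $\tilde\e_k \leq \e^{1-q_2}$ uniformly in $k$, hence $\tilde\e_k \to 0$ uniformly as $\e \to 0$, since $q_2 < 1$. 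Applying Proposition~\ref{prop:convergence of minima} on the fixed annulus $\A{1}{\rho}$ then provides, for every $\eta > 0$, a threshold $\tilde\e_0 = \tilde\e_0(\eta,\rho,d)$ such that
\[
\inf\bigl\{\tilde\e^{-2}\,E_{\tilde\e}(\tilde u, \A{1}{\rho}) : \tilde u \in \Adm{1}{\rho}^{\tilde\e}(d)\bigr\} \geq 2\sqrt{3}\pi |d|^2 \log\rho - \eta
\]
for all $\tilde\e < \tilde\e_0$. Crucially, this threshold does \emph{not} depend on $k$, because after rescaling every subannulus becomes the same set $\A{1}{\rho}$.

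Finally, superadditivity $E_\e(u,\A{\e^{q_2}}{\e^{q_1}}) \geq \sum_k E_\e(u, A_k)$ (which holds since the families $\T_\e(A_k)$ are pairwise disjoint subsets of $\T_\e(\A{\e^{q_2}}{\e^{q_1}})$) gives, for $\e$ small enough and any $u \in \Adm{\e^{q_2}}{\e^{q_1}}^\e(d)$,
\[
\frac{1}{\e^2}E_\e(u,\A{\e^{q_2}}{\e^{q_1}}) \geq K_\e \bigl(2\sqrt{3}\pi |d|^2 \log\rho - \eta\bigr).
\]
Dividing by $|\log\e|$, using $K_\e/|\log\e| \to (q_2-q_1)/\log\rho$, taking the liminf in $\e$ and then sending $\eta \to 0$ produces the bound $(q_2-q_1)\cdot 2\sqrt{3}\pi|d|^2$, independent of the auxiliary parameter $\rho$.

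The main obstacle I anticipate is exactly the uniformity in $k$: a naive application of Proposition~\ref{prop:convergence of minima} annulus-by-annulus with $\e$-dependent radii $r_k, \rho r_k$ would give an $\e_0$ depending on $k$, and the $K_\e \sim |\log\e|$ subannuli would preclude summing. The scaling argument circumvents this by reducing every subannulus to the identical problem on $\A{1}{\rho}$. All other ingredients---preservation of admissibility under restriction and rescaling, and scaling invariance of $\e^{-2}E_\e$---are routine verifications from the definitions.
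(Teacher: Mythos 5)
Your proof is correct and takes essentially the same route as the paper: both perform a dyadic decomposition of $\A{\e^{q_2}}{\e^{q_1}}$ into $\sim |\log\e|$ concentric subannuli of fixed ratio, rescale each subannulus to the \emph{fixed} reference annulus $\A{1}{\rho}$ so that Proposition~\ref{prop:convergence of minima} can be applied with a threshold independent of the subannulus index, and then conclude via superadditivity of $E_\e$. The one cosmetic difference is that the paper selects the single subannulus $A^{\overline m,\e}$ of minimal energy and multiplies its lower bound by the count $M_{\e,R}$, while you sum the lower bound over all $K_\e$ subannuli using the uniform-in-$k$ threshold $\tilde\e_0(\eta,\rho,d)$ from Proposition~\ref{prop:convergence of minima} together with $\tilde\e_k\le\e^{1-q_2}\to 0$; these are interchangeable, and your identification of the scaling step as the device that removes the $k$-dependence is exactly the point that makes the argument work.
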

        \begin{proof}
            % Let us start by proving that
            % \begin{equation*}
            %     \liminf_{\e \to 0} \frac{1}{\e^2|\log \e|} \inf \{ E_\e(u,\A{\e^{q_2}}{\e^{q_1}}) : u \in \Adm{\e^{q_2}}{\e^{q_1}}^\e(d) \} \geq (q - p) 2 \sqrt{3} \pi |d|^2 .
            % \end{equation*}
            For every $\e$ let $u'_\e \in \Adm{\e^{q_2}}{\e^{q_1}}^\e(d)$ be such that 
            \begin{equation*}
                E_\e(u'_\e ,\A{\e^{q_2}}{\e^{q_1}}) \leq \inf \{ E_\e(u,\A{\e^{q_2}}{\e^{q_1}}) : u \in \Adm{\e^{q_2}}{\e^{q_1}}^\e(d) \} + \e^2 \, .
            \end{equation*}
            We fix $R > 1$, we set $M_{\e,R} := \lfloor  (q_2-q_1)\frac{|\log \e|}{\log R}\rfloor$ and $A^{m,\e} := \A{R^{m-1}\e^{q_2}}{R^m \e^{q_2}}$. We remark that $ \bigcup_{m=1}^{M_{\e,R}} A^{m,\e} \subset \A{\e^{q_2}}{\e^{q_1}}$. Let $\overline{m} = \overline{m}_{\e,R}$ be such that 
            \begin{equation*}
                E_\e(u'_\e ,A^{\overline{m},\e}) \leq E_\e(u'_\e ,A^{m,\e})  \, , \quad \text{ for } m=1,\dots, M_{\e,R} \, .
            \end{equation*}
            We let $\eta_\e := \e/R^{\overline{m}-1}\e^{q_2}$ and we define $u'_{\eta_\e}(\eta_\e i) := u'_\e(\e i)$ for every $i \in \L$. Then we have
            \begin{equation*}
                \frac{1}{\e^2}E_\e(u'_\e ,\A{\e^{q_2}}{\e^{q_1}}) \geq \sum_{m=1}^{M_{\e,R}} \frac{1}{\e^2} E_\e(u'_\e ,A^{m,\e})  \geq  \frac{M_{\e,R}}{\e^2} E_\e(u'_\e ,A^{\overline{m},\e}) =  \frac{M_{\e, R}}{\eta_\e^2}E_{\eta_\e}(u'_{\eta_\e}, \A{1}{R})\, .
            \end{equation*}
            Since $M_{\e,R} \geq (q_2-q_1)\frac{|\log \e|}{\log R} - 1$, from the previous inequalities, and by Proposition~\ref{prop:convergence of minima} it follows that 
            \begin{equation*}
                \begin{split}
                    & \liminf_{\e \to 0} \frac{1}{\e^2|\log \e|} \inf \{ E_\e(u,\A{\e^{q_2}}{\e^{q_1}}) : u \in \Adm{\e^{q_2}}{\e^{q_1}}^\e(d) \}  \geq \liminf_{\e \to 0} \frac{1}{\e^2|\log \e|}  E_\e(u'_\e,\A{\e^{q_2}}{\e^{q_1}}) \\
                    &\quad \geq \liminf_{\e \to 0} \frac{M_{\e, R}}{|\log \e|}  \frac{1}{\eta_\e^2} E_{\eta_\e}(u'_{\eta_\e},\A{1}{R}) \geq \liminf_{\e \to 0} \Big[ \Big( \frac{q_2-q_1}{\log R} - \frac{1}{|\log \e|}\Big) \frac{1}{\eta_\e^2} E_{\eta_\e}(u'_{\eta_\e},\A{1}{R}) \Big] \\
                    & \quad \geq  \frac{q_2-q_1}{\log R} \liminf_{\eta \to 0} \ \inf \Big\{ \frac{1}{\eta^2} E_{\eta}(u,\A{1}{R}) : u \in \Adm{1}{R}^\eta(d) \Big\} \geq (q_2-q_1) 2 \sqrt{3} \pi |d|^2 .
                \end{split}
            \end{equation*}
            This concludes the proof.  
        \end{proof}

        In view of \eqref{ineq:Muepspmassbound}, we have that $\# \mathcal{B}_\e^{\neq 0} \leq C_p$ and therefore we can assume that (up to a subsequence) $\#\mathcal{B}_\e^{\neq 0} =L$ for all $\e >0$ for some $L \in \mathbb{N}$.  Let $\mathcal{B}_\e^{\neq 0} = \{ B_{r_\e^\ell}(x_\e^\ell)\}_{\ell=1}^{L}$.  By definition~\eqref{def:muepsp}, we have that $\{x_\e^1,\ldots,x_\e^L\}$ is the support of the measure $\mu_{\e,p}$. The points~$x_\e^\ell$ converge (up to a subsequence) to a finite set of points $\{0=\xi^1,\ldots,\xi^{L^\prime}\}$ contained in $\overline{\Omega}$ with $L^\prime \leq L$. Fix $\rho >0$ such that $B_\rho \subset\subset \Omega^\prime$ and $B_\rho(\xi^h) \cap B_\rho = \emptyset$ for all $h =2,\ldots,L^\prime$. For $\e>0$ small enough we have that either $B_{r_\e^\ell}(x_\e^\ell) \cap B_\rho =\emptyset$ or $B_{r_\e^\ell}(x_\e^\ell) \subset\subset B_\rho$. Furthermore, by~\eqref{def:muepsp}, \eqref{eq:convergence of muepsp}, and the fact that $|\mu|(\partial B_\rho)=0$, we have that 
        \begin{align}\label{eq:measurecounting}
        \sum_{x_\e^\ell \in B_\rho} \mu_{v_\e}(B_{r_\e^\ell}(x_\e^\ell)) = d\,.
        \end{align}
        We prove that 
        \begin{align*}
        \liminf_{\e \to 0} \frac{1}{\e^2|\log \e|} E_\e(u_\e, B_\rho) \geq 2\sqrt{3}\pi|d|\,.
        \end{align*}
        Since our estimate is local, we can assume that $|\mu_{v_\e}|(\mathbb{R}^2\setminus B_\rho) =0$, which implies that $x_\e^\ell \in B_\rho$,  \ie $B_{r_\e^\ell}(x_\e^\ell) \subset B_\rho$,  for $\ell=1,\ldots,L$  and $\e$ small enough. To reduce to the setting in Lemma~\ref{lem:liminf with exponents} we follow an argument introduced, \eg in~\cite{DLGarPon} or \cite{AliBraCicDLPia}. It is aimed at separating the scales of the radii of the balls charged by $\mu_{v_\e}$. 
        
          Fix $0< p' < p'' < p$ such that $\mathcal{R}(\mathcal{B}_\e(t_{\e,p}^{k_\e})) \leq \e^{p''}$ (this is possible due to  \eqref{ineq:radboundt}). We consider the function $g_\e \colon [p',p''] \to \{1,\ldots,L\}$ such that $g_\e(q)$ gives the number of connected components of $\bigcup_{\ell=1}^L B_{\e^{q}}(x_\e^\ell)$. For each $\e >0$, the function $g_\e$ is monotonically non-decreasing so that it can have at most  $\hat{L} \leq L-1$ discontinuity points. We let $\{q^\e_1,\dots,q^\e_{\hat L}\}$ denote these discontinuity points with 
        \begin{align*}
        p' \leq q^\e_1 < \ldots < q^\e_{\hat{L}} \leq p''.
        \end{align*}
       There exists a finite set  $\mathfrak{D} =\{q_0,\ldots,q_{\tilde{L}+1}\}$ with $q_h < q_{h+1}$ such that, up to a subsequence, $(q^\e_j)_\e$ converges to some point in $\mathfrak{D}$ as $\e \to 0$, for $j=1,\ldots , \hat{L}$. We set $q_0=p'$,   $q_{\tilde{L}+1}=p''$, and thus $\tilde{L} \leq \hat{L}$. Let us fix $\lambda >0$ with $2\lambda<\min_h (q_{h+1}-q_h)$. For $\e>0$ small enough (that is, such that for $h'=1,\ldots, \hat{L}$ one has $|q^\e_{h'}- q_h| <\lambda/2$ for some $q_h \in \mathfrak{D}$) the function~$g_\e$ is constant in the interval $[q_h+\lambda/2,q_{h+1}-\lambda/2]$ with constant value $M^\e_h$,  where $M^\e_h \leq L$. Up to extracting a subsequence, we assume that $M^\e_h = M_h$. We now construct a family of annuli $\{A_\e^{h,m}\}_{m=1}^{M_h}$ where we can apply Lemma~\ref{lem:liminf with exponents}.

\begin{lemma}
        In the assumptions above,  for $\e$ sufficiently small,  for every  $h=0,\ldots,\tilde{L}$  there exists a family of  pairwise disjoint  annuli $\{A_\e^{h,m}\}_{m=1}^{M_h}$ with $A_\e^{h,m} := B_{\e^{q_h+\lambda}} (z_\e^{h,m}) \setminus \overline{B}_{\e^{q_{h+1}-\lambda}}(z_\e^{h,m})$ such that the sets in the family  $\{ \bigcup_{m=1}^{M_h} A_\e^{h,m} \}_{h=1}^{\tilde L}$ are pairwise disjoint  and 
        \begin{align}\label{incl:BepsCepsi}
            \bigcup_{\ell=1}^L B_{r_\e^\ell}(x_\e^\ell) \subset \bigcup_{m=1}^{M_h} B_{\e^{q_{h+1}-\lambda}}(z_\e^{h,m}) \, 
        \end{align}
         for $h=0,\ldots,\tilde{L}$.  Moreover, the points $z_\e^{h,m}$ can be chosen in $\L_\e\cap\bigcup_{\ell=1}^L B_\e(x_\e^\ell)$. 
    \end{lemma}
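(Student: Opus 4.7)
The plan is to cluster the centers $x_\e^1,\ldots,x_\e^L$ according to the connected-component structure of $\bigcup_\ell B_{\e^q}(x_\e^\ell)$ at a carefully chosen scale $q$ inside the plateau interval $[q_h+\lambda/2,q_{h+1}-\lambda/2]$. I would fix $q_h^\star:=q_{h+1}-\lambda/2$ and let $I_{h,1},\ldots,I_{h,M_h}$ be the partition of $\{1,\ldots,L\}$ induced by the connected components of $\bigcup_\ell B_{\e^{q_h^\star}}(x_\e^\ell)$. The key observation is that since $g_\e\equiv M_h$ throughout $[q_h+\lambda/2,q_{h+1}-\lambda/2]$, no mergers can occur on this whole interval, so the same partition $I_{h,1},\ldots,I_{h,M_h}$ realizes the connected components at every intermediate scale. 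For each $m$ I would then single out a representative $\ell_{h,m}\in I_{h,m}$ and take $z_\e^{h,m}$ to be any lattice point in $\L_\e\cap B_\e(x_\e^{\ell_{h,m}})$, which is nonempty because the triangular lattice~$\L_\e$ has spacing~$\e$.

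To verify \eqref{incl:BepsCepsi}, I would use that any $\ell\in I_{h,m}$ can be joined to $\ell_{h,m}$ by a chain of at most $L-1$ overlapping $\e^{q_h^\star}$-balls, so $|x_\e^\ell-x_\e^{\ell_{h,m}}|<2L\e^{q_{h+1}-\lambda/2}$; combined with $|z_\e^{h,m}-x_\e^{\ell_{h,m}}|\leq\e$ and $r_\e^\ell\leq\mathcal{R}(\mathcal{B}_\e(t_{\e,p}^{k_\e}))\leq\e^{p''}$ from \eqref{ineq:radboundt}, this would give
\begin{equation*}
B_{r_\e^\ell}(x_\e^\ell)\subset B_{\e^{p''}+2L\e^{q_{h+1}-\lambda/2}+\e}(z_\e^{h,m})\subset B_{\e^{q_{h+1}-\lambda}}(z_\e^{h,m})
\end{equation*}
for $\e$ small, since $p''>q_{h+1}-\lambda$ and $q_{h+1}-\lambda/2>q_{h+1}-\lambda$. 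For the within-level disjointness of $\{A_\e^{h,m}\}_{m=1}^{M_h}$, for $m\neq m'$ the indices $\ell_{h,m}$ and $\ell_{h,m'}$ lie in distinct components also at the scale $q_h+\lambda/2$, whence $|x_\e^{\ell_{h,m}}-x_\e^{\ell_{h,m'}}|\geq 2\e^{q_h+\lambda/2}$ and, after the lattice correction, $|z_\e^{h,m}-z_\e^{h,m'}|\geq 2\e^{q_h+\lambda/2}-2\e>2\e^{q_h+\lambda}$ for $\e$ small, so that the outer balls of the annuli are pairwise disjoint and, a fortiori, the annuli themselves.

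For cross-level disjointness, fix $h<h'$ and $m'$, and let $m_\star$ be the unique index such that $\ell_{h',m'}\in I_{h,m_\star}$. The same chain argument, applied at level $h$, gives $|z_\e^{h',m'}-z_\e^{h,m_\star}|\leq 2L\e^{q_{h+1}-\lambda/2}+2\e$. Since $q_{h'}\geq q_{h+1}$, I would conclude
\begin{equation*}
B_{\e^{q_{h'}+\lambda}}(z_\e^{h',m'})\subset B_{\e^{q_{h'}+\lambda}+2L\e^{q_{h+1}-\lambda/2}+2\e}(z_\e^{h,m_\star})\subset B_{\e^{q_{h+1}-\lambda}}(z_\e^{h,m_\star})
\end{equation*}
for $\e$ small, so that $A_\e^{h',m'}$ sits inside the inner disk of $A_\e^{h,m_\star}$ and is therefore disjoint from $A_\e^{h,m_\star}$. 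For $m\neq m_\star$, any $x\in A_\e^{h,m}\cap A_\e^{h',m'}$ would force $|z_\e^{h,m}-z_\e^{h,m_\star}|\leq\e^{q_h+\lambda}+\e^{q_{h+1}-\lambda}\leq 2\e^{q_h+\lambda}$, contradicting the within-level separation of the previous paragraph. The main obstacle is essentially combinatorial bookkeeping: one must track the competing scales $\e^{q_h+\lambda}$, $\e^{q_{h+1}-\lambda}$, $\e^{q_{h+1}-\lambda/2}$, together with the contribution~$\e^{p''}$ from the original radii and the $\e$ coming from the lattice perturbation, and use the margins of size~$\lambda/2$ built into the plateau interval to absorb the diameter and lattice-perturbation errors uniformly in~$h$.
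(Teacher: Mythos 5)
Your proof is correct and takes essentially the same approach as the paper: you identify the same partition of $\{1,\ldots,L\}$ by the connected components of $\bigcup_\ell B_{\e^q}(x_\e^\ell)$ on the plateau interval, pick a lattice representative $z_\e^{h,m}$ near one center in each class, and use the chain bound plus the radius bound $\mathcal{R}(\mathcal{B}_\e(t_{\e,p}^{k_\e}))\leq\e^{p''}$ to get \eqref{incl:BepsCepsi}. The only stylistic difference is in the cross-level disjointness: the paper establishes the nesting inclusion $\bigcup_{m} B_{\e^{q_h+\lambda}}(z_\e^{h,m}) \subset \bigcup_{n} B_{\e^{q_h-\lambda}}(z_\e^{h-1,n})$ between consecutive levels and lets it telescope, whereas you argue directly for arbitrary $h<h'$ via the chain bound at level $h$ together with a separation-and-contradiction step for $m\neq m_\star$; both arguments are correct and rest on the same scale separation $q_{h+1}-q_h>2\lambda$.
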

    \begin{proof}
    
   Let $h\in\{0\,,\ldots\,, \tilde{L}\}$. Since $g_\e \equiv M_{h}$ on $[q_{h} + \tfrac{\lambda}{2}, q_{h+1} - \tfrac{\lambda}{2}]$,  we find a partition $\{ \mathcal{I}_\e^{h,m} \}_{m=1}^{M_{h}}$ of $\{ 1, \dots, L\}$ such that $\{ \bigcup_{\ell \in \mathcal{I}_\e^{h,m}} B_{\e^{q}}(x_\e^{\ell}) \}_{m=1}^{M_{h}}$ are the $M_{h}$ connected components of $\bigcup_{\ell=1}^{L} B_{\e^{q}}(x_\e^{\ell})$ for $q \in [q_{h} + \tfrac{\lambda}{2}, q_{h+1} - \tfrac{\lambda}{2}]$. For $m = 1, \dots, M_{h}$ we choose arbitrarily $\ell(m) \in \mathcal{I}_\e^{h,m}$ and $z_\e^{h,m}\in\L_\e\cap B_\e(x_\e^{\ell(m)})$. For $\e$ small enough the balls in $\{ B_{\e^{q}}(z_\e^{h,m}) \}_{m=1}^{M_{h}}$ are pairwise disjoint for $q \in [q_{h} + \lambda, q_{h+1} -\lambda]$, since $B_{\e^q}(z_\e^{h,m})\subset B_{\e^q+\e}(x_\e^{\ell(m)})\subset B_{\e^{q_h+\frac{\lambda}{2}}}(x_\e^{\ell(m)})$, thus each of the balls is contained in a different connected component. Moreover, \eqref{incl:BepsCepsi} holds true by construction. Indeed, let $x\in B_{r_\e^\ell}(x_\e^\ell)\subset B_{\e^{p''}}(x_\e^\ell)$ for some $\ell\in\{1\,,\ldots\,, L\}$, and let $m_\ell\in\{1\,,\ldots\,, M_h\}$ with $\ell\in\mathcal{I}_\e^{h,m_\ell}$. Then
   \begin{equation*}
   |x-z_\e^{h,m}|\leq|x-x_\e^\ell|+|x_\e^{\ell}-z_\e^{h,m_\ell}|\leq\e^{p''}+\e +(\#\mathcal{I}_\e^{h,m_\ell}-1)\e^{q_{h+1}-\frac{\lambda}{2}}\leq\e^{p''}+\e +L\e^{q_{h+1}-\frac{\lambda}{2}}\ll\e^{q_{h+1}-\lambda},
   \end{equation*}
   for $\e$ sufficiently small (depending on $\lambda$), which gives~\eqref{incl:BepsCepsi}.
   Let us finally prove that  
        \begin{equation} \label{eq:from h to h-1}
            \bigcup_{m=1}^{M_{h}} B_{\e^{q_h+\lambda}}(z_\e^{h,m}) \subset  \bigcup_{n=1}^{M_{h-1}} B_{\e^{q_h-\lambda}}(z_\e^{h-1,n})\quad\text{for } h = 1\,,\ldots\,,\tilde{L} \, ,
        \end{equation}
        which implies that $\bigcup_{m=1}^{M_h} A_\e^{h,m}$ and $\bigcup_{n=1}^{M_{h-1}} A_\e^{h-1,n}$ are disjoint. 
        To prove~\eqref{eq:from h to h-1}, let $m \in \{1, \dots, M_{h}\}$ and let $x \in B_{\e^{q_h+\lambda}}(z_\e^{h,m})$  with $z_\e^{h,m}\in\L_\e\cap B_\e(x_\e^{\ell(m)})$. Moreover, let $n_m\in\{1\,,\ldots\,, M_{h-1}\}$ with $\ell(m)\in\mathcal{I}_\e^{h-1,n_m}$. Then a similar argument as above shows that
        \begin{equation*}
        |x-z_\e^{h-1,n_m}|\leq|x-z_\e^{h,m}|+|z_\e^{h,m}-z_\e^{h-1,n_m}|
        \leq \e^{q_h+\lambda}+2\e+L\e^{q_h-\frac{\lambda}{2}}\ll\e^{q_h-\lambda},
        \end{equation*}
        for $\e$ sufficiently small.
    \end{proof}

        Finally, we conclude by exploiting the annuli $A_\e^{h,m}$ to prove the lower bound. Note that, for~$\e$ small enough $A_\e^{h,m} \subset\subset \Omega^\prime$ for  $h=0,\dots,\tilde L$ and $m=1,\dots,M_h$. Moreover, in view of~\eqref{def:muepsp} and~\eqref{ineq:Muepspmassbound}, we have that $|\mu_{v_\e}(B_{\e^{q_{h+1}-\lambda}}(z_\e^{h,m})) | \leq C$ for $h=0,\ldots, \tilde{L}$ and $m=1,\ldots, M_h$. Therefore, up to extracting a further subsequence, $\mu_{v_\e}(B_{\e^{q_{h+1}-\lambda}}(z_\e^{h,m}))=d_{h,m} \in \mathbb{Z} \setminus \{0\}$ with $M_h$ and $d_{h,m}$ independent of $\e$. Finally, by~\eqref{eq:measurecounting}, we have
        \begin{align}\label{eq:sum=d}
        \sum_{m=1}^{M_h} d_{h,m} = d\,.
        \end{align}
        As $u_\e (\, \cdot \, - z_\e^{h,m}) \in \Adm{\e^{q_{h+1}-\lambda}}{\e^{q_h+\lambda}}^\e(d_{h,m})$, since $\mathcal{B}_\e^{=0}=\emptyset$, by property~(4) in Lemma~\ref{lemma:ball construction}  (recalling that $\sigma=3\e$), by~\eqref{incl:BepsCepsi}, and by Lemma~\ref{lem:liminf with exponents}, for every $h$ and $m$ we get
        \begin{align*}
        \liminf_{\e \to 0} \frac{1}{\e^2|\log \e|} E_\e(u_\e, A_\e^{h,m}) \geq (q_{h+1} -q_h -2\lambda) 2\sqrt{3}\pi |d_{h,m}|^2 \geq (q_{h+1} -q_h -2\lambda) 2\sqrt{3}\pi |d_{h,m}|\,,
        \end{align*}
        which, summing over $h$ and $m$ and using \eqref{eq:sum=d}, yields
        \begin{equation*} 
        \begin{split}
        \liminf_{\e \to 0} \frac{1}{\e^2|\log \e|}E_\e(u_\e,\Omega^\prime)&\geq \sum_{h=0}^{\tilde{L}} \sum_{m=1}^{M_h}  (q_{h+1} -q_h -2\lambda) 2\sqrt{3}\pi |d_{h,m}| \geq \sum_{h=0}^{\tilde{L}}(q_{h+1} -q_h -2\lambda) 2\sqrt{3}\pi |d| \\&= (p''-p'-2(\tilde{L}+1)\lambda)2\sqrt{3}\pi |d|   = (p''-p'-2(\tilde{L}+1)\lambda)2\sqrt{3}\pi|\mu|(\Omega^\prime)  \,.
        \end{split}
        \end{equation*}
         The claim follows letting $\lambda \to 0$, $p' \to 0$, $p''\to p$, and $p \to 1$ in the previous inequality. Thanks to the arbitrariness of $\Omega'$, we have proven~\eqref{claim:liminf}. 
      
\begin{remark}%\label{remark:trivial} 
    It is possible to obtain a non-sharp lower bound on $E_\e(u,\Omega)$ in terms of another auxiliary variable -- the spin field $u^1$ obtained by restricting $u$ to the sublattice $\mathcal{L}_\e^1$. Let $\hat{T}$ be a plaquette in the sublattice $\mathcal{L}_\e^1$, namely $\hat{T}= \conv\{\e i, \e i', \e i''\}$, $\e i, \e i', \e i'' \in \mathcal{L}_\e^1$, $|\e i-\e i'|=|\e i-\e i''|=|\e i'-\e i''|=\sqrt{3}\e$. We define
\begin{equation}\label{eq:XYsqrt3}
\begin{split}
XY_{\sqrt{3}\e}(u,\hat{T}) &= \frac{3}{2}\e^2(|u(\e i)-u(\e i')|^2 + |u(\e i)-u(\e i'')|^2 + |u(\e i')-u(\e i'')|^2) \\&= (\sqrt{3}\e)^2 \sqrt{3}\int_{\hat{T}} |\nabla \hat{u}^1|^2\mathrm{d}x\,,
\end{split}
\end{equation}
where $\hat{u}^1$ is the affine interpolation in $\hat{T}$ of the spin field $u^1$.
Let $H$ be the hexagon composed of the $6$ triangles in $\mathcal{T}_\e(\R^2)$ that intersect the interior of $\hat{T}$. By convexity of $x \mapsto|x|^2$ we get
\begin{equation*}
E_\e(u,H)  \geq \frac{1}{2} \e^2(|u(\e i)-u(\e i')|^2 + |u(\e i)-u(\e i'')|^2 + |u(\e i')-u(\e i'')|^2) =\frac{1}{3} XY_{\sqrt{3}\e}(u^1,\hat{T})  \,.
\end{equation*}
Summing over all triangles $\hat{T}$ of the sublattice $\mathcal{L}_\e^1$ and noticing that the energy  of every hexagon $H$ is counted twice, we obtain  
\begin{align*}
2 E_\e(u,\Omega) \geq \frac{1}{3} XY_{\sqrt{3}\e}(u^1,\Omega') \,,
\end{align*}
for all $\Omega' \subset\subset \Omega$ such that $\mathrm{dist}(\Omega', \partial \Omega)  > \e$. We therefore obtain the following non-sharp lower bound (\cf~\cite{DL} and \eqref{eq:XYsqrt3}): 
If $u_\e \in \mathcal{SF}_\e$ satisfies $ \|\mu_{u_\e^1} -\mu\|_{{\rm flat},\Omega'} \to 0$ for all $\Omega'\subset\subset \Omega$, then
\begin{equation*}
\liminf_{\e \to 0} \frac{1}{\e^2|\log \e|} E_\e(u,\Omega) \geq \liminf_{\e \to 0}\frac{1}{2} \frac{1}{(\sqrt{3}\e)^2|\log (\sqrt{3}\e)|} XY_{\sqrt{3}\e}(u_\e^1,\Omega') \geq \sqrt{3}\pi|\mu|(\Omega')\,.
\end{equation*}
\end{remark} 

\section{Proof of Theorem~\ref{thm:main}-{\normalfont \em iii)}}

In this section we prove Theorem~\ref{thm:main}-{\em iii)}.  We start with an upper bound of the XY-energy of the prototypical function with a vortex-like singularity. 
  
\begin{lemma} \label{lemma:XY of x/|x|}
    For $2 \e \leq r \leq R$ one has that 
    \begin{equation} \label{eq:XY on annulus}
        XY_\e\Big(\Big(\frac{x}{|x|}\Big)^{\! d}, \A{r}{R} \Big) \leq 2 \sqrt{3} \pi |d|^2  \e^2 \log \Big(\frac{R}{r }\Big)  + C \e^2 \, .
    \end{equation} 
\end{lemma}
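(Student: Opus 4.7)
The plan is to define $v_\e \in \SF_\e$ by $v_\e(\e i) := (\e i/|\e i|)^d$ for every lattice point $\e i \neq 0$ (and arbitrarily at $0$, should $0 \in \L_\e$, since $r \geq 2\e$ ensures that no triangle $T \in \T_\e(\A{r}{R})$ contains the origin). Then $v_\e$ agrees on $\L_\e$ with the smooth map $v(x) := (x/|x|)^d$, which satisfies the elementary pointwise bounds $|\nabla v(x)|^2 = d^2/|x|^2$ and $|\nabla^2 v(x)| \leq C d^2/|x|^2$ away from $0$. Applying Remark~\ref{rmk:XY is integral},
\begin{equation*}
XY_\e\Big( \Big(\tfrac{x}{|x|}\Big)^{\! d}, \A{r}{R}\Big) = \sqrt{3}\, \e^2 \sum_{T \in \T_\e(\A{r}{R})} \int_T |\nabla \hat v_\e|^2 \, \d x \, ,
\end{equation*}
where $\hat v_\e$ is the piecewise affine interpolation of $v_\e$. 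This reduces the task to comparing the right-hand side with the continuum Dirichlet integral $\sqrt{3}\,\e^2 \int_{\A{r}{R}} |\nabla v|^2 \, \d x = 2\sqrt{3}\pi d^2 \e^2 \log(R/r)$, which already furnishes the desired leading-order term.

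To make the comparison, I would invoke the standard $P_1$-Lagrange interpolation estimate on each triangle: since $\hat v_\e$ is the affine interpolant of $v$ on $T \in \T_\e(\A{r}{R})$, one has $\|\nabla \hat v_\e - \nabla v\|_{L^\infty(T)} \leq C\e \|\nabla^2 v\|_{L^\infty(T)}$. Combined with the bound on $\nabla^2 v$ and with the fact that every $x \in T \subset \A{r}{R}$ satisfies $|x| \geq r$, this yields the pointwise estimate $|\nabla \hat v_\e - \nabla v|(x) \leq C\e\, d^2/|x|^2$ on $\bigcup_T T \subset \A{r}{R}$. Expanding
\begin{equation*}
|\nabla \hat v_\e|^2 \leq |\nabla v|^2 + 2\, |\nabla v|\,|\nabla \hat v_\e - \nabla v| + |\nabla \hat v_\e - \nabla v|^2
\end{equation*}
and integrating, the two error integrals are controlled by
\begin{equation*}
\int_{\A{r}{R}} \! |\nabla v|\,|\nabla \hat v_\e - \nabla v| \, \d x \leq C\e|d|^3 \! \int_{\A{r}{R}} \! \frac{\d x}{|x|^3} \leq \frac{C\e|d|^3}{r} , \qquad \int_{\A{r}{R}}\! |\nabla \hat v_\e - \nabla v|^2 \, \d x \leq C\e^2 d^4 \! \int_{\A{r}{R}} \! \frac{\d x}{|x|^4} \leq \frac{C\e^2 d^4}{r^2} \, .
\end{equation*}

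Multiplying through by $\sqrt{3}\,\e^2$ and invoking the hypothesis $r \geq 2\e$, each error contribution is at most a constant (depending only on $d$) times $\e^2$. Adding the main term $2\sqrt{3}\pi d^2 \e^2 \log(R/r)$ then produces exactly~\eqref{eq:XY on annulus}. I do not foresee any real obstacle in this proof: it is a routine consistency estimate between the lattice XY-energy and the continuum Dirichlet integral, and the precise role of the assumption $r \geq 2\e$ is to make the error scales $\e^2/r$ and $\e^2/r^2$ absorbable into the remainder $C\e^2$.
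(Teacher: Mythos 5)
Your proposal is correct and proceeds by essentially the same mechanism as the paper: compare the discrete $XY$-energy, rewritten via Remark~\ref{rmk:XY is integral} as $\sqrt{3}\e^2\int |\nabla\hat v_\e|^2$, to the continuum Dirichlet integral of $v(x)=(x/|x|)^d$, and control the discrepancy by a second-order Taylor argument that integrates to $O(\e^2)$ over the annulus thanks to $r\geq 2\e$. The only technical difference is in the packaging of this Taylor estimate: you invoke the standard $P_1$-Lagrange interpolation bound $\|\nabla\hat v_\e-\nabla v\|_{L^\infty(T)}\leq C\e\|\nabla^2 v\|_{L^\infty(T)}$ and then expand the square, whereas the paper bounds $|\nabla\hat v_\e\,\tb_\alpha|^2-|\nabla v\,\tb_\alpha|^2$ directly by representing the finite difference $v(\e j)-v(\e i)$ as a line integral of $\nabla v$ along the edge, applying Jensen's inequality, and then the mean-value theorem. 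Both routes yield the same pointwise error of order $\e/|x|^3$ (in your version, dominated by the cross term $2|\nabla v|\,|\nabla\hat v_\e-\nabla v|$) and hence the same $C\e^2$ remainder; the minor subtlety you should keep in mind is that the sup of $\|\nabla^2 v\|$ on a triangle $T\ni x$ is really $\sim d^2/(|x|-\e)^2$, which your hypothesis $r\geq 2\e$ reduces to $\lesssim d^2/|x|^2$, exactly as the paper's $(|x|-\e)^{-3}$ denominator reflects.
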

\begin{proof}
   The computation is standard, but we present it for the sake of completeness. Set $v(x):=\big(\tfrac{x}{|x|}\big)^{\! d}$ for $x \in \R^2 \sm \{0\}$ and $v_\e(x) := \big(\tfrac{x}{|x|}\big)^{\! d}$ for $x \in \L_\e \sm \{0\}$ and $v_\e(0) := e_1$. Let $\alpha \in \{1,2,3\}$ and let $\e i, \e j \in T$ with $j-i$ parallel to $\tb_\alpha$. For every $x \in T$, we have that 
\begin{equation*}
    \begin{split}
       |\nabla \hat v_\e(x)\tb_\alpha|^2 & = \frac{|v(\e j) - v(\e i)|^2}{\e^2} =   \frac{1}{\e^2} \Big|\int_0^1 \nabla v(\e i + t (\e j - \e i))  (\e j-\e i) \, \d t \Big|^2 \\
       &  \leq \int_0^1 |\nabla v(\e i+t(\e j-\e i)) \tb_\alpha |^2 \, \d t \\
       & \leq  |\nabla v(x) \tb_\alpha|^2 +  \int_0^1 \Big( |\nabla v(\e i + t (\e j - \e i)) \tb_\alpha|^2 - |\nabla v(x) \tb_\alpha|^2 \Big) \, \d t \, .
    \end{split}
\end{equation*}
We let $z := \e i + t (\e j - \e i)$ and we find $\xi$ in the segment $[x,z]$ such that 
\begin{equation*}
    \begin{split}
        |\nabla v(z) \tb_\alpha|^2  - |\nabla v(x) \tb_\alpha|^2  & \leq |\nabla v(z)   - \nabla v(x)  |(|\nabla v(z)| + |\nabla v(x)|)  \leq \e |\nabla^2 v(\xi) | (|\nabla v(z)| + |\nabla v(x)|) \\
        & \leq \e \frac{C(d)}{|\xi|^2} \Big(\frac{|d|}{|z|} + \frac{|d|}{|x|}\Big) \leq \e \frac{C}{(|x|-\e)^3} \, .
    \end{split}
\end{equation*}
where we used the fact that $|\nabla v(x)|=\frac{|d|}{|x|}$, $|\nabla^2 v(\xi)| \leq \frac{C(d)}{|\xi|^2}$,\footnote{This follows, \eg, by a computation in polar coordinates which shows that, for $h=1,2$,
\begin{equation*}
    \begin{split}    
        \nabla^2 v^h(x) &  = \frac{1}{\rho^2} \begin{pmatrix}
       2 \de_\theta v^h \sin\theta  \cos \theta +   \de^2_\theta v^h \sin^2 \theta    &    -\de_\theta v^h \cos(2 \theta)  -    \de^2_\theta v^h \sin  \theta \cos \theta \\
       -\de_\theta v^h \cos(2 \theta)  -    \de^2_\theta v^h \sin  \theta \cos \theta  &  -  2 \de_\theta v^h \sin\theta  \cos \theta +    \de^2_\theta v^h \cos^2 \theta
    \end{pmatrix} \, .
    \end{split}
    \end{equation*}
} and $\min\{|x|,|z|,|\xi|\} \geq |x| - \e$.
We conclude that for every $x \in T$ 
\begin{equation*}
    |\nabla \hat v_\e(x)|^2  \leq  |\nabla v(x)|^2 + \e \frac{C}{(|x| - \e)^3} \, .
\end{equation*}
Therefore, by Remark~\ref{rmk:XY is integral} 
\begin{equation*}
    \begin{split}
        XY_\e(v_\e, \A{r}{R}) & \leq  \sqrt{3} \e^2 \int_{\A{r}{R}} |\nabla \hat v_\e(x)|^2 \, \d x \leq \sqrt{3} \e^2 \int_{\A{r}{R}} |\nabla v(x)|^2 \, \d x + \int_{\A{r}{R}}     \frac{C\e^3}{(|x| - \e)^3} \, \d x \\
        & = \sqrt{3} \e^2 \int_{r }^{R} \int_0^{2 \pi} \frac{|d|^2}{\rho}  \, \d \theta \, \d \rho +  C \e^3 \int_{r}^{R} \int_0^{2 \pi} \frac{\rho}{(\rho-\e)^3}  \, \d \theta \, \d \rho \\
        & \leq   2 \sqrt{3} \pi |d|^2 \e^2 \log \Big( \frac{R}{r}\Big) +  C \e^2  \Big( \frac{\e}{r -\e} - \frac{\e}{R -\e} + \frac{\e^2}{ 2(r -\e)^2 } - \frac{\e^2}{ 2(R -\e)^2}\Big) \, ,
    \end{split}
\end{equation*} 
whence \eqref{eq:XY on annulus}.  
\end{proof}

Let us prove Theorem~\ref{thm:main}-{\em iii)}. Let $\mu=\sum_{h=1}^N d_h\delta_{x_h}$ with $d_h\in\Z$ and $x_h\in\Omega$. Let us prove that there exist $u_\e\in\SF_\e$ such that  $\|\mu_{v_\e}-\mu\|_{\mathrm{flat},\Omega} \to 0$,  where $v_\e$ is as in~\eqref{def:from u to v}, and
\begin{equation}\label{est:Gamma-limsup}
\limsup_{\e\to 0}\frac{1}{\e^2|\log\e|}  E_\e(u_\e,\Omega)  \leq 2\sqrt{3}\pi|\mu|(\Omega)\,.
\end{equation}
\begin{step}{1} (The case $\mu=\pm\delta_{x_1}$)
 Let $x_1\in\Omega$ and $\mu=\pm\delta_{x_1}$. It is not restrictive to assume that $x_1=0\in\Omega$ and $\mu=\delta_0$. We define $v_\e\in \SF_\e$ by setting $v_\e(x):=\tfrac{x}{|x|}$ for every $x\in\L_\e\sm\{0\}$, $v_\e(0)\defas e_1$ and we set
\begin{equation}\label{def:ue-recovery}
u_\e(\e i)\defas v_\e(\e i)\, , \quad u_\e(\e j)\defas R[\tfrac{2\pi}{3}] (v_\e(\e j)) \, , \quad u_\e(\e k) := R[-\tfrac{2\pi}{3}](v_\e(\e k))\, ,
\end{equation}
for  $\e i\in\L_\e^1$,  $\e j\in\L_\e^2$, and $\e k\in\L_\e^3$, where $R[\,\cdot\,]$ is as in~\eqref{def:rotation}.  We now estimate~$E_\e(u_\e,\Omega)$ in terms of $XY_\e(v_\e,\Omega)$, then we can conclude using~\eqref{eq:XY on annulus}. To this end, let us fix  $\lambda\in(0,1)$   and let  $\eta \in (0,1)$  be as in Lemma~\ref{lemma:bounds with XY}. 
We observe that for every $T=\conv\{\e i,\e j,\e k\}\in\T_\e(\R^2)$ with $\e i\in\L_\e^1$, $\e j\in\L_\e^2$, and $\e k\in\L_\e^3$ we have
\begin{equation}\label{est:differences}
\frac{2}{\pi} \d_{\S^1}(v_\e(\e i), v_\e(\e j)) \leq |v_\e(\e i)-v_\e(\e j)|=  \Big| \frac{i}{|i|} - \frac{j}{|j|}\Big| \leq \Big| \frac{i}{|i|} - \frac{j}{|i|}\Big| + \Big| \frac{j}{|i|} - \frac{j}{|j|}\Big| \leq  \frac{2\e}{|\e i|}\, .
\end{equation}
Since the same reasoning holds for $\d_{\S^1}(v_\e(\e i),v_\e(\e k))$, we find $K\in\N$ (depending on  $\eta$)  such that 
\begin{equation}\label{cond:geo-distance-delta}
\d_{\S^1}(v_\e(\e i),v_\e(\e j)) <  \min\big\{ \eta , \tfrac{\pi}{2} \big\}  \quad \text{and} \quad \d_{\S^1}(v_\e(\e i),v_\e(\e k))<  \min\big\{ \eta , \tfrac{\pi}{2} \big\}  \, ,
\end{equation}
whenever $T\cap(\R^2\sm B_{K\e})\neq\emptyset$. Thanks to Lemma~\ref{lemma:bounds with XY} this allows us to estimate  $E_\e(u_\e,\Omega)$  via
\begin{equation}\label{est:energysplitting-rec-seq}
   E_\e(u_\e,\Omega)  \leq E_\e(u_\e, B_{(K+2)\e})+ (1+\lambda)  XY_\e(v_\e,\Omega\sm \overline B_{K\e})\leq E_\e(u_\e, B_{(K+2)\e})+ (1+\lambda)  XY_\e(v_\e,\A{K\e}{R})\,,
\end{equation}
where $R>0$ is chosen large enough such that $\Omega\subset \subset B_R$. Moreover, we have
\begin{equation*}
E_\e(u_\e,B_{(K+2)\e})\leq 3\e^2\#\T_\e(B_{(K+2)\e})\leq C(K+2)^2\e^2 . 
\end{equation*}
Thus, from~\eqref{est:energysplitting-rec-seq} together with Remark~\ref{lemma:XY of x/|x|} we infer
\begin{equation}\label{est:energy-rec-seq}
    \frac{1}{\e^2|\log\e|}  E_\e(u_\e,\Omega) \leq \frac{1}{\e^2|\log\e|} E_\e(u_\e,B_R)  \leq\frac{ C(K+2)^2 }{|\log\e|}+  (1+\lambda)  2\sqrt{3}\pi\frac{1}{|\log\e|}\log\frac{R}{ K\e }\,,
\end{equation}
from which we deduce~\eqref{est:Gamma-limsup} by letting $\e\to 0$ and then  $\lambda\to 0$.  To conclude the proof it thus remains to show that   $\|\mu_{v_\e}-\delta_0\|_{\mathrm{flat},\Omega} \to 0$. First of all, due to Theorem~\ref{thm:main}-{\em i)}, we have that there exists $\mu = \sum_{h=1}^N d_h \delta_{x_h}$ with $d_h \in \mathbb{Z}$ and $x_h \in \Omega$ such that, up to a subsequence  $\|\mu_{v_\e}-\mu\|_{\mathrm{flat},\Omega'}\to 0$  for all $\Omega^\prime \subset\subset \Omega$. Note that, thanks to~\eqref{cond:geo-distance-delta}, we have $\mu_{v_\e}=0$ on $\R^2\sm B_{K\e}$, which in turn implies that  $\|\mu_{v_\e}- d\delta_0\|_{\mathrm{flat},\Omega}\to 0$ for some $d\in\Z$.  We claim that $d=1$. Indeed, let $\overarc v_\e$ be the interpolation defined as in Remark~\ref{rmk:arcwise interpolation}. Note that $\overarc v_\e \in W^{1,\infty}(A_{1,2};\S^1)$, since $\mu_{v_\e}=0$ on $\R^2\sm B_{K\e}$. Let~$\zeta\colon[0,3]\to\R$ be the piecewise affine function satisfying $\zeta\equiv 1$ on $[0,1]$, $\zeta\equiv 0$ on $[2,3]$, and $\zeta$ affine on $[1,2]$ and set $\psi(x)\defas \zeta(|x|)$. Then,
\begin{equation*}
    \langle d\delta_0,\psi\rangle =\lim_{\e\to 0}\langle\mu_{v_\e},\psi\rangle=\frac{1}{\pi}\lim_{\e\to 0}\langle J(\overarc{v}_\e),\psi\rangle=-\frac{1}{\pi}\lim_{\e\to 0}\int_{\A{1}{2}} \!\!\! j(\overarc{v}_\e)\cdot\nabla^\perp\psi\dx=-\frac{1}{\pi}\int_{\A{1}{2}} \!\!\! j\big(\tfrac{x}{|x|}\big)\cdot\nabla^\perp\psi\dx\,,
    \end{equation*}
    where in the last step we used that $\overarc{v}_\e\wto\frac{x}{|x|}$ weakly in $H^{1}(\A{1}{2};\R^2)$. Moreover, $\nabla^\perp\psi(x)=-\frac{\, x^\perp}{|x|}$ on $\A{1}{2}$, thus a direct computation shows that
    \begin{equation*}
    \langle d\delta_0,\psi\rangle=\frac{1}{2\pi}\int_{\A{1}{2}}\frac{1}{|x|}\dx=1\,,
    \end{equation*}
    consequently $d=1$ and the whole sequence converges. 
\end{step}

\begin{step}{2} (The case $\mu=\sum_{h=1}^N\pm\delta_{x_h}$)
We first construct a recovery sequence when $\mu=\delta_{x_1}+\delta_{x_2}$ with $x_1,x_2\in\Omega$  and $x_1 \neq x_2$.  To simplify the exposition and the notation we assume that $x_1=0$ and we set $\overline{x}\defas x_2$. Then, to define a recovery sequence $u_\e$ for $\mu=\delta_{0}+\delta_{\overline{x}}$, we choose $\overline{x}_\e\in\L_\e\cap B_{2\e}(\overline{x})$ and we set $w_{\e}(x)\defas \frac{x}{|x|}$ for $x\in\L_\e\sm\{0\}$, $\overline{w}_{\e}(x)\defas\frac{x-\overline{x}_\e}{|x-\overline{x}_\e|}$ for $x\in\L_\e\sm\{\overline{x}_\e\}$ and $w_{\e}(0)=\overline{w}_{\e}(\overline{x}_\e)\defas e_1$. Eventually, we define $v_\e\in\SF_\e$ by setting $v_\e(x)\defas w_{\e}(x)\odot \overline{w}_{\e}(x)$ for every $x\in\L_\e$, where $\odot$ denotes the complex product, and we define $u_\e$ according to~\eqref{def:ue-recovery}.
Suppose now that $T=\conv\{\e i,\e j,\e k\}$ with $\e i\in\L_\e^1$, $\e j\in\L_\e^2$, and $\e k\in\L_\e^3$. Then
\begin{equation}\label{est:complex-product}
\begin{split}
|v_\e(\e i)-v_\e(\e j)| &=\Big|\Big(\frac{\e i}{|\e i|}-\frac{\e j}{|\e j|}\Big)\odot\frac{\e i-\overline{x}_\e}{|\e i-\overline{x}_\e|}+\frac{\e j}{|\e j|}\odot\Big(\frac{\e i-\overline{x}_\e}{|\e i-\overline{x}_\e|}-\frac{\e j-\overline{x}_\e}{|\e j-\overline{x}_\e|}\Big)\Big|\\
&\leq|w_{\e}(\e i)-w_{\e}(\e j)|+|\overline{w}_{\e}(\e i)-\overline{w}_{\e}(\e j)|\, .
\end{split}
\end{equation}
Taking  and expanding  the square in~\eqref{est:complex-product} yields
\begin{align}
       % & |v_\e(\e i)-v_\e(\e j)|^2  \leq |w_{\e}(\e i)-w_{\e}(\e j)|^2+|\overline{w}_{\e}(\e i)-\overline{w}_{\e}(\e j)|^2+2\frac{2\e}{|\e i|}\frac{2\e}{|\e i-\overline{x}_\e|}  \label{est:complex-product-square-1} \\
        & |v_\e(\e i)-v_\e(\e j)|^2  \leq |w_{\e}(\e i)-w_{\e}(\e j)|^2+|\overline{w}_{\e}(\e i)-\overline{w}_{\e}(\e j)|^2+ \frac{2}{\e^2} XY_\e(w_\e,T)^\frac{1}{2} XY_\e(\overline{w}_\e,T)^\frac{1}{2}.  \label{est:complex-product-square-2}
\end{align} 
The same estimates hold true when either $\e i$ or $\e j$ is replaced by $\e k$. Thus, in view of~\eqref{est:complex-product} and~\eqref{est:complex-product-square-2} we can estimate $E_\e(u_\e,\Omega)$ as follows:
Letting  $\lambda\in (0,1)$ and $\eta \in (0,1)$  be  as in Step~1, from~\eqref{est:complex-product} we deduce the existence of $K\in\N$ such that  $\d_{\S^1}(v_\e(\e i),v_\e(\e j))< \min\big\{ \eta , \tfrac{\pi}{2} \big\}$ and $\d_{\S^1}(v_\e(\e i),v_\e(\e k))< \min\big\{ \eta , \tfrac{\pi}{2} \big\}$,  whenever $T \cap (\R^2\sm (B_{K\e}\cup B_{K\e}(\overline{x}_\e))) \neq \emptyset$. Then, thanks to Lemma~\ref{lemma:bounds with XY} and~\eqref{est:complex-product-square-2} we get
\begin{equation*}%\label{est:recovery-step2}
E_\e(u_\e,\Omega)\leq C(K+2)^2\e^2+  (1+\lambda) \Big(XY_\e\big(w_{\e},\Omega\sm \overline B_{K\e})+XY_\e\big(\overline{w}_{\e},\Omega\sm  \overline B_{K\e}(\overline{x}_\e))+  6 I_\e  \Big)\, ,
\end{equation*}
where the remainder $I_\e$ is given by
\begin{equation*}
 I_\e := \sum_{T \in \T_\e(\Omega\sm(\overline B_{K\e}\cup \overline B_{K\e}(\overline{x}_\e))} \hspace{-2em} XY_\e(w_\e,T)^\frac{1}{2} XY_\e(\overline w_\e,T)^\frac{1}{2} \,.  
\end{equation*}

To conclude as in~\eqref{est:energy-rec-seq}, it is enough to show that $I_\e \leq C \e^2$. We split the sum in the definition of~$I_\e$. We fix $r> K\e$ such that $B_{r+2\e} \cap B_{r+2\e}(\overline{x}_\e) = \emptyset$ and $B_{r+2\e} \cup B_{r+2\e}(\overline{x}_\e) \subset \subset \Omega$. We also fix $R>r+2\e$ such that $\Omega \subset \subset B_R \cap B_R(\overline{x}_\e)$. Then, by the Cauchy-Schwarz Inequality and Lemma~\ref{lemma:XY of x/|x|},
\begin{equation} \label{eq:I1}
    \begin{split}
        \sum_{T \in \T_\e(\Omega\sm(\overline B_{r}\cup \overline B_{r}(\overline{x}_\e))} \hspace{-2em} XY_\e(w_\e,T)^\frac{1}{2} XY_\e(\overline w_\e,T)^\frac{1}{2} & \leq \big(XY_\e(w_\e,\A{r}{R})\big)^{\frac{1}{2}} \big(XY_\e(\overline w_\e,\A{r}{R}(\overline x_\e))\big)^{\frac{1}{2}} \\
        & \leq 2 \sqrt{3} \pi |d|^2  \e^2 \log \Big(\frac{R}{r }\Big)  + C \e^2  \leq C\e^2 .
    \end{split}
\end{equation}
Let $T \in \T_\e(\R^2 \sm \overline B_{K\e})$. Estimate~\eqref{est:differences} implies that  $\frac{1}{\e^2}XY_\e(w_\e,T) \leq \frac{12 \e^2}{\dist(T,0)^2}$. In particular, 
\begin{equation*}
    XY_\e(w_\e,T)^\frac{1}{2} \leq \frac{\sqrt{12} \e^2 }{\dist(T,0)} = \fint_T \frac{\sqrt{12} \e^2 }{\dist(T,0)} \, \d x \leq \fint_T \frac{\sqrt{12} \e^2}{|x| - \e}  \, \d x \, .
 \end{equation*}
Moreover, if $T \subset \R^2 \sm \overline B_{r}$, then $XY_\e(w_\e,T)^\frac{1}{2} \leq 2\frac{\sqrt{12}}{r} \e^2$. Analogously, if $T \subset \R^2 \sm \overline B_{r}(\overline{x}_\e)$, then $XY_\e(\overline w_\e,T)^\frac{1}{2} \leq 2\frac{\sqrt{12}}{r}\e^2$. 
From the previous inequalities it follows that
\begin{equation*} %\label{eq:I2}
    \begin{split}
        & \sum_{T \in \T_\e(B_{r+2\e}\sm \overline B_{K\e})} XY_\e(w_\e,T)^\frac{1}{2} XY_\e(\overline w_\e,T)^\frac{1}{2} \leq \sum_{T \in \T_\e(B_{r+2\e}\sm \overline B_{K\e})} \hspace{-2em} XY_\e(w_\e,T)^\frac{1}{2}  C \e^2 \\
        & \quad \quad \leq \sum_{T \in \T_\e(B_{r+2\e}\sm \overline B_{K\e})}  \fint_T \frac{C \e^4}{|x| - \e} \, \d x \leq \int_{B_{r+2\e}\sm \overline B_{K\e})} \frac{C \e^2}{|x| - \e} \, \d x   \leq C \e^2 \, .
    \end{split}
\end{equation*}
Analogously,
\begin{equation} \label{eq:I3}
    \sum_{T \in \T_\e(B_{r+2\e} \sm \overline B_{K\e}(\overline{x}_\e))} XY_\e(w_\e,T)^\frac{1}{2} XY_\e(\overline w_\e,T)^\frac{1}{2}  \leq C \e^2 \, .
\end{equation}
Summing~\eqref{eq:I1}--\eqref{eq:I3} we obtain that $I_\e \leq C \e^2$. 
 
It remains to show that $\|\mu_{v_\e}-\mu\|_{\mathrm{flat},\Omega}\to 0$. By the same reasoning as in Step 1 we first obtain that, up to a subsequence, $\|\mu_{v_\e}-(d\delta_0+\overline{d}\delta_{\overline{x}})\|_{\mathrm{flat},\Omega}\to 0$, where we have used that $\overline{x}_\e\to x$ as $\e\to 0$. We are then left to show that $d=\overline{d}=1$. This will be done by localising the argument in Step 1. Namely, letting $\zeta\colon[0,3]\to\R$ be as in Step 1, we choose $r>0$ sufficiently small such that $B_{3r}\cap B_{3r}(\overline{x})=\emptyset$ and we set $\psi(x)\defas\zeta\big(\frac{|x|}{r}\big)$, $\overline{\psi}(x)\defas\zeta\big(\frac{|x-\overline{x}|}{r}\big)$ for every $x\in\R^2$. We let $\overarc{v}_\e$ denote the interpolation of $v_\e$ as in Remark~\ref{rmk:arcwise interpolation} and we set $v(x)\defas\frac{x}{|x|}\odot\frac{x-\overline{x}}{|x-\overline{x}|}=:w(x)\odot \overline{w}(x)$ for every $x\in\R^2\sm\{0,\overline{x}\}$. Thanks to the choice of $r$ and the fact that $\overline{x}_\e\to \overline{x}$, we have that $\overarc{v}_\e\wto v$ in $H^1(A_{r,2r}\cup A_{r,2r}(\overline{x});\R^2)$. In particular, as in Step 1 we deduce that
\begin{equation*}
\langle d\delta_0,\psi\rangle =\lim_{\e\to 0}\langle\mu_{v_\e},\psi\rangle=\frac{1}{\pi}\lim_{\e\to 0}\langle J(\overarc{v}_\e),\psi\rangle=-\frac{1}{\pi}\lim_{\e\to 0}\int_{\A{r}{2r}} \hspace*{-1em} j(\overarc{v}_\e)\cdot\nabla^\perp\psi\dx=-\frac{1}{\pi}\int_{\A{r}{2r}} \hspace*{-1em} j(v)\cdot\nabla^\perp\psi\dx\,.
\end{equation*} 
Moreover, a direct computation yields $j(v)=j(w)+j(\overline{w})$ and $\nabla^\perp\psi(x)=-\frac{x^\perp}{r|x|}$, hence
\begin{equation}\label{eq:d-delta0}
\langle d\delta_0,\psi\rangle=\frac{1}{2\pi}\int_{\A{r}{2r}}\!\frac{1}{r|x|}\dx-\frac{1}{\pi}\int_{\A{r}{2r}}\hspace*{-1em}j(\overline{w})\cdot\nabla^\perp\psi\dx=1-\frac{1}{\pi}\int_{\A{r}{2r}}\hspace*{-1em}j(\overline{w})\cdot\nabla^\perp\psi\dx\,.
\end{equation}
Eventually, the choice of $r>0$ ensures that $\overline{w}\in H^1(\A{r,2r};\S^1)$ with $\deg(\overline{w},\partial B_\rho)=0$ for every $\rho\in[r,2r]$. Since in addition $\nabla^\perp\psi=-\frac{1}{r}\tau_{\partial_{B_\rho}}$ for every $\rho\in[r,2r]$, applying the coarea formula and~\eqref{eq:degree for H1} yields
\begin{equation*}
-\frac{1}{\pi}\int_{\A{r}{2r}}\hspace*{-1em}j(\overline{w})\cdot\nabla^\perp\psi\dx=\int_{r}^{2r}\frac{1}{ r\pi}\int_{\partial B_\rho}j(\overline{w})|_{\partial B_\rho}\cdot\tau_{\partial B_\rho}\dH\, \d\rho=\int_{r}^{2r}\frac{1}{r}\deg(\overline{w},\de B_\rho)\, \d\rho=0\,.
\end{equation*}
Thus, from~\eqref{eq:d-delta0} we deduce that $d=1$. By repeating the argument in~\eqref{eq:d-delta0} with $\langle d\delta_0,\psi\rangle$ replaced by $\langle \overline{d}\delta_{\overline{x}},\overline{\psi}\rangle$ and exchanging the roles of $w$ and $\overline{w}$ we obtain $\overline{d}=1$, hence $\|\mu_{v_\e}-\mu\|_{\mathrm{flat},\Omega}\to 0$, which concludes the proof of the limsup inequality.

Since the case $\mu=\pm\delta_{x_1}\pm\delta_{x_2}$ can be treated similarly, the case $\mu=\sum_{h=1}^N\pm\delta_{x_h}$ now follows by an iterative construction.
\end{step}

\begin{step}{3} (The general case)
The general case follows from Step 2 via a diagonal argument. More in detail, given $\mu=\sum_{h=1}^Nd_h\delta_{x_h}$ with $d_h\in\Z$ and $x_h\in\Omega$ we approximate $\mu$ with a sequence of measures $\mu_n$ which are admissible for Step 2 as follows: For every $n\in\N$ and every $h\in\{1,\ldots, N\}$ we choose $|d_h|$ points $x_{h,n}^1,\ldots, x_{h,n}^{|d_h|}\in B_{\frac{1}{n}}(x_h)$ and we set
\begin{equation*}
\mu_n\defas\sum_{h=1}^N\sum_{m=1}^{|d_h|}\sign(d_h)\delta_{x_{h,n}^m}\,.
\end{equation*}
By construction, $|\mu_n|(\Omega)=\sum_h|d_h|=|\mu|(\Omega)$. Thus, for every $n\in\N$ there exist $u_{\e,n}\in\SF_\e$ and corresponding spin fields $v_{\e,n}\in\SF_\e$ such that $\|\mu_{v_{\e,n}}-\mu_n\|_{\mathrm{flat},\Omega}\to 0$ as $\e\to 0$ and
\begin{equation*}
\limsup_{\e\to 0}\frac{1}{\e^2|\log\e|}E_\e(u_{\e,n},\Omega)\leq 2\sqrt{3}\pi|\mu_n|(\Omega)=2\sqrt{3}\pi|\mu|(\Omega)\,.
\end{equation*}
Thus, since $\|\mu_n-\mu\|_{\mathrm{flat},\Omega}\to 0$ as $n\to+\infty$, a diagonal argument provides us with a sequence $(u_{\e,n(\e)})$ such that $\|\mu_{v_{\e,n(\e)}}-\mu\|_{\mathrm{flat},\Omega}\to 0$ as $\e\to 0$ and~\eqref{est:Gamma-limsup} holds true. This concludes the proof in the general case.
\end{step}

\bigskip

\noindent {\bf Acknowledgments.}  The work of A.\ Bach and M.\ Cicalese was supported by the DFG Collaborative Research Center TRR 109, ``Discretization in Geometry and Dynamics''. G.\ Orlando has received funding from the European Union's Horizon 2020 research and innovation programme under the Marie Sk\l odowska-Curie grant agreement No 792583. The work of L.\ Kreutz was funded by the Deutsche Forschungsgemeinschaft (DFG, German Research Foundation) under Germany's Excellence Strategy EXC 2044 -390685587, Mathematics M\"unster: Dynamics--Geometry--Structure. 
\bigskip

\end{document}